\newtheorem{theorem}{Theorem}
\newtheorem{corollary}{Corollary}
\newcommand{\hbeta}{\boldsymbol{\hat{\beta}}}
\newcommand{\hbetaz}{\hat{\beta_0}}
\newcommand{\alphaa}{\tau}
\newcommand{\halphaa}{\hat{\alphaa}}
\newcommand{\halpha}{\hat{\alphaa}}
\newcommand{\X}{\boldsymbol{\mathrm{X}}}
\newcommand{\x}{\boldsymbol{\mathrm{x}}}
\newcommand{\z}{\boldsymbol{\mathrm{z}}}
\newcommand{\Z}{\boldsymbol{\mathrm{Z}}}
\newcommand{\A}{\boldsymbol{\mathrm{A}}}
\newcommand{\B}{\boldsymbol{\mathrm{B}}}
\newcommand{\DD}{\boldsymbol{\mathrm{D}}}
\newcommand{\y}{\boldsymbol{\mathrm{y}}}
\newcommand{\one}{\boldsymbol{1}}
\newcommand{\tr}{\boldsymbol{\mathrm{w}}}
\newcommand{\trn}{w_n}
\newcommand{\bbeta}{\boldsymbol{\beta}}
\newcommand{\ggamma}{\boldsymbol{\gamma}}
\newcommand{\hgamma}{\boldsymbol{\hat{\gamma}}}
\newcommand{\eepsilon}{\boldsymbol{\epsilon}}
\newcommand{\cov}{\mathrm{cov}}
\newcommand{\corr}{\mathrm{corr}}
\newcommand{\I}{\boldsymbol{\mathrm{I}}}
\newcommand{\E}{\mathrm{E}}
\newcommand{\zero}{\boldsymbol{0}}
\newcommand{\uu}{\boldsymbol{u}}
\newcommand{\UU}{\boldsymbol{\mathrm{U}}}
\newcommand{\vv}{\boldsymbol{v}}
\newcommand{\xtx}{\X^{i,t} \X^i}
\newcommand{\xtxc}{\Z^{i,t} \, \Z^i}
\newcommand{\xtxi}{(\X^{i,t} \X^i)^{-1}}
\newcommand{\pp}{\boldsymbol{p}}
\newcommand{\qq}{\boldsymbol{q}}
\newcommand{\RR}{\boldsymbol{\mathrm{R}}}
\newcommand{\rrho}{\boldsymbol{\rho}}
\newcommand{\PPhi}{\boldsymbol{\Phi}}
\newcommand{\g}{\boldsymbol{\mathrm{g}}}
\newcommand{\mmu}{\boldsymbol{\mu}}
\newcommand{\SSigma}{\boldsymbol{\Sigma}}
\newcommand{\PPsi}{\boldsymbol{\Psi}}
\newcommand{\bias}{\delta}
\newcommand{\variance}{\sigma^2}
\newcommand{\dd}{\boldsymbol{\mathrm{d}}}
\newcommand{\wcm}{\boldsymbol{\Lambda}}
\newcommand{\EE}[1]{\mathrm{E}[\, #1 \,]}
\title[Matching and linear regression]{
Combining matching and linear regression: Introducing a mathematical framework and software for simulations, diagnostics and calibration
}
\author[Mahani {\it et al.}]{Alireza S. Mahani}
\address{Scientific Computing Group, Sentrana Inc,
	Washington, DC,
	USA.}
\email{alireza.mahani@sentrana.com}
\author[Mahani {\it et al.}]{Mansour T. A. Sharabiani}
\address{National Heart and Lung Institute, Imperial College London, UK.}
\begin{document}

\begin{abstract}
Combining matching and regression for causal inference provides double-robustness in removing treatment effect estimation bias due to confounding variables. In most real-world applications, however, treatment and control populations are not large enough for matching to achieve perfect or near-perfect balance on all confounding variables and their nonlinear/interaction functions, leading to trade-offs. [this fact is independent of regression, so a bit disjointed from first sentence.] Furthermore, variance is as important of a contributor as bias towards total error in small samples, and must therefore be factored into the methodological decisions. In this paper, we develop a mathematical framework for quantifying the combined impact of matching and linear regression on bias and variance of treatment effect estimation. The framework includes expressions for bias and variance in a misspecified linear regression, theorems regarding impact of matching on bias and variance, and a constrained bias estimation approach for quantifying misspecification bias and combining it with variance to arrive at total error. Methodological decisions can thus be based on minimization of this total error, given the practitioner's assumption/belief about an intuitive parameter, which we call `omitted R-squared'. The proposed methodology excludes the outcome variable from analysis, thereby avoiding overfit creep and making it suitable for observational study designs. All core functions for bias and variance calculation, as well as diagnostic tools for bias-variance trade-off analysis, matching calibration, and power analysis are made available to researchers and practitioners through an open-source R library, \textit{MatchLinReg}.
\end{abstract}

\keywords{causal inference, observational studies, linear regression, propensity score matching, Mahalanobis matching, bias, variance}


\section{Introduction}\label{section-introduction}

Despite a rich body of literature on causal inference in observation studies, some important questions remain unanswered. Motivated by the desire to produce unbiased estimation of treatment effect (TE) in the presence of confounding variables, theoretical work in this field has largely focused on large-sample, bias-removal properties of techniques such as matching and regression~\citep{abadie2006large,abadie2011bias}. In applied settings, however, sample sizes are finite and often small (including control groups), asymptotic properties are not realized, and variance is as important as bias in contributing towards total estimation error. This is particularly true in medical applications such as studying the effectiveness of a novel in-patient heart procedure~\citep{kereiakes2000abciximab}, where not only randomized treatment assignment is unfeasible due to ethical or logistical concerns, but data collection is costly and time-consuming, and thus early identification of treatment effectiveness is highly valuable.

In particular, the general recognition by researchers of the added benefit of combining matching and regression to achieve the so-called `double-robustness'~\citep{rubin1973use,carpenter1977matching,rubin1979using,robins1995semiparametric,heckman1997matching,rubin2000combining,glazerman2003nonexperimental,abadie2006large,ho2007matching,abadie2011bias} has yet to translate into a set of guidelines and tools that allow for their effective utilization by practitioners. For example, what caliper size should be used in matching? What is the impact of the choice of matching technique, e.g. Mahalanobis distance matching vs. propensity score matching (PSM), on TE estimation? What terms should be included in matching (e.g. in a logistic regression model used for generating propensity scores)? How do such decisions affect not only TE estimation bias but also variance? At what point is study power diminished beyond usefulness when matching is applied more and more restrictively to reduce bias? Simulation studies have shed light on some of these questions empirically~\citep{austin2007comparison,austin2009some,gayat2012propensity}, and providing a theoretical foundation can facilitate generalizable conclusions from such empirical work~\citep{imbens2004nonparametric}.

In this paper, we develop the mathematical framework for quantifying the combined effect of matching and linear regression on TE estimation bias and variance in finite samples with arbitrary covariate distributions. Similar to the approach of \cite{ho2007matching}, we consider matching as a non-parametric pre-processing step prior to regression adjustment. We cast the model misspecification problem as a covariate omission problem, and derive closed-form expressions for TE estimation bias and variance in linear regression adjustment. Our expressions for bias and variance are useful in three capacities:
\begin{enumerate}
\item Theory: Using the normalized form of bias expression, we prove that perfect matching on included and omitted covariates eliminates bias. Furthermore, perfect matching on included covariates minimizes variance, given a fixed number of treatment and control observations. The variance-minimization property of matching provides theoretical support for the notion that study power is not lost as rapidly when data loss is due to matching, compared to data loss that is random.
\item Simulations: The closed-form expressions for bias and variance can be efficient and accurate substitutes for Monte Carlo simulations for studying matching and linear regression. We have used these expressions for most of the simulations presented in this paper, and expect that future empirical studies in the field will benefit from this efficient simulation tool.
\item Diagnostic and calibration software: Our expressions can be utilized in a diagnostic capacity to analyze bias-variance trade-off resulting from various choices of matching and regression parameters, and to quantify the impact of matching on study power. By utilizing such diagnostic tools - embedded in the open-source R package MatchLinReg~\citep{mahani2015matchlinreg} - practitioners will be able to make better-informed decisions and the quality of causal inference in observational studies will be enhanced as a result.
\end{enumerate}

The rest of this paper is organized as follows. We continue by reviewing previous work on combining matching and regression for causal inference, introduce data sets and software used in the paper, and present simulation results using real data sets to highlight the benefits and complexities associated with combining matching and regression, which motivated our research. In Section~\ref{section-framework}, we present our mathematical framework. We outline the scope and philosophy of the paper, formally define the problem, derive equations for TE bias and variance estimation using linear regression, and offer theorems regarding the impact of matching on bias and variance. In Section~\ref{section-tools}, we use the developed framework to create diagnostic and calibration tools. We begin with exploratory diagnostics such as relative squared bias reduction due to single omitted covariates, and comparison of normalized single-covariate biases, and proceed to generate an aggregate measure of normalized bias using constrained bias estimation methodology. We then combine normalized bias and variance to arrive at normalized MSE, using a conversion factor called `omitted R-squared'. Minimizing MSE allows us to select the matching parameters, a process which we refer to as calibration. We close this section by reviewing the open-source R package, MatchLinReg, that contains the core framework as well as the diagnostic and calibration tools. Finally, in Section~\ref{section-discussion} we provide a summary of contributions in this paper, and outline pointers for promising future research directions.

A list of mathematical symbols used throughout the paper can be found in Appendix~\ref{appendix-math-symbols}. Frequently-used acronyms are listed in Appendix~\ref{appendix-acronyms}. Details of mathematical derivations and theorem proofs are provided in the remaining appendices.

\subsection{Previous work}\label{subsection-litrev}
There is a rich body of literature on causal inference for observational studies, and some have focused - partially or wholly - on combined use of matching and regression adjustment. \cite{rubin1973use} derive expressions for bias reduction for a single confounder, considering parallel and non-parallel response surfaces. \cite{carpenter1977matching} analyze the performance of nearest-neighbor matching, and its combination with regression adjustment. While their primary focus is on normally-distributed covariates, they present an Analysis-of-covariance (ANCOVA) equation for variance, which is equivalent to our standard expression (Eq.~\ref{eq-variance-standard}). \cite{rubin1979using} study the effect of various approaches for combining matching and regression adjustment for controlling bias, and conclude that covariate matching combined with regression on paired differences performs best. \cite{rubin1992affinely,rubin1996matching} study bias reduction resulting from propensity score matching for normally-distributed covariates. Their expressions rely on correlation between covariates and outcome. [not all of them; edit this sentence] \cite{rubin2000combining} study the bias-removal impact of combining propensity score matching with regression adjustment and conclude that the combination has superior performance than regression alone. \cite{ho2007matching} use real data sets to illustrate that using matching as a non-parametric pre-processing step reduces model dependence in parametric  causal inference. While cautioning that increased variance from data loss can outweigh reduced bias in matching, they also point out that `no precise rule exists for how to make these [bias-variance tradeoff] choices.' \cite{schafer2008average} use simulated data based on Add Health study~\citep{udry1998national} to compare the performance of various combinations of ANCOVA and propensity score matching techniques. \cite{hosman2010sensitivity} present methods for sensitivity analysis of regression, potentially combined with propensity-score-based stratification, to confounder omission. \cite{abadie2011bias} prove that adding a bias-correcting step such as least-squares regression to nearest-neighbor matching renders it $N^{1/2}$-consistent, but their analysis is asymptotic in nature, and is focused on matching with replacement which is typical of econometric literature.


\subsection{Setup}\label{subsection-setup}

We use the statistical software R~\citep{R2015language} for all simulations presented in this paper. Two observational data sets form the basis of our simulations: lalonde~\citep{lalonde1986evaluating} and lindner~\citep{kereiakes2000abciximab}. Both data sets are available in the R package twang~\citep{ridgeway2015twang}. The lalonde data set was collected to study the effectiveness of job training programs on future earnings of 614 participants (185 treatment, 429 control), with 8 adjustment covariates (4 numeric, 4 binary). The lindner data set describes an observational study of the impact of Percutaneous Coronary Intervention (PCI) on cardiac-related costs for 996 patients (698 treatment, 298 control), given 7 covariates (3 numeric, 4 binary).

Throughout the examples, we use PSM for matching (without replacement), using the R package Matching~\citep{matching2011package}. Furthermore, as a concrete illustration of a calibration problem, we focus on identifying the caliper size used in matching (on propensity scores). However, our framework can be applied equally well to other matching techniques - such as Mahalanobis distance matching - as well as calibration problems, such as determining what terms to include in propensity score model, whether to use matching with or without replacement, etc.

All simulation results presented in Section~\ref{section-tools} utilize the closed-form expressions developed in Section~\ref{section-framework}. All the core functions, as well as the diagnostic and calibration methods described in this paper are all available as part of an open-source R package, MatchLinReg~\citep{mahani2015matchlinreg}. See Section~\ref{subsection-software} for an overview of MatchLinReg.

\subsection{Complexities of TE estimation: Looking beyond double-robustness}\label{subsection-beyond-double}
A common approach for combining matching and regression in statistical analysis is to use matching as a non-parametric pre-processing step before regression~\citep{ho2007matching}. This approach allows practitioners to apply the familiar regression tools to their problem once matching is done, although it has been argued that issues such as calculation of standard errors must take into consideration the matching step~\citep{austin2007propensity}. In linear regression with a treatment indicator binary variable and other (adjustment) covariates included, the coefficient of treatment indicator equals the treatment effect. Including adjustment covariates in regression helps remove residual bias due to imperfect matching. A special case is when no adjustment covariates are used in regression; we call this method the `simple difference' method, since it is equivalent to calculating the mean outcome in treatment and control groups and subtracting them to obtain TE. As an alternative view, matching helps reduce sensitivity of regression adjustment to model misspecification (Section~\ref{subsection-matching-impact}). This mutually-reinforcing effect of combining matching and regression for TE estimation has been referred to as `double-robustness' in the literature~\citep{stuart2010matching}.

Figure~\ref{fig-double-robustness} illustrates the combined effect of matching and regression for TE estimation, using Monte Carlo simulations based on lalonde and lindner data sets. For each data set, noise variance as well as coefficients of original variables were extracted from linear regression on full data sets, and a quadratic term ($re74^2$ for lalonde and $ejecfrac^2$ for lindner) was added to the generative model, with its coefficient chosen so as to achieve an omitted R-squared ($R_o^2$) of 5.9\% and 3.3\% for lalonde and lindner, respectively (40\% of $R^2$ from regression on main effects.). $R_o^2$ quantifies the percent of variation in outcome attributable to covariates that are not included in the regression model. See Section~\ref{subsection-quantify-bias} for mathematical definition. Regressions used in Monte Carlo simulations did not include this quadratic term, in order to simulate regression misspecification through covariate omission (Section~\ref{subsection-problem-definition}). As expected, matching and regression each significantly reduce bias (compared to simple difference method) in both data sets, and combining them further reduces bias.

\begin{figure}
\centering
\begin{tabular}{cc}
\subfloat{\includegraphics[height=3.in]{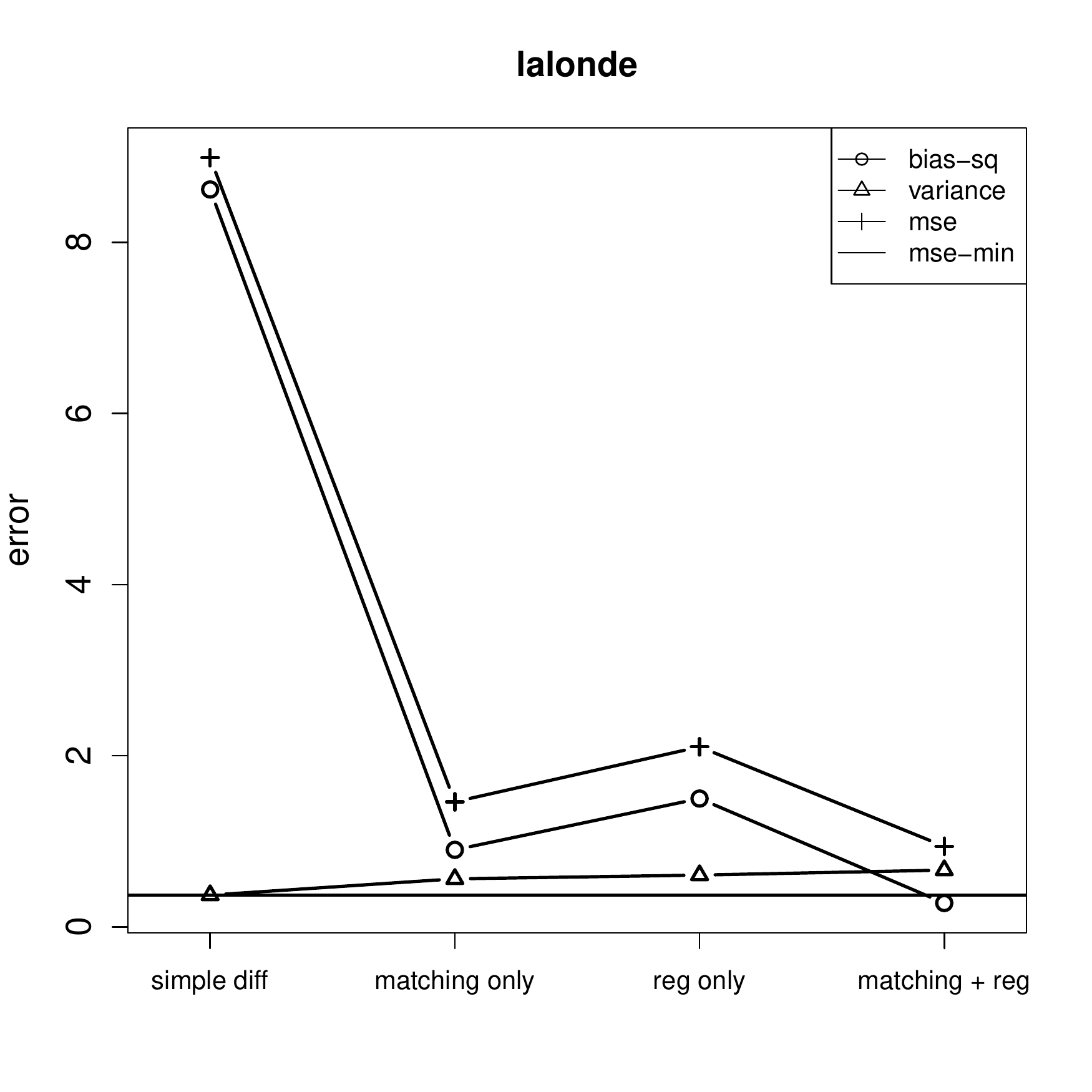}} & 
\subfloat{\includegraphics[height=3.in]{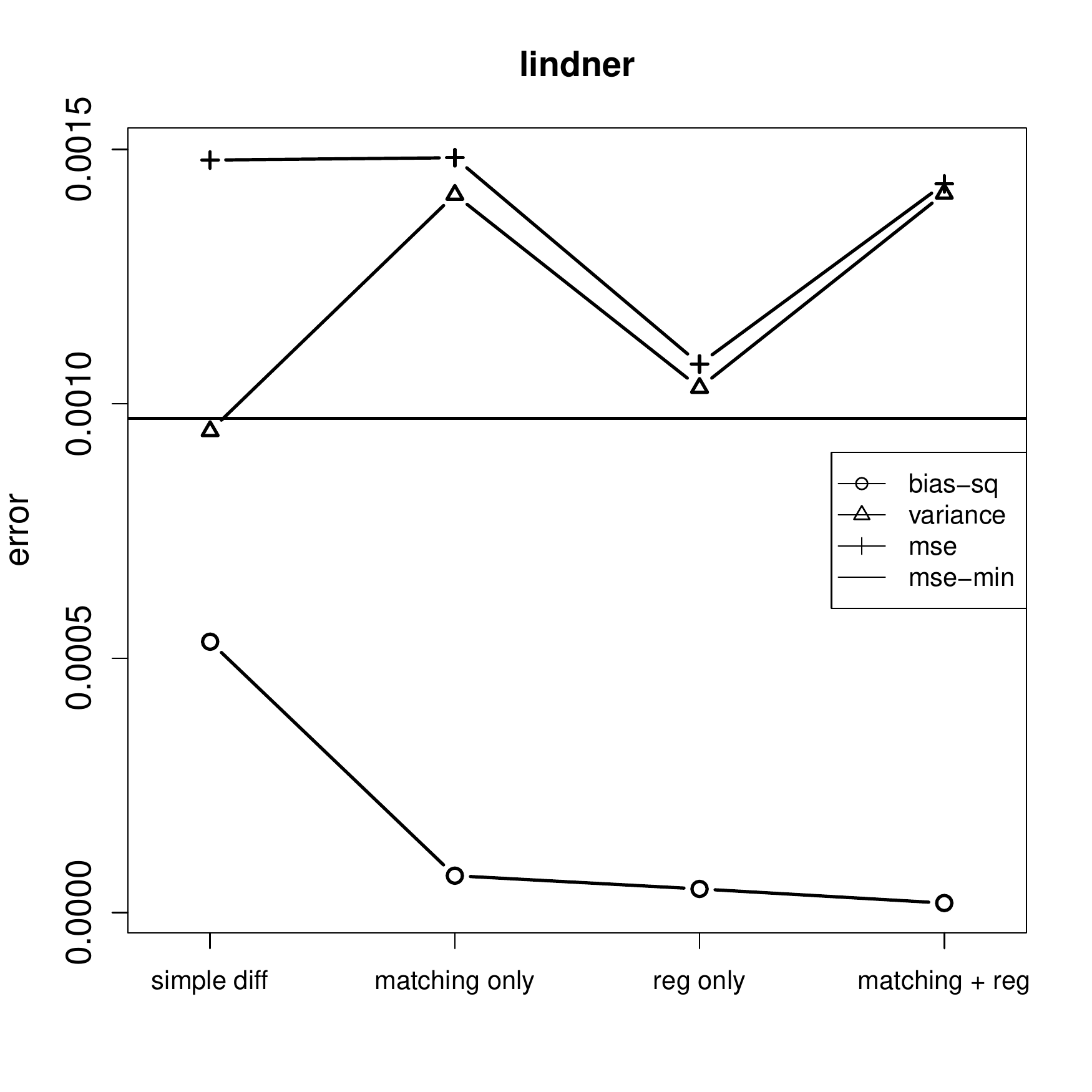}} \\ 
\end{tabular}
\caption{Combining matching and regression for TE estimation in lalonde (left) and lindner (right) data sets. For each data set, error components (squared bias, variance, MSE) are compared using four approaches: 1) simple difference, 2) matching followed by simple difference, 2) regression adjustment, and 3) matching followed by regression adjustment. Horizontal line indicates minimum achievable MSE. Matching was done on propensity scores (on linear scale) resulting from logistic regression of treatment indicator on included covariates and (omitted) quadratic term, using a caliper of 1.5 (lalonde) and 0.2 (lindner). R package `Matching' was used for matching. Numbers are based on 10,000 Monte Carlo simulation. For lalonde, outcome is \textit{re78} divided by 1000, while for lindner, outcome is the logarithm of \textit{cardbill}. Omitted covariate is $re74^2$ for lalonde and $ejecfrac^2$ for lindner, with coefficient of this omitted covariate chosen so as to achieve an $R_o^2$ of 5.9\% and 3.3\% for lalonde and lindner, respectively.}
\label{fig-double-robustness}
\end{figure}

However, mean squared error (MSE) is the sum of squared bias and variance:
\begin{equation}
MSE = (bias)^2 + variance
\end{equation}
While matching reduces TE bias in regression adjustment caused by covariate omission, yet it does so by discarding data points and thus reducing data size, leading to increased variance. As Figure~\ref{fig-double-robustness} shows, while in lalonde data set the bias-reduction effect of adding matching to regression more than offsets the increased variance such that MSE is reduced, the net effect for lindner is adverse, such that regression alone is the best option. This represent one of the key challenges in calibrating the matching-plus-regression combination, i.e. striking the right balance between bias reduction and variance increase so as to minimize MSE for TE. Also, it must be noted that despite the adverse effect of matching on variance, matching discards data quite efficiently and therefore increases variance slower than randomly discarding data (Theorem~\ref{theorem-variance-no-replacement}).

Variance considerations are not the only source of complexity. Matching affects bias indirectly and only by adjusting the distribution of covariates across treatment and control groups. In determining the ultimate impact of matching on TE bias, we face several complexities: 1) Bias induced by an omitted covariate is not a monotonic function of its imbalance (Figure~\ref{fig-bias-complexity}, top row); 2) matching does not improve balance for all covariates, although it tends to do so for the most imbalanced covariates (Figure~\ref{fig-bias-complexity}, middle row); 3) as a result of 1 and 2, matching does not improve bias due to all omitted covariates, but again it tends to reduce the largest biases (Figure~\ref{fig-bias-complexity}, bottom row); 4) Finally, in generating Figure~\ref{fig-bias-complexity} we assumed only one omitted covariate at a time, but in reality multiple such covariates can be missing from our regression model. What combination of these unknown terms, and their impact on TE bias, should be considered in how we combine regression and matching? The mathematical framework and diagnostics software developed in this paper seek to address the above challenges.

Note that Equal-Percent-Bias-Reduction (EPBR) property of matching techniques~\citep{rubin1976multivariate} such as PSM does not solve the complexities associated with TE bias: 1) EPBR is valid only when covariate distributions follow restrictions such as being multivariate normal, 2) EPBR only works for linear combinations of covariates, and not their nonlinear functions such as interactions and powers, which are our main candidates for omitted covariates, thanks to `ignorability of treatment assignment' assumption (Section~\ref{subsection-philosophy}).

\begin{figure}
\centering
\begin{tabular}{cc}
\subfloat{\includegraphics[height=2.5in]{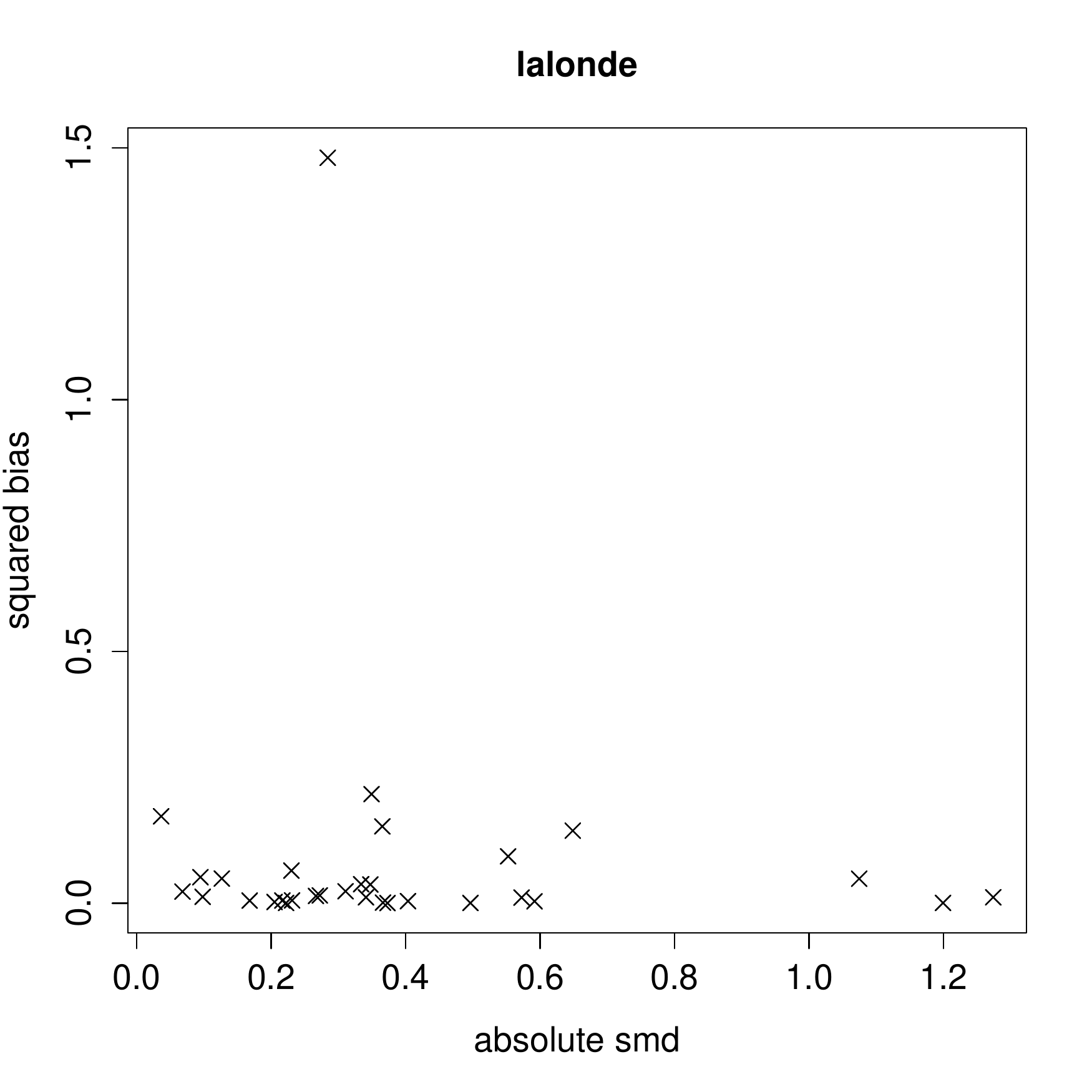}} & 
\subfloat{\includegraphics[height=2.5in]{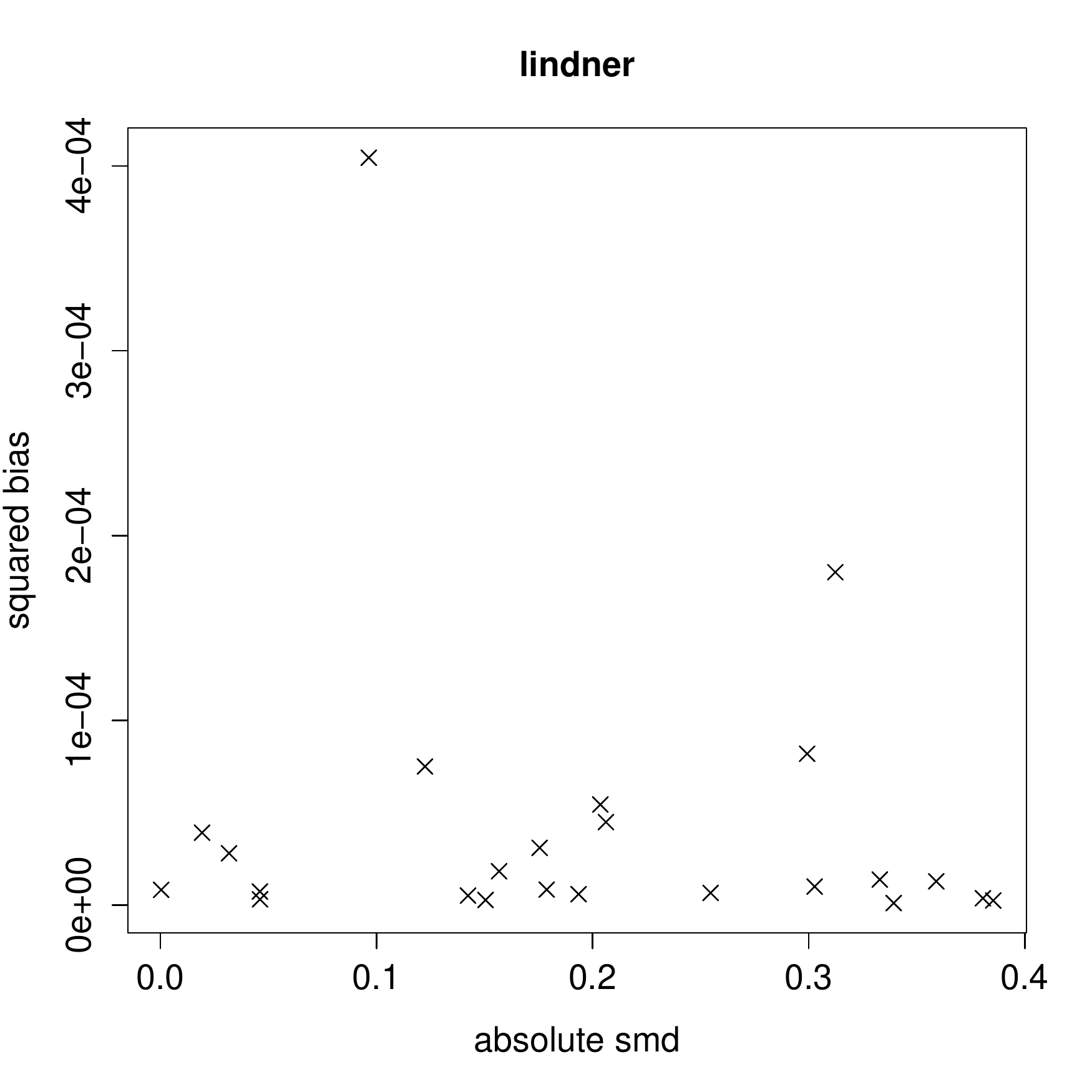}} \\ 
\subfloat{\includegraphics[height=2.5in]{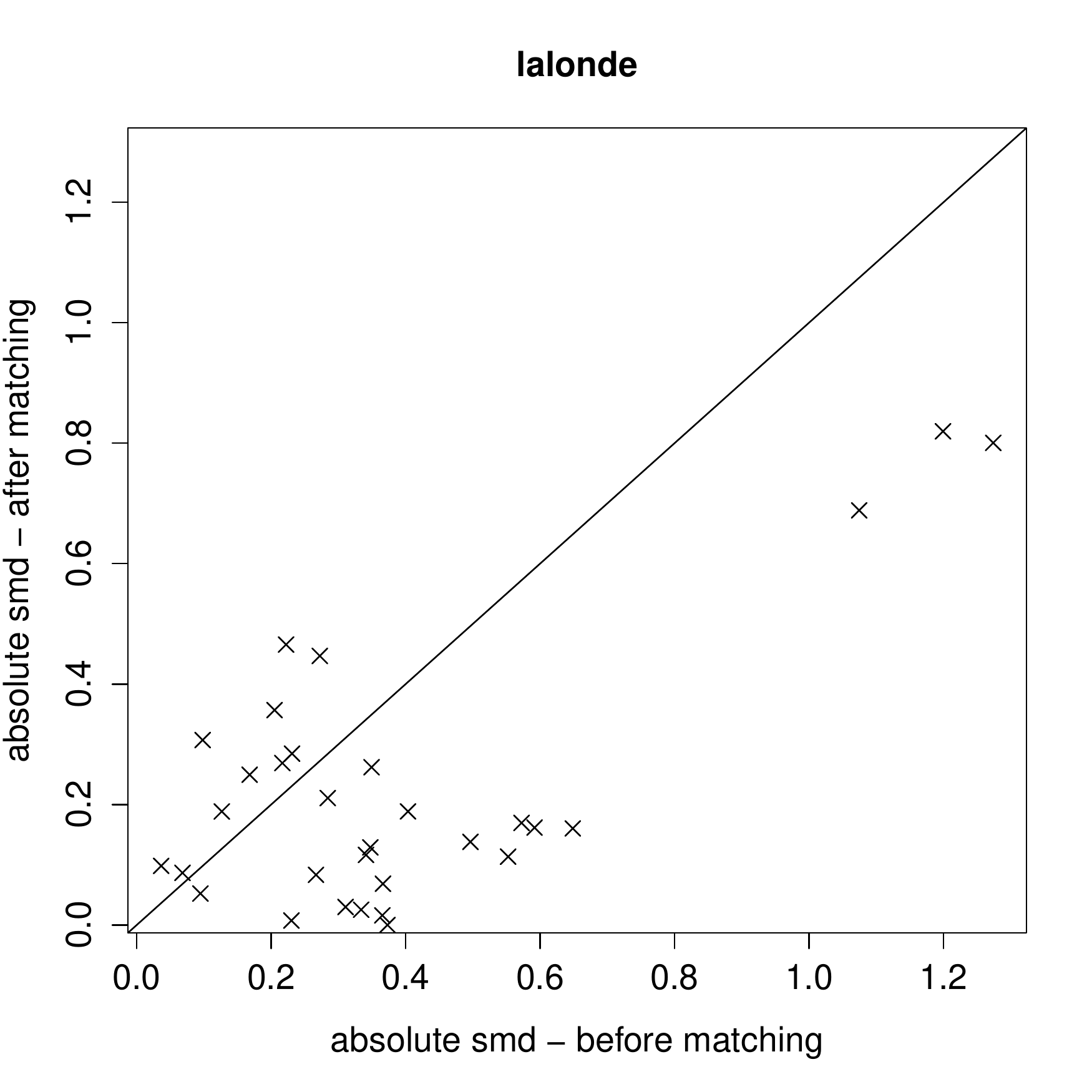}} & 
\subfloat{\includegraphics[height=2.5in]{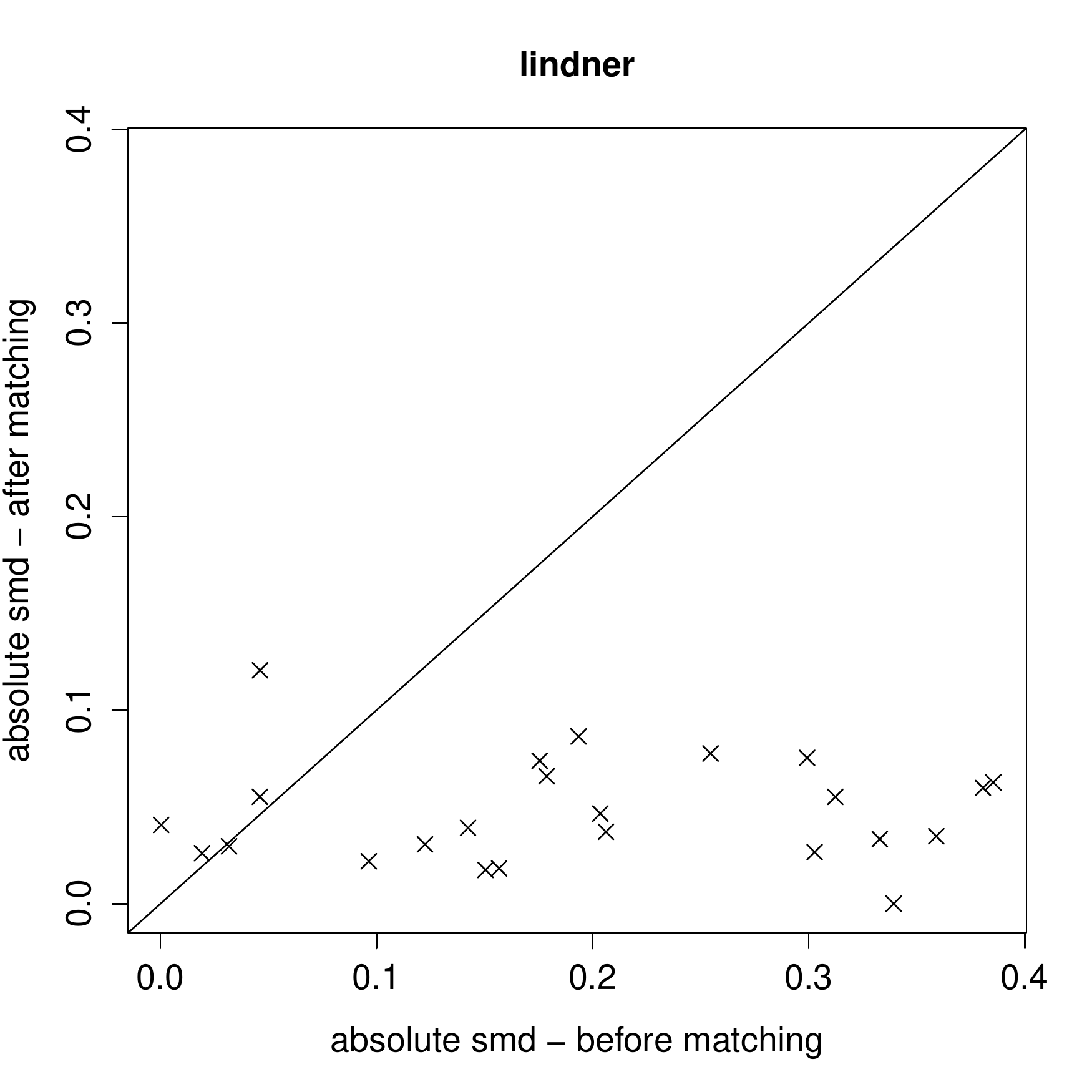}} \\
\subfloat{\includegraphics[height=2.5in]{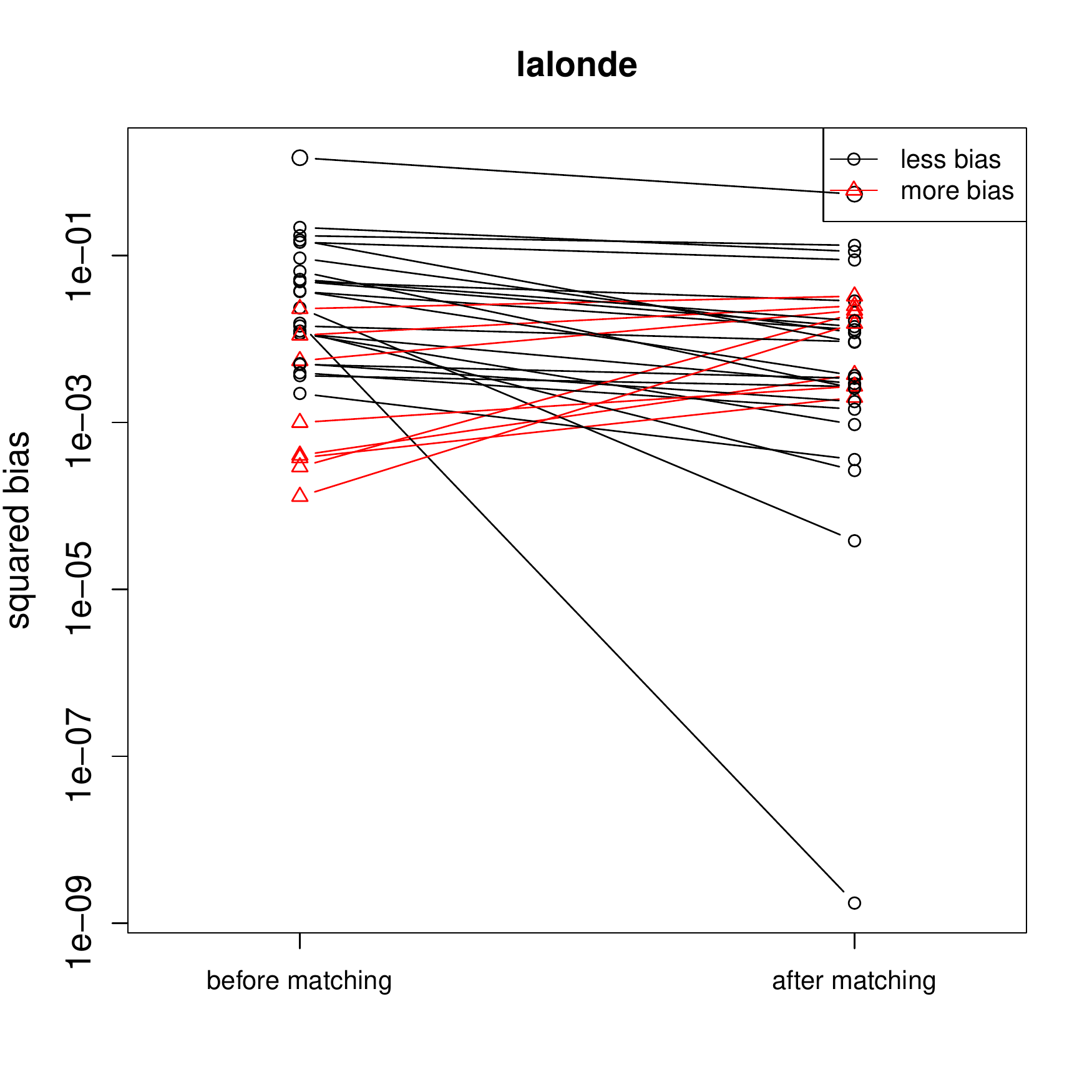}} &
\subfloat{\includegraphics[height=2.5in]{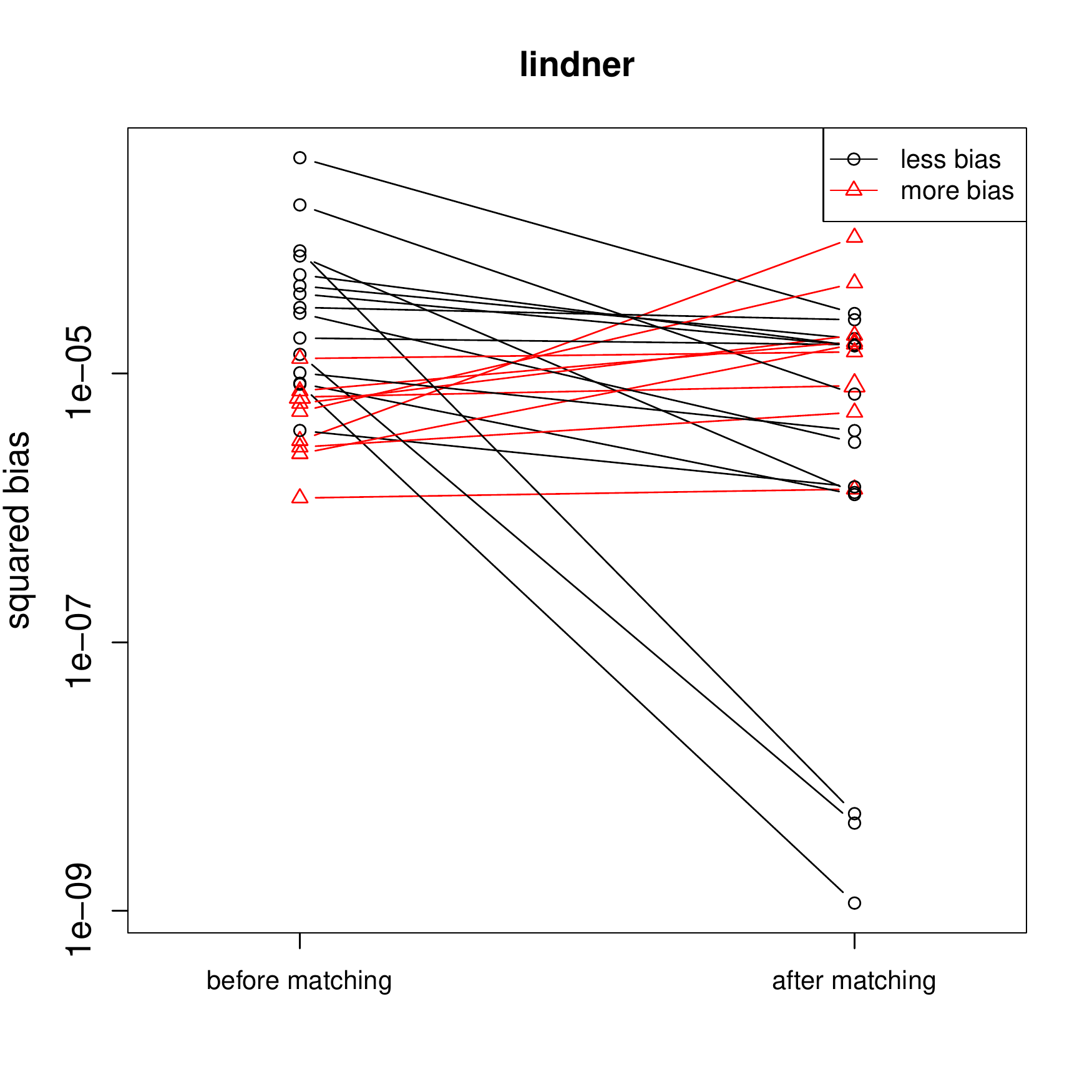}} 
\end{tabular}
\caption{Complex relationship between matching, covariate imbalance, and omission bias for lalonde (left) and lindner (right) data sets. Selection of coefficients for omitted covariates, and regression and matching methods and parameters are identical to Figure~\ref{fig-double-robustness}. Equations of Section~\ref{subsection-bias-var-equations} are used for bias calculations. Top row: squared bias vs. absolute SMD for all second-order interaction terms (missing from regression model). Middle row: Comparison of SMD before and after matching for omitted terms. Bottom row: Comparison of square bias, before and after matching, for omitted terms.}
\label{fig-bias-complexity}
\end{figure}

\section{Framework}\label{section-framework}
In this section, we develop the mathematical framework for analyzing the combined effect of matching and linear regression for TE estimation. We begin by defining the assumptions and notation used in the paper.

\subsection{Approach and assumptions}\label{subsection-philosophy}

\textit{Ignorability of treatment assignment} Also referred to as `unconfoundedness'~\citep{imbens2004nonparametric} and `conditional independence'~\citep{lechner1999earnings,lechner2002program}, this assumption states that, conditioned on the covariates available in our data set, treatment assignment is random~\citep{rosenbaum1983central}. In the context of a standard linear model (see below), this assumption means that all covariates in the generative model are derivatives (e.g. powers, interaction terms, splines, etc) of a core set of covariates included in the data set. This, in turn, means that perfect matching on the core set of covariates would automatically lead to perfect matching on the full set of covariates in the generative model.

\textit{Matching as pre-processing for regression adjustment}
Similar to \cite{ho2007matching} [more examples], we use matching as a pre-processing step before a parametric estimation of TE using linear regression. Combined with the parallel response surface assumption (see below), linear regression on the full data set produces TE as the coefficient of treatment indicator variable~\citep{imbens2004nonparametric}. Our approach can be considered an extension of two-group ANCOVA~\citep{rubin1973use,carpenter1977matching,quade1982nonparametric} to include the impact of regression misspecification parameterically and through the covariate omission concept. Notable alternatives include regression on matched pair differences~\citep{rubin1979using}, and separate regressions on treatment and control groups~\citep{imbens2004nonparametric}. Using linear regression for parametric TE estimation allows us to derive closed-form expressions for TE bias and variance (Section~\ref{subsection-bias-var-equations}) for finite samples and without any parametric assumptions about the distribution of covariates in the treatment and control groups, or how their corresponding covariate matrices are related. Matching followed by a simple-difference calculation of TE, i.e. without regression adjustment, is a special case of the above framework, but with no adjustment covariates included in the regression model. In this case, the regression model has only two coefficients: intercept and TE.

\textit{Parallel response surfaces}
This assumption states that the difference between functions describing mean outcome for treatment and control groups is independent of the covariate vector. In other words, TE is constant for all observations, conditional and marginal/average TE are the same, and independent of sample. In a linear regression setting, this assumption means the generative model does not contain any interaction terms between treatment indicator and adjustment covariates, and correspondingly we do not include any such interaction terms in the regression model. While an important restriction, the parallel response surface assumption allows us to focus on developing the mathematical framework without concern for what relative weight to attach to error in calculating TE for different strata.

\textit{Covariate omission as model misspecification}
Matching reduces potential TE estimation bias caused by a misspecified regression model. In the linear regression framework - where terms can be nonlinear in original covariates - misspecification manifests itself in the form of covariate omission, i.e. one or more terms included in data generation being absent from the regression model. Importantly, including nuisance terms in regression, i.e. that were not part of the generative model, does not induce bias, but only increases variance.

\textit{Exclusion of outcome variable from analyses}
It is possible to use the outcome variable - implicitly or explicitly - during the matching calibration process. For example, one may focus on achieving better covariate balance for those covariates that show a stronger correlation with outcome. While careful use of outcome variable can enhance the quality of entire modeling process including matching, in the hands of inexperienced practitioners it can be lead to overfit models with poor generalizability in real-world~\citep{babyak2004you}. We have therefore chosen to develop our diagnostic and calibration tools (Section~\ref{section-tools}) with complete exclusion of outcome variable from consideration. In addition to preventing overfit creep, this approach also makes our framework and tools ideal for cases where outcome data is not available yet and matching is used to select subjects for follow up~\citep{stuart2010matching}.


\subsection{Problem definition}\label{subsection-problem-definition}

In this paper, we focus on causal inference for a \textit{Standard Linear Model} (SLM), which consists of two components: a data generation (generative) model and a coefficient estimation (regression) model:
\begin{enumerate}
\item Generative model: Outcome ($\y$) is continuous and its mean is a linear function of covariates ($\X$):
\begin{align}
\EE \y &= \X \bbeta. \label{eq-genmodel-mean}
\end{align}
Data generation has an i.i.d. noise distribution:
\begin{align}
\cov[\y] = \sigma_0^2 \, \I. \label{eq-genmodel-var}
\end{align}
Data-generation covariates consist of a binary treatment indicator ($\tr$), a unit vector ($\one$), as well as a set of adjustment covariates ($\Z$), none of which include interactions with treatment indicator (parallel response surface):
\begin{align}
\X &= \left[ \begin{array}{ccc} \tr & \one & \Z \end{array} \right]. \label{eq-def-X}
\end{align}
$\tr$ is 1 for treatment observations and 0 for controls. Correspondingly, the vector of coefficients ($\bbeta$) consists of TE ($\tau$), intercept ($\beta_0$) and vector of coefficients for adjustment covariates ($\ggamma$):
\begin{align}
\bbeta \equiv \left[ \begin{array}{ccc} \tau & \beta_0 & \ggamma^t \end{array} \right]^t. \label{eq-def-beta}
\end{align}
For SLM, average TE for treated (ATT), sample average TE (ATE) and average TE for controls (ATC) are all the same and equal to the coefficient of treatment indicator variable, i.e. $\tau$. We refer to this entity simply as TE (treatment effect).
\item Regression model: Ordinary-Least-Squares (OLS) is used to estimate model coefficients, including $\tau$. However, only a subset of covariates ($\X^i$) are included in the regression model. This subset includes $\tr$ and $\one$, as well as $K^i$ adjustment covariates ($\Z^i$), which we call `included adjustment covariates' or simply included covariates. The remaining $K^o$ adjustment covariates ($\Z^o$) are called `omitted adjustment covariates', or simply omitted covariates:
\begin{align}
\Z &\equiv \left[\ \begin{array}{cc} \Z^i & \Z^o \end{array} \right], \label{eq-def-Z} \\
\X^i &\equiv \left[ \begin{array}{ccc} \tr & \one & \Z^i \end{array} \right], \label{eq-def-Xi} \\
\X &= \left[\ \begin{array}{cc} \X^i & \Z^o \end{array} \right]. \label{eq-X-Xi-Zo}
\end{align}
The vector of coefficients for adjustment covariates is correspondingly partitioned into included ($\ggamma^i$) and omitted ($\ggamma^o$) vectors:
\begin{align}
\ggamma &= \left[ \begin{array}{cc} \ggamma^{i,t} & \ggamma^{o,t} \end{array} \right]^t, \label{eq-decompose-gamma} \\
\bbeta^i &\equiv \left[ \begin{array}{ccc} \tau & \beta_0 & \ggamma^{i,t} \end{array} \right]^t, \label{eq-decompose-betai} \\
\bbeta &= \left[ \begin{array}{cc} \bbeta^{i,t} & \ggamma^{o,t} \end{array} \right]^t. \label{eq-beta-betai-gammao}
\end{align}
Ordinary-Least-Squares (OLS) estimation of coefficients leads to:
\begin{align} \label{eq-ols}
\hbeta &= \xtxi \, \X^{i,t} \y,
\end{align}
with
\begin{align} \label{eq-def-hbeta}
\hbeta \equiv \left[ \begin{array}{ccc} \hat{\tau} & \hat{\beta_0} & \hat{\ggamma^i}^t \end{array} \right]^t.
\end{align}
Note that omitted covariates ($\ggamma^o$) are not estimated, since they are not included in the regression model.
\end{enumerate}

\subsection{TE Bias and variance equations}\label{subsection-bias-var-equations}

As mentioned earlier, TE estimation bias in regression adjustment is caused by model misspecification, which is manifested as covariate omission in our framework, i.e. when $\Z^o \neq \zero$, or equivalently when $\X \neq \X^i$. It is easy to check that a correctly-specified model is unbiased:
\begin{equation}
\X^i = \X \,\,\, \Longrightarrow \,\,\, \EE \hbeta= (\X^t \X)^{-1} \, \X^t \, \EE \y = (\X^t \X)^{-1} \, \X^t (\X \bbeta) = \bbeta,
\end{equation}
where we have combined Eqs.~\ref{eq-genmodel-mean} and \ref{eq-ols}. With missing covariates, we have:
\begin{equation}\label{eq-bias-vector}
\EE {\hbeta} - \bbeta^i = (\X^{i,t} \X^i)^{-1} \X^{i,t} \left( \X^i \bbeta^i + \Z^o \ggamma^o \right) - \bbeta^i = (\X^{i,t} \X^i)^{-1} (\X^{i,t} \Z^o) \, \ggamma^o.
\end{equation}
As for variance, we have the following standard expression:
\begin{equation}\label{eq-covar-matrix}
\cov[\hbeta] = \sigma_0^2 \, (\X^{i,t} \X^i)^{-1}.
\end{equation}
(In the case of repeated observations, e.g. in matching with replacement, the bias expression remains valid, while the variance expression must be modified. See Appendix~\ref{appendix-var-repeat}.) Given our convention regarding the order of covariates in Eq.~\ref{eq-def-hbeta}, TE bias and variance are the first / top elements of the bias vector / covariance matrix, respectively, given in Eqs.~\ref{eq-bias-vector} and \ref{eq-covar-matrix}.

Next, we rewrite the above expressions in a way that highlights the role of covariate imbalance and the contribution from matching. Since Eqs.\ref{eq-bias-vector} and \ref{eq-covar-matrix} both contain $\xtxi$, we begin by transforming this matrix. From Equation~\ref{eq-def-Xi}, we have:
\begin{eqnarray}
\xtx &=& \left[ \begin{array}{c} \tr^t \\ \one^t \\ \Z^{i,t} \end{array} \right] \left[ \begin{array}{ccc} \tr & \one & \Z^i \end{array} \right] \\
&=& \left[ \begin{array}{ccc} N_t & N_t & \tr^t \Z^i \\ N_t & N_t + N_c & \one^t \Z^i \\ \Z^{i,t} \tr & \Z^{i,t} \one & \Z^{i,t} \Z^i \end{array} \right], \label{eq-xtx}
\end{eqnarray}
where we have taken advantage of $\tr^t \tr = \one^t \tr = \tr^t \one = N_t$ and $\one^t \one = N_t + N_c$. Since $\xtx$ is symmetric, so is its inverse. To calculate TE bias and variance, we are only concerned with the first row/column of $\xtxi$. In Appendix~\ref{appendix-xtxi}, we prove that:
\begin{align}\label{eq-xtxi}
\xtxi = \left( \begin{array}{ccc} \frac{1}{N_t} + \frac{1}{N_c} + \uu^{i,t} \A^{-1} \uu^i & -\frac{1}{N_c} + \frac{1}{N_c} \, (\pp - \qq)^t \A^{-1} \, \uu^i & \uu^{i,t} \A^{-1} \\ -\frac{1}{N_c} + \frac{1}{N_c} \, (\pp - \qq)^t \A^{-1} \, \uu^i & \hdots & \hdots \\  \A^{-1} \, \uu^i & \hdots & \hdots \end{array} \right),
\end{align}
where $\uu^i$ is the vector of mean differences of included covariates, $\pp$ and $\qq$ are vector of sums of included covariates across treatment/all observations, and $\A$ is the pooled, within-group covariance matrix for included covariates:
\begin{align}
\uu^i &\equiv \frac{1}{N_c} \, (\one - \tr)^t \, \Z^i - \frac{1}{N_t} \, \tr^t \, \Z^i, \label{eq-def-ui} \\
\pp &\equiv \Z^{i,t} \tr,\label{eq-def-p} \\
\qq &\equiv \Z^{i,t} \one,\label{eq-def-q} \\
\A &\equiv (N_t - 1) \, \cov(\Z^i)_T + (N_c - 1) \, \cov(\Z^i)_C. \label{eq-def-A}
\end{align}
Within-group covariance matrices are formally defined as follows:
\begin{align}
\cov(\Z^i)_T &\equiv \frac{1}{N_t - 1} \Z_T^{i,t} \, (\I_T - \frac{1}{N_t} \one_T \one_T^t) \, \Z_T^i, \label{eq-cov-T} \\
\cov(\Z^i)_C &\equiv \frac{1}{N_c - 1} \Z_C^{i,t} \, (\I_C - \frac{1}{N_c} \one_C \one_C^t) \, \Z_C^i, \label{eq-cov-C}
\end{align}
where $\Z_T^i$ and $\Z_C^i$ are the row subsets of included adjustment covariate matrices for treatment and control groups, respectively. $\I_T$ and $\I_C$ are identity matrices of dimensions $N_t$ and $N_c$, and $\one_T$ and $\one_C$ are unit vectors of length $N_t$ and $N_c$, respectively.

TE bias is the first element of the bias vector in Eq.~\ref{eq-bias-vector}, which is the result of multiplying the first row of $\xtxi \, \X^{i,t} \Z^o$ by $\ggamma^o$, while TE variance is simply $\sigma_0^2$ multiplied by the top element of $\xtxi$ (Eq.~\ref{eq-covar-matrix}). Using Equation~\ref{eq-xtxi}, we obtain the following expressions for bias and variance:
\begin{subequations}
\begin{align}
\bias =& \bigg\{ \left( \frac{1}{N_t} + \frac{1}{N_c} + \uu^{i,t} \A^{-1} \uu^i \right) \tr^t + \left( -\frac{1}{N_c} + \frac{1}{N_c} \, (\pp - \qq)^t \A^{-1} \, \uu^i \right) \one^t \nonumber \\
&+ \uu^{i,t} \A^{-1} \Z^{i,t} \bigg\} \, \Z^o \, \ggamma^o. \label{eq-bias-standard} \\
\variance =& \sigma_0^2 \left( \frac{1}{N_t} + \frac{1}{N_c} + \uu^{i,t} \A^{-1} \uu^i \right). \label{eq-variance-standard}
\end{align}
\end{subequations}
We refer to Equations~\ref{eq-bias-standard} and \ref{eq-variance-standard} as the standard equations for bias and variance. As shown in Appendix~\ref{appendix-normalized-eqs}, the above expressions can be transformed into normalized forms:
\begin{subequations}
\begin{align}
\bias &= \left[ \frac{N_t N_c}{(N_t + N_c) (N_t + N_c + 1)} \right]^{1/2} \, \left\{ \rrho^{o,t} + \rrho^{i,t} \wcm^{-1} (\rrho^i \rrho^{o,t} - \PPhi^{io}) \right\} ({\SSigma^{o}}^{1/2} \ggamma^o), \label{eq-bias-normal} \\
\variance &= \sigma_0^2 (\frac{1}{N_t} + \frac{1}{N_c}) (1 + \rrho^{i,t} \, \wcm^{-1} \, \rrho^i) = \sigma_{min}^2 (1 + \rrho^{i,t} \, \wcm^{-1} \, \rrho^i) \label{eq-variance-normal}
\end{align}
\end{subequations}
where $\rrho$ is the vector of treatment-covariate correlations, $\PPhi^{io}$ is the cross-correlation matrix between included and omitted covariates, $\wcm$ is the weighted, pooled correlation matrix, $\SSigma^i$ and $\SSigma^o$ are the diagonal matrix of variances for included and omitted covariates respectively, and $\sigma^2_{min}$ is the minimum achievable variance (Theorem~\ref{theorem-variance-no-replacement}). These symbols are mathematically defined as:
\begin{align}
\rho^i &\equiv [ \, \corr(\tr, \z^i_{,k}) \, ]_{k=1,\hdots,K^i} \\
\rho^o &\equiv [ \, \corr(\tr, \z^o_{,k}) \, ]_{k=1,\hdots,K^o} \\
\PPhi^{io} &\equiv [\, \corr(\z^i_{,k_1}, \z^o_{,k_2}) \, ]_{k_1=1,\hdots,K^i, k_2=1,\hdots,K^o} \\
\wcm &\equiv \SSigma^{i,-1/2} \, \A \, \SSigma^{i,-1/2} / (N_t + N_c - 1) \label{eq-def-wcm} \\
\SSigma^i &\equiv [\, \cov(\z^i_{,k_1}, \z^i_{,k_2}) \, \delta_{k_1,k_2} \,]_{k_1,k_2=1,\hdots,K^i} \label{eq-sigma-i} \\
\SSigma^o &\equiv [\, \cov(\z^o_{,k_1}, \z^o_{,k_2}) \, \delta_{k_1,k_2} \,]_{k_1,k_2=1,\hdots,K^o} \label{eq-sigma-o} \\
\sigma_{min}^2 &\equiv \sigma_0^2 \, (\frac{1}{N_t} + \frac{1}{N_c}) \label{eq-def-sigmin}
\end{align}
The normalized equations make it clear that TE variance is independent of shifting and scaling of covariates. More importantly, the normalized bias expression of Eq.~\ref{eq-bias-normal} makes the role of covariate balance and matching much easier to discern (Theorem~\ref{theorem-matching-bias}).

The vector of mean differences, $\uu^i$, can be normalized to obtain a vector of `standardized mean difference' values for each included covariate, $\dd^i$. These vectors are proportional to the vector of treatment-covariate correlations, $\rrho^i$:
\begin{align}
\dd^i &\equiv - {\SSigma^i}^{-1/2} \, \uu^i, \label{eq-def-dd} \\
\rrho^i &= - \left( \frac{N_t N_c}{(N_t + N_c)(N_t + N_c - 1)} \right)^{1/2} {\SSigma^i}^{-1/2} \, \uu^i, \label{eq-rrho-u} \\
\rrho^i &= \left( \frac{N_t N_c}{(N_t + N_c)(N_t + N_c - 1)} \right)^{1/2} \, \dd^i. \label{eq-rrho-dd}
\end{align}
Eq.~\ref{eq-rrho-u} is proven in Appendix~\ref{appendix-normalized-eqs}, and Eq.~\ref{eq-rrho-dd} follows from Eqs.~\ref{eq-def-dd} and ~\ref{eq-rrho-u}. Similar relationships hold for the corresponding vectors of omitted covariates, $\uu^o$, $\dd^o$ and $\rrho^o$. A corollary of Eq.~\ref{eq-rrho-dd} is the following:

\begin{corollary}
\label{corollary-smd-corr}
If a covariate is balanced across treatment and control groups, it is uncorrelated with the treatment indicator variable.
\end{corollary}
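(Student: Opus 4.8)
The plan is to read the statement straight off the proportionality relation in Eq.~\ref{eq-rrho-dd}, which links the vector of treatment--covariate correlations $\rrho^i$ to the vector of standardized mean differences $\dd^i$ by a strictly positive scalar. First I would pin down the meaning of ``balanced'': a covariate $\z^i_{,k}$ is balanced across treatment and control groups when its treatment and control sample means agree, i.e.\ when the $k$-th entry of the mean-difference vector $\uu^i$ of Eq.~\ref{eq-def-ui} is zero.

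Next I would push this through the definition of $\dd^i$. By Eq.~\ref{eq-def-dd}, $\dd^i = -{\SSigma^i}^{-1/2}\,\uu^i$, and since ${\SSigma^i}^{-1/2}$ is diagonal (Eq.~\ref{eq-sigma-i}), the $k$-th entry of $\dd^i$ equals $-({\SSigma^i}^{-1/2})_{kk}$ times the $k$-th entry of $\uu^i$. Hence balance on covariate $k$ is equivalent to $d^i_k = 0$ (tacitly assuming, as elsewhere in the paper, that the covariate has nonzero pooled within-group variance so that ${\SSigma^i}^{-1/2}$ is defined).

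Finally I would invoke Eq.~\ref{eq-rrho-dd}, $\rrho^i = \left( \frac{N_t N_c}{(N_t + N_c)(N_t + N_c - 1)} \right)^{1/2} \dd^i$: because the scalar prefactor is strictly positive, $d^i_k = 0$ forces $\rho^i_k = \corr(\tr, \z^i_{,k}) = 0$, which is exactly the claim. The same chain of equalities runs in reverse, so for a single covariate ``balanced'' and ``uncorrelated with $\tr$'' are actually equivalent, but only the stated implication is needed.

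I do not expect a genuine obstacle: the entire content sits inside Eq.~\ref{eq-rrho-dd}, and the only care required is in wording precisely what ``balanced'' means --- equality of group means, as captured by $\uu^i$, rather than equality of the full conditional distributions --- together with the harmless nondegeneracy assumption that the covariate's pooled within-group variance is positive.
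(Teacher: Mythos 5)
Your proposal is correct and follows exactly the route the paper takes: balance means $d^i_k=0$, and the strictly positive proportionality constant in Eq.~\ref{eq-rrho-dd} then forces $\rho^i_k=0$. The only extra content is the (harmless and reasonable) explicit nondegeneracy remark, though note that $\SSigma^i$ in Eq.~\ref{eq-sigma-i} is the overall covariate variance rather than the pooled within-group variance.
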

\begin{proof}
Balance for covariate $k$ means its standardized mean difference, $d_k$, is zero. This mean, according to Eq.~\ref{eq-rrho-dd}, that $\rho_k$ is zero.
\qed
\end{proof}

Having derived the standard and normalized expressions for TE bias and variance using linear regression, we are now ready to study the impact of matching on each error component.

\subsection{Impact of matching}\label{subsection-matching-impact}
The impact of matching on TE estimation bias and variance is captured in the following two theorems.

\begin{theorem}
\emph{(Matching and bias)}
\label{theorem-matching-bias}
In a standard linear model, TE estimation is unbiased if all included and omitted covariates have equal means across treatment and control groups.
\end{theorem}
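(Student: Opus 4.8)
The plan is to read the result straight off the normalized bias expression, Eq.~\ref{eq-bias-normal}, after restating the hypothesis in terms of treatment--covariate correlations; this is the route flagged in the introduction (``using the normalized form of bias expression''). First I would translate ``all included and omitted covariates have equal means across treatment and control groups'' into the framework's notation: equal group means for included covariate $k$ says its standardized mean difference $d^i_k$ is zero, so $\dd^i = \zero$, and likewise $\dd^o = \zero$ for the omitted block (equivalently $\uu^i = \uu^o = \zero$ via Eq.~\ref{eq-def-dd} and its omitted analogue). By Eq.~\ref{eq-rrho-dd} and the stated analogue for the omitted covariates, the correlation vectors are scalar multiples of these standardized-mean-difference vectors, so $\rrho^i = \zero$ and $\rrho^o = \zero$ --- this is just Corollary~\ref{corollary-smd-corr} applied coordinatewise.

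Next I would substitute $\rrho^i = \zero$ and $\rrho^o = \zero$ into
\[
\bias = \left[ \frac{N_t N_c}{(N_t + N_c)(N_t + N_c + 1)} \right]^{1/2} \left\{ \rrho^{o,t} + \rrho^{i,t}\wcm^{-1}(\rrho^i\rrho^{o,t} - \PPhi^{io}) \right\} ({\SSigma^{o}}^{1/2}\ggamma^o).
\]
The first term inside the brace is $\rrho^{o,t} = \zero^t$. In the second term, every factor other than the leading $\rrho^{i,t}$ is finite --- $\wcm^{-1}$ exists (the standing regularity condition under which the bias/variance formulas were derived), $\PPhi^{io}$ and $\rrho^i\rrho^{o,t}$ are finite matrices, and $\SSigma^{o,1/2}\ggamma^o$ is a fixed vector --- so $\rrho^{i,t}\wcm^{-1}(\rrho^i\rrho^{o,t} - \PPhi^{io}) = \zero^t$ as well. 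Hence the braced row vector is identically zero and $\bias = 0$, for every value of $\ggamma^o$: mean-balancing both the included and omitted covariates kills the omission bias entirely.

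I do not expect a genuine obstacle; the theorem is essentially a one-line consequence of Eq.~\ref{eq-bias-normal}. The only point worth spelling out is that $\rrho^{i,t}\wcm^{-1}(\cdots)$ is a true zero rather than an indeterminate product, which is guaranteed by invertibility of $\wcm$ (equivalently of $\A$). As a sanity check I would also record the direct argument from the standard equation, Eq.~\ref{eq-bias-standard}: putting $\uu^i = \zero$ collapses the brace there to $\frac{1}{N_t}\tr^t - \frac{1}{N_c}(\one - \tr)^t$, so that $\bias = -\uu^{o,t}\ggamma^o$ with $\uu^o \equiv \frac{1}{N_c}(\one-\tr)^t\Z^o - \frac{1}{N_t}\tr^t\Z^o$ the mean-difference vector of the omitted covariates, and $\uu^o = \zero$ by hypothesis again gives $\bias = 0$. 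Presenting both routes makes clear that the normalization machinery is convenient but not essential for this particular theorem.
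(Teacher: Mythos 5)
Your proof is correct and follows essentially the same route as the paper: translate the equal-means hypothesis into $\dd^i=\dd^o=\zero$, invoke Corollary~\ref{corollary-smd-corr} to get $\rrho^i=\rrho^o=\zero$, and read off $\bias=0$ from Eq.~\ref{eq-bias-normal}. Your additional verification directly from the standard form, Eq.~\ref{eq-bias-standard} (yielding $\bias=-\uu^{o,t}\ggamma^o$ when $\uu^i=\zero$), is a correct and worthwhile supplement but does not change the substance of the argument.
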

\begin{proof}
The premise means $\dd^i = \dd^o = \zero$. This, according to Corollary~\ref{corollary-smd-corr}, means that $\rrho^i = \rrho^o = \zero$. From Eq.~\ref{eq-bias-normal}, we conclude that $\bias = 0$.
\qed
\end{proof}

While matching reduces TE estimation bias, it does so by discarding data and thus may increase variance. As discussed in the literature~\citep{ho2007matching,stuart2010matching}, matching does not increase TE variance as rapidly as random subsampling of data would. The following theorem formalizes this intuitive and empirical notion.
\begin{theorem}
\emph{(Matching and TE variance)}
\label{theorem-variance-no-replacement}
Among all data sets with a given treatment ($N_t$) and control ($N_c$) group size, generated from a standard linear model with noise variance of $\sigma_0^2$, a data set with balanced distribution of included covariates achieves the lowest (potentially misspecified) OLS-based estimation variance for TE, and this minimum variance equals $\sigma_0^2 \, (1/N_c + 1/N_t)$.
\end{theorem}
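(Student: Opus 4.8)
The plan is to read everything off the normalized variance expression, Eq.~\ref{eq-variance-normal}, namely $\variance = \sigma_0^2(1/N_t + 1/N_c)(1 + \rrho^{i,t}\wcm^{-1}\rrho^i)$. The first observation is that this quantity depends only on the included covariates (through $\rrho^i$ and $\wcm$) and on $\sigma_0^2$, and not at all on the omitted covariates $\Z^o$ or their coefficients; this is precisely why the qualifier ``potentially misspecified'' requires no extra work, since misspecification inflates bias but leaves TE variance untouched. The theorem then reduces to two claims: (i) the scalar $\rrho^{i,t}\wcm^{-1}\rrho^i$ is nonnegative for every data set, so that $\variance \ge \sigma_0^2(1/N_t + 1/N_c)$; and (ii) this lower bound is attained, and attained precisely by data sets with a balanced distribution of the included covariates.

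For (i) I would show that $\wcm$ is positive definite, so that $\wcm^{-1}$ is as well and the quadratic form $\rrho^{i,t}\wcm^{-1}\rrho^i$ is $\ge 0$, with equality if and only if $\rrho^i = \zero$. By Eq.~\ref{eq-def-A}, $\A$ is a nonnegative combination of the two within-group sample covariance matrices $\cov(\Z^i)_T$ and $\cov(\Z^i)_C$, each of which is positive semidefinite; hence $\A$ is positive semidefinite, and it is in fact nonsingular because the OLS estimator in Eq.~\ref{eq-ols} is assumed well defined, i.e. $\X^i$ has full column rank --- exactly the condition under which the derivation of Eq.~\ref{eq-xtxi} is valid. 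Since $\wcm$ is obtained from $\A$ by the congruence $\SSigma^{i,-1/2}\,\A\,\SSigma^{i,-1/2}$ up to the positive scalar $1/(N_t+N_c-1)$, with $\SSigma^i$ a nonsingular diagonal matrix by Eq.~\ref{eq-sigma-i} (here one also needs that no included covariate is constant, so each entry of $\SSigma^i$ is strictly positive), positive definiteness transfers from $\A$ to $\wcm$. This gives the inequality and pins down the equality case.

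For (ii) I would invoke Corollary~\ref{corollary-smd-corr}, equivalently Eq.~\ref{eq-rrho-dd}: a balanced distribution of the included covariates means $\dd^i = \zero$, hence $\rrho^i = \zero$, so the correction term in Eq.~\ref{eq-variance-normal} vanishes and $\variance = \sigma_0^2(1/N_t + 1/N_c)$, which is exactly $\sigma_{min}^2$ as defined in Eq.~\ref{eq-def-sigmin}; conversely, by the equality analysis of step (i), any data set that meets the bound must have $\rrho^i = \zero$, i.e. perfectly balanced included covariates. To conclude that the infimum is a genuine minimum I would note that a balanced configuration exists for any prescribed $N_t$ and $N_c$ --- for instance any data set in which the treatment and control units have identical within-covariate empirical distributions.

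The only non-routine point is the positive-definiteness argument for $\wcm$: one must state carefully the mild regularity conditions under which it holds (full column rank of $\X^i$, equivalently nonsingular $\A$, together with no degenerate/constant included covariate so that $\SSigma^i$ is positive definite), and observe that these are precisely the assumptions already implicit in writing down Eqs.~\ref{eq-ols}, \ref{eq-xtxi} and \ref{eq-variance-normal}. Everything else is a one-line substitution into the normalized variance formula.
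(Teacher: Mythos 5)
Your proof is correct and takes essentially the same route as the paper: the paper argues directly from the standard form $\variance = \sigma_0^2(1/N_t + 1/N_c + \uu^{i,t}\A^{-1}\uu^i)$ that $\A$ is positive semi-definite as a weighted sum of within-group covariance matrices, so the quadratic form is nonnegative and vanishes exactly when $\uu^i=\zero$, while you run the identical argument on the congruent matrix $\wcm$ and the rescaled vector $\rrho^i$ in the normalized form. The only substantive difference is that you are more careful than the paper about the regularity conditions (nonsingularity of $\A$ and of $\SSigma^i$), which the paper glosses over when it asserts that positive semi-definiteness of $\A$ already yields positive definiteness of $\A^{-1}$.
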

\begin{proof}
Since within-group covariance matrices, $\cov(\Z^i)_T$ and $\cov(\Z^i)_C$, are positive semi-definite~\citep{gentle2007matrix}, their weighted sum in Eq.~\ref{eq-def-A} is also positive semi-definite. Since $\A$ is positive semi-definite, $\A^{-1}$ is positive definite (with eigenvalues being inverse of eigenvalues for $\A$). Therefore, by definition of positive-definiteness, $\uu^{i,t} \A^{-1} \uu^i > 0, \,\, \forall \uu^i \neq \zero$. The equality happens when $\uu^i = \zero$, i.e. when included covariates are balanced, leading to $\sigma^2 = \sigma_{min}^2$. As shown in Appendix~\ref{appendix-var-repeat} , sampling with replacement increases variance, and thus the minimum variance established in this theorem applies to matching with replacement as well.
\qed
\end{proof}

A few observations are worth mentioning with regards to the bias/variance expressions and the resulting theorems:
\begin{enumerate}
\item For TE estimation to be unbiased, not only included but also omitted covariates must be balanced. Since we don't know what omitted covariates are (otherwise we would include them in regression and eliminate bias), we generally cannot ensure they are balanced. An exception is perfect matching where, e.g. each treatment observation is perfectly matched - with respect to all original covariates - with a control observation. But thanks to ignorability assumption, we are ensured that all omitted covariates are also functions of included covariates. Therefore, perfect matching on included covariates would automatically lead to perfect matching on omitted covariates. But perfect matching is often impractical, either because some covariates are continuous, or because perfect matching leads to an excessively small sample size and hence large variance, or both. \cite{rubin1973use} similarly observe that matching on covariates in a linear regression or ANCOVA analysis does not guarantee bias removal for nonlinear response surfaces.
\item In the absence of perfect matching, we must accept the fact that TE will be biased, and instead do our best to minimize it. Again, our main challenge is that, according to Eq.~\ref{eq-bias-normal}, bias is a function of included as well as omitted covariates. This represents a key difference between bias and variance equations, since variance, according to Eq.~\ref{eq-variance-normal} only depends on included covariates. In other words, quantifying bias is a more difficult proposition than quantifying variance. In Section~\ref{subsection-quantify-bias}, we will present a framework for quantifying bias that is independent of outcome variable.
\item In Theorem~\ref{theorem-variance-no-replacement}, we showed that a data set that is balanced with respect to included covariates achieves $\sigma_{min}^2$. A second way to achieve minimum variance is to simply include no adjustment covariates in the regression, i.e. only have intercept and treatment indicator. This is the simple difference method mentioned in Section~\ref{subsection-beyond-double}. As we saw in Figure~\ref{fig-double-robustness}, while this approach has low variance, using it with an unmatched data often produces a large enough bias that more than offsets the low variance and leads to large MSE.
\end{enumerate}

\subsection{TE bias and orthogonality}\label{subsection-bias-orthogonality}
Neither the standard (Eqs.~\ref{eq-bias-standard} and \ref{eq-variance-standard}) nor the normalized (Eq.~\ref{eq-bias-normal} and \ref{eq-variance-normal}) expressions for bias make an intuitive concept clear: An omitted covariate that is very `similar' to the included covariates will not induce much bias, since its contribution to the outcome will be captured by the coefficient of the included covariate acting as a surrogate for the coefficient of omitted covariate. This concept can be formally stated in the following theorem:

\begin{theorem}
\emph{(Bias and orthogonality)}
\label{theorem-bias-orthogonality}
Projections of omitted covariates onto the subspace spanned by included adjustment covariates and intercept variable produce zero TE bias, i.e. only components of omitted covariates orthogonal to the aforementioned subspace contribute towards bias.
\end{theorem}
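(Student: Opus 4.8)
The plan is to work directly from the bias-vector formula, Eq.~\ref{eq-bias-vector}, $\EE{\hbeta} - \bbeta^i = \xtxi\,(\X^{i,t}\Z^o)\,\ggamma^o$, whose first (treatment-coefficient) entry is the TE bias $\bias$. Because $\Z^o\ggamma^o = \sum_k \gamma^o_k\,\z^o_{,k}$ enters linearly, it suffices to prove the claim for a single omitted covariate: take $\z^o$ to be one column of $\Z^o$ with scalar coefficient $\gamma^o$, so that its contribution to TE bias is $\gamma^o$ times the first entry of $\xtxi\,\X^{i,t}\,\z^o$.

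First I would introduce the subspace $\mathcal{S} = \mathrm{span}\{\one,\Z^i\}$ spanned by the intercept and the included adjustment covariates, and split $\z^o = \z^o_\parallel + \z^o_\perp$ into its orthogonal projection onto $\mathcal{S}$ and the complementary residual, so that $\z^o_\parallel \in \mathcal{S}$ while $\one^t\z^o_\perp = 0$ and $\Z^{i,t}\z^o_\perp = \zero$ (this uses the standing full-column-rank assumption on $[\,\one\ \Z^i\,]$, already required for $\xtxi$ to exist). By linearity, the bias contribution from $\z^o$ equals $\gamma^o$ times the first entry of $\xtxi\X^{i,t}\z^o_\parallel$ plus $\gamma^o$ times the first entry of $\xtxi\X^{i,t}\z^o_\perp$, and the theorem is precisely the assertion that the first of these two terms vanishes.

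To kill the parallel term, note $\z^o_\parallel \in \mathcal{S}$ can be written $\z^o_\parallel = \one\,b_0 + \Z^i\boldsymbol{b}$, hence $\z^o_\parallel = \X^i\boldsymbol{a}$ with $\boldsymbol{a} = (0,\,b_0,\,\boldsymbol{b}^t)^t$ --- the treatment-indicator slot of $\boldsymbol{a}$ being zero precisely because $\tr$ is not used to build $\z^o_\parallel$. Then $\xtxi\X^{i,t}\z^o_\parallel = \xtxi(\xtx)\boldsymbol{a} = \boldsymbol{a}$, whose first entry is $0$; so the parallel component contributes nothing and only $\z^o_\perp$ --- the part of $\z^o$ orthogonal to $\mathcal{S}$ --- feeds into $\bias$. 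As a by-product, $\X^{i,t}\z^o_\perp = (\tr^t\z^o_\perp,\ 0,\ \zero^t)^t$, so pairing this with the first row of $\xtxi$ in Eq.~\ref{eq-xtxi} and with Eq.~\ref{eq-variance-standard} gives the compact residual-bias identity $\bias = (\variance/\sigma_0^2)\,\gamma^o\,(\tr^t\z^o_\perp)$, exhibiting the surviving bias as driven by the inner product between the treatment indicator and the component of $\z^o$ unexplained by the included covariates.

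The calculations are routine; the one genuine subtlety --- and the step I would guard most carefully --- is the choice of subspace. It is tempting to project $\z^o$ onto the full column space of $\X^i = [\,\tr\ \one\ \Z^i\,]$, but that is wrong: a piece of $\z^o$ lying along $\tr$ (modulo $\mathcal{S}$) does create bias, so the split must be taken relative to $\mathcal{S} = \mathrm{span}\{\one,\Z^i\}$ alone, i.e. the intercept plus adjustment covariates but not the treatment indicator. Conceptually this is the Frisch--Waugh--Lovell ``partialling-out'' picture --- $\hat\tau$ sees $\z^o$ only through its residual after regression on $\one$ and $\Z^i$ --- and one could instead organize the whole argument around that observation; but since the excerpt never invokes Frisch--Waugh--Lovell, the self-contained matrix computation above is the cleaner route.
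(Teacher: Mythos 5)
Your proposal is correct and follows essentially the same route as the paper: both arguments hinge on the identity $\xtxi \xtx = \I$, which shows that $\xtxi\X^{i,t}$ applied to any element of $\mathrm{span}\{\one,\Z^i\}$ returns a coefficient vector whose treatment-slot entry is zero, so only the component of each omitted covariate orthogonal to that span can feed into the TE bias. Your explicit warning about excluding $\tr$ from the projection subspace and the by-product identity $\bias = (\variance/\sigma_0^2)\,\gamma^o\,(\tr^t\z^o_\perp)$ are correct refinements beyond what the paper states, but they do not change the underlying argument.
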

\begin{proof}
Equation~\ref{eq-bias-vector} can be expanded in terms of columns of $\Z$:
\begin{equation}
\E[\hbeta] - \bbeta  = \xtxi \X^{t} \sum_{k=1}^{K^o} \z^o_{,k} \gamma_k.
\end{equation}
To prove our theorem, and recalling our convention for arranging covariates in $\X^i$ according to Eq.~\ref{eq-def-Xi}, we simply prove that the matrix operator $\xtxi \X^{i,t}$ acting on any one of $K$ columns of $\Z^i$, i.e. $\z^i_{,k}$, produces a $K$-dimensional vector whose first element is zero. This is easy to see since, by definition, $\xtxi \xtx = \I_{K^i+2}$, where $\I_{K^i+2}$ is the identity matrix of dimensions $K^i+2$. Therfore, $\xtxi \X^{i,t} \z^i_{,k}$ equals the $(k+2)$'th column of the identity matrix $\I_{K+2}$, whose first two elements will always be zero. Therefore, omission of any linear combination of $\z^i_{,k}$'s will produce zero TE bias.
\qed
\end{proof}
This theorem will be used in Section~\ref{subsection-quantify-bias} to develop the constrained bias estimation approach.

\section{Applications}\label{section-tools}
In Section~\ref{section-framework}, we developed the mathematical framework for quantifying TE bias and variance, as well as the impact of matching, in a standard linear model (Section~\ref{subsection-problem-definition}). In addition to providing a theoretical basis for understanding the combined use of matching and linear regression, this framework can be utilized towards developing diagnostic and calibration tools for causal inference. Such applications are the subject of this section.

\subsection{Efficiency and generalizability of simulations}\label{subsection-efficient-simulation}
The first application of the framework developed in Section~\ref{section-framework}, particularly the closed-form expressions for bias and variance (Eqs.~\ref{eq-bias-standard}, \ref{eq-variance-standard}, \ref{eq-bias-normal}, and \ref{eq-variance-normal}), is that it allows for fast and accurate simulations of the combined effect of matching and variance without requiring lengthy Monte Carlo simulations. Figure~\ref{fig-verify-equations} compares the bias and variance calculations using our closed-form expressions - encoded in the R package MatchLinReg (see Section~\ref{subsection-software}) - as well as Monte Carlo simulations, using 10,000 iterations to generate each point. While closed-form expressions produce results nearly instantaneously, the MC approach took several minutes on an average laptop, despite the data sets being quite small (fewer than 1000 observations). For larger data sets, the advantage of our framework becomes more prominent. Such efficient simulation tool allows for exhaustive exploration of data sets and parameter spaces in future research, ultimately leading to better empirical rules for combining matching and linear regression.

\begin{figure}
\centering
\begin{tabular}{cc}
\subfloat{\includegraphics[height=3.2in]{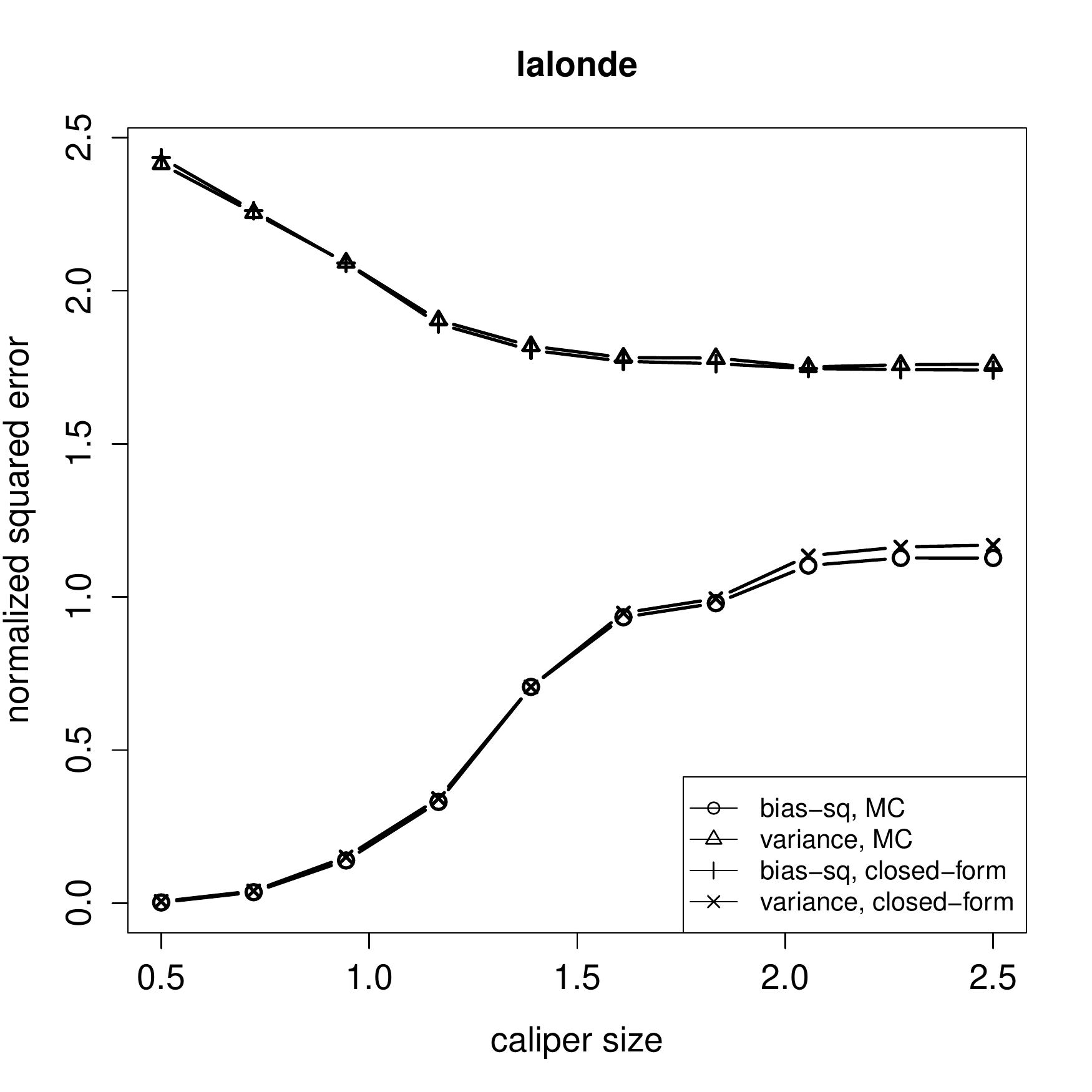}} &
\subfloat{\includegraphics[height=3.2in]{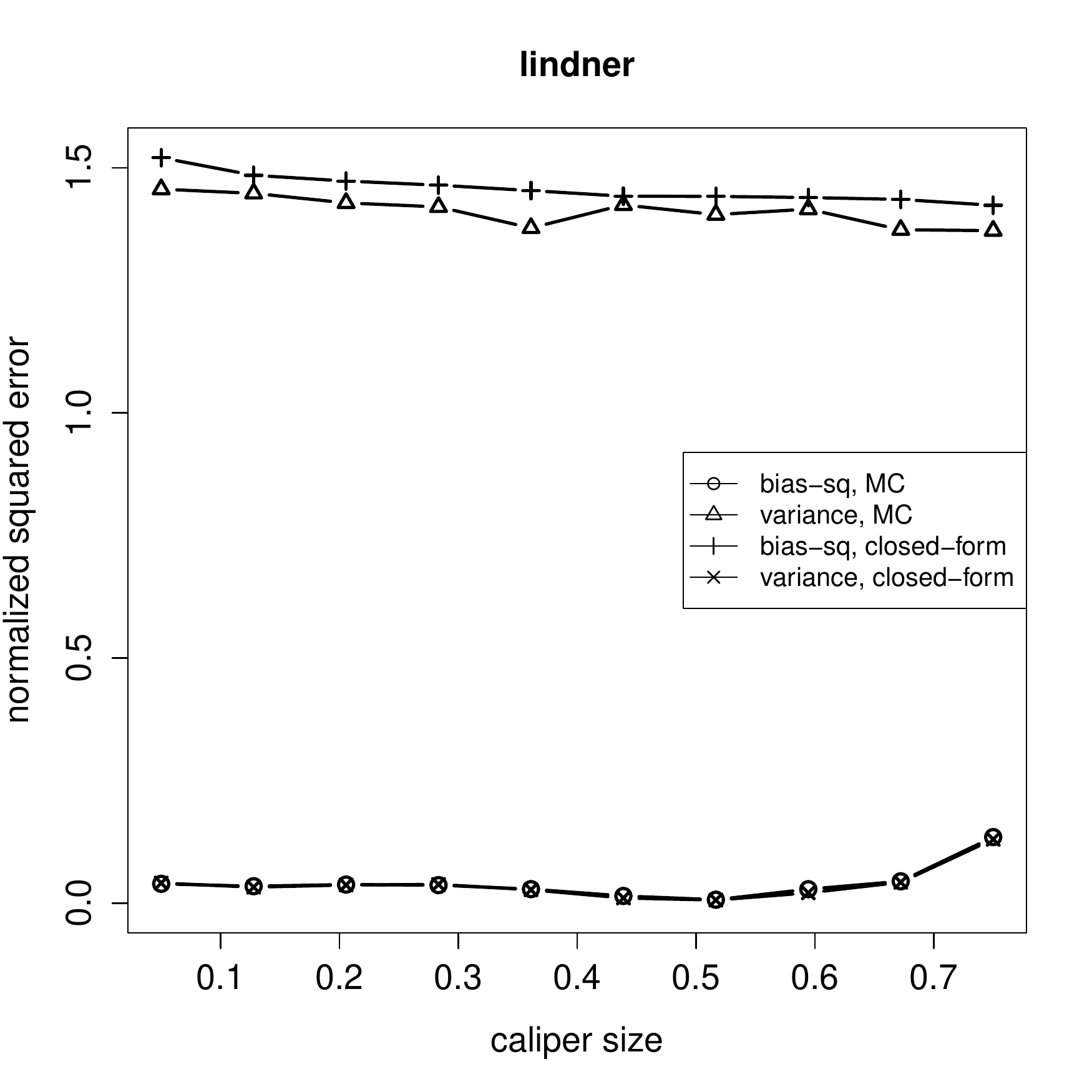}} 
\end{tabular}
\caption{Comparison of TE bias and variance calculations using Monte Carlo simulations (10,000 iterations per data point) and closed-form expressions of our framework for lalonde (left) and lindner (right) data sets. MC calculations took $\sim$10-15min, while closed-form calculations took less than 0.1sec. Data generation parameters are the same as those reported for Figure~\ref{fig-double-robustness}.}
\label{fig-verify-equations}
\end{figure}

In addition to speed of simulations, an equally-important contribution of our framework is that it provides for better generalizability of simulation results since we have a theoretical basis for what are the drivers of each error component. For example, Eqs.~\ref{eq-bias-standard} and \ref{eq-variance-standard} make it clear that TE bias and variance do not depend on coefficients of adjustment covariates included in the regression model. Therefore, this aspect of data sets can be safely ignored while comparing their simulation results across studies. The enhanced generalizability of simulations resulting from utilizing our framework directly addresses the requirement stated in \cite{imbens2004nonparametric}. This paper is the first beneficiary of the framework, as all the simulations in the rest of this paper utilize the aforementioned closed-form expressions.

\subsection{Quantifying bias}\label{subsection-quantify-bias}
A key challenge in assessing the benefits of matching is that TE bias is a function of the unknown, omitted covariates and their coefficients (Eqs.~\ref{eq-bias-standard} and \ref{eq-bias-normal}) and hence any bias-removal impact of matching is similarly unknown, and hard to quantify. Naively, it might seem like a reasonable idea to use `data exploration' to identify such omitted covariates and the strength of their impact, e.g. using pairwise correlation analysis of various candidate terms with outcome. Such explorations, however, can lead to overfitted models, particularly in the hands of inexperienced practitioners~\citep{babyak2004you,ho2007matching}. Furthermore, if we could somehow learn what the omitted covariates are, we could simply include them in the regression model and eliminate the bias that way, rather than using matching. Instead, we opt for a dual approach:
\begin{enumerate}
\item Bias-oriented analysis: We calculate maximum TE bias, normalized by the upper bound on the amount of unexplained variation in outcome that is due to omitted covariates, and where the vector of contributions from omitted covariates lies within a pre-specified subspace. This normalized bias is then combined with TE variance over a range of values for the bias-to-variance conversion factor, which we call `omitted R-squared'. This analysis begins with a set of diagnostic tools (Sections~\ref{subsubsection-percent-bias-reduction}, \ref{subsubsection-compare-normalized-biases} and \ref{subsubsection-constrained-bias-estimation}), and culminates in a calibration tool (Section~\ref{subsection-combine-bias-var}). As it emphasizes bias, this analysis favors matching.
\item Variance-oriented analysis: We assume the other extreme with regards to the strength of omitted covariates, namely that it is zero. Assuming a correct model specification - and hence zero bias - we calculate the reduced study power resulting from the choice of matching parameters in previous analysis (Section~\ref{subsection-power-analysis}). This represents a worst-case-scenario as far as the negative impact of matching on TE estimation error.
\end{enumerate}
Practitioners must weight the benefits of matching (from the first analysis) against the worst-case-scenario from the second analysis to determine the best course of action. An interesting area for future research is to combine these two facets into a unified framework, possibly using risk-based Bayesian methods to offer an even more prescriptive path towards matching calibration for practitioners.

With the above roadmap in mind, we begin developing a methodology for quantifying TE bias in misspecified regression. Eq.~\ref{eq-bias-standard} can be re-written as follows:
\begin{align}
\bias = \g^t \, \Z^o \ggamma^o = \sum_{k=1}^{K^o} \g^t (\gamma_k^o \, \z^o_{,k}), \label{eq-bias-sum}
\end{align}
where
\begin{align}
\g \equiv \left( \frac{1}{N_t} + \frac{1}{N_c} + \uu^{i,t} \A^{-1} \uu^i \right) \tr + \left( -\frac{1}{N_c} + \frac{1}{N_c} \, (\pp - \qq)^t \A^{-1} \, \uu^i \right) \one + \Z^i \A^{-1} \uu^i,
\end{align}
and $\z^o_{,k}$ is the $k$'th omitted covariate, $\gamma_k^o$ is its coefficient, and $\g$ is a vector of length $N$. Eq.~\ref{eq-bias-sum} means that the contribution of each omitted covariate towards TE bias is 1) additive, 2) proportional to its coefficient, and 3) dependent on how parallel it is with $\g$.

While we do not know exactly what the omitted covariates are (otherwise they would have been included), we can often form a reasonable candidate set, e.g. by forming interactions (second-order or higher) and powers of the included covariates. Given the additive property of bias, studying the impact of matching on bias due to each potentially-omitted covariate is a meaningful first step.

\subsubsection{Single-covariate relative squared bias reduction}\label{subsubsection-percent-bias-reduction}
A first diagnostic tool is to compare (squared) bias caused by a candidate omitted covariate, before and after matching. A relative bias reduction metric will be independent of $\gamma^o_k$:
\begin{align}
\frac{\delta_{\tau,k}^{2,i} - \delta_{\tau,k}^{2,f}}{\delta_{\tau,k}^{2,i}} = \frac{|| \g^t \, \z^{o,i}_{,k} ||^2 - || \g^t \, \z^{o,f}_{,k} ||^2}{|| \g^t \, \z^{o,i}_{,k} ||^2}
\end{align}
where $i$ and $f$ superscripts refer to before and after matching, respectively. This quantity (or its square) can be produced for all candidate omitted terms, producing plots similar to those shown in Figure~\ref{fig-percent-bias-reduction}. Ideally, we would like matching to eliminate, or at least reduce, bias for all terms, but the figure shows that this may not happen. Practitioners may need to examine terms for which bias increases more closely, including performing the next analysis.

\begin{figure}
\centering
\begin{tabular}{cc}
\subfloat{\includegraphics[height=3.2in]{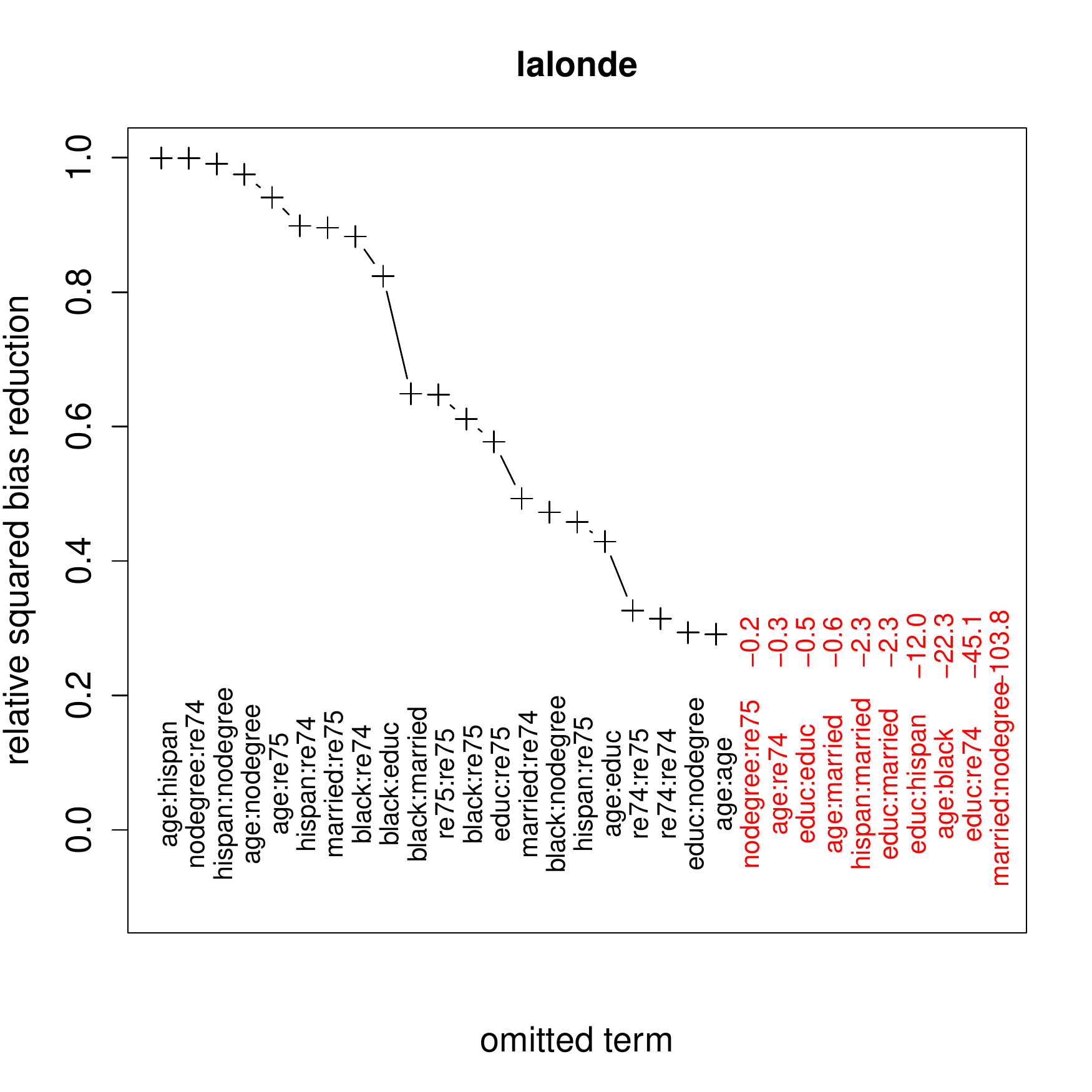}} & 
\subfloat{\includegraphics[height=3.2in]{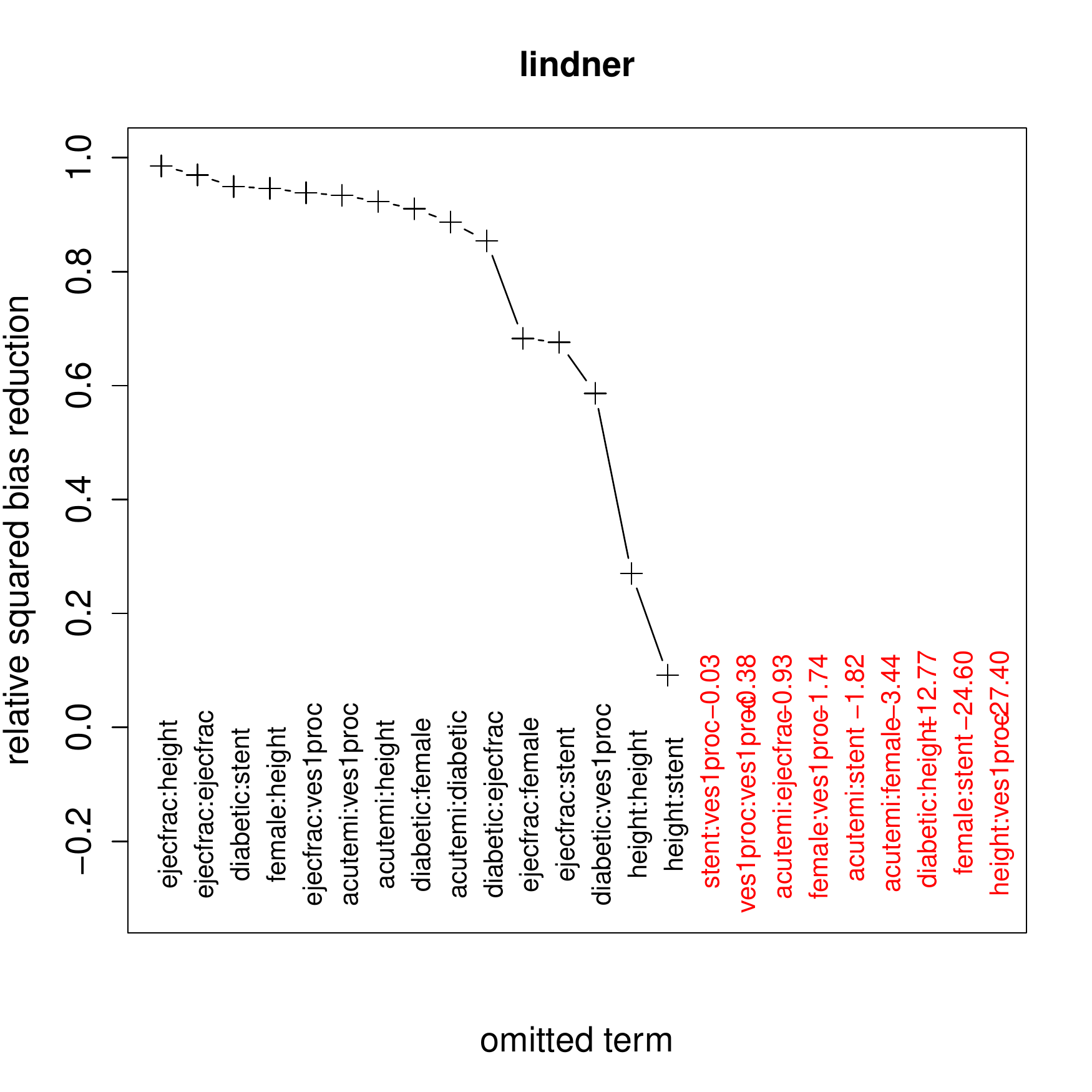}} \\ 
\end{tabular}
\caption{Relative squared bias reduction for all second-order terms as candidate omitted covariates. Results for lalonde (left) and lindner (right) data sets are shown. Regression adjustment using all main effects (`before') is supplemented by propensity score matching as a pre-processing step (`after'), using caliper size of 1.5 for lalonde and 0.2 for lindner. Omitted terms are sorted by decreasing relative bias reduction. Terms in red are those for which matching has increased squared (or absolute) bias.}
\label{fig-percent-bias-reduction}
\end{figure}

\subsubsection{Comparison of single-covariate normalized biases} \label{subsubsection-compare-normalized-biases}
While relative bias reduction is calculated independently of omitted coefficients ($\ggamma^o$), yet it does not provide any comparison of the magnitude of biases induced by each omitted covariate. Matching may not significantly reduce the bias due to an omitted covariate, yet if the overall bias contributed by that covariate is likely to be small, lack of bias reduction by matching for that covariate becomes unimportant.

Since each term's contribution is proportional to its coefficient ($\gamma^o_k$), we must somehow determine these coefficients. As emphasized before, we opt for an approach that does not use the outcome variable. If omitted covariates have been orthogonalized with respect to $\{ \one, \Z^i \}$ (the subspace of included adjustment covariates plus intercept), their mean squared contribution to outcome is roughly the same as the change in R-squared of the regression model, multiplied by $\sigma_0^2$. This motivates us to define a parameter called `omitted R-squared': it is the ratio of mean squared contribution to outcome for an omitted covariate (or a set of covariates), after orthogonalization, to noise variance:
\begin{align}
R_o^2 = || \Z_{\bot}^o \ggamma^o ||^2 / \sigma_0^2
\end{align}
This parameter can be intuitively described as the percentage of variation in outcome that is not explained by the regression model. If included covariates are all linear terms (in original covariates), then one can also think of $R_o^2$ as the degree of nonlinearity in the data-generation process. Note that we do not propose that this parameter be extracted from the data, e.g. through the use of outcome variable. Rather, this parameter must be determined based on a domain expert's knowledge of the underlying mechanisms involved as well as experience and future simulation studies. We will use this parameter in Section~\ref{subsection-combine-bias-var} to put bias on the same scale as variance and combine them to arrive at total MSE for TE.

Given the above, we propose a constrained bias estimation approach where we estimate bias subject to the constraint that mean squared contribution from omitted covariates equals a given value. In fact, we can canonically set this value to 1 and estimate a normalized bias value:
\begin{align}
\bias^n = \sqrt{N} \, \g^t \, \Z_{\bot}^o \ggamma^o \, / \, || \Z_{\bot}^o \ggamma^o || \label{eq-normalized-bias}
\end{align}
Normalized bias has an intuitive interpretation: it is the fraction of unexplained/omitted variation in outcome (or `signal') that turns into TE bias. For a perfectly-matched data set, this fraction is zero.

Within this general approach, several variations can be conceived of. We begin with the simplest approach, where we calculate and compare normalized bias for each of the candidate omitted covariates. In other words, we permit $\ggamma^o$ to have exactly one non-zero element. This produces $K^o$ normalized bias numbers, $\delta_k^n$:
\begin{align}
\delta_k^n &= \sqrt{N} \, \g^t \, \gamma^o_k \z^o_{\bot,k} \, / \, || \gamma^o_k \z^o_{\bot,k} ||, \label{eq-normalized-bias} \nonumber \\
&= \pm \, \sqrt{N} \, \g^t \, \z^o_{\bot,k} / || \z^o_{\bot,k} ||.
\end{align}
We can now produce and compare these single-covariate, normalized biases for a set of candidate omitted terms, perhaps before and after matching, and for different calibrations of matching to identify nonlinearities that can potentially induce large residual bias, even after matching. Since absolute values are more important than signs, we focus on squared bias, a metric which is also dimensionally compatible with variance and total MSE.

An example is provided in Figure~\ref{fig-compare-normalized-biases}, where single-covariate normalized square bias has been plotted for a handful of second-order interactions as a function of matching caliper size. We see that matching does manage to reduce the normalized bias for the biggest offenders. A notable exception is the \textit{female:stent} interaction in the lindner data set, where matching has raised the bias to relatively significant level after matching ($\sim 1\%$ of omitted variance). Also, note that the overall normalized bias levels are significantly higher in lalonde data set, compared to lindner. In other words, a larger fraction of omitted signal translates into TE bias in lalonde data set. Figure~\ref{fig-compare-normalized-biases} also illustrates that impact of matching on covariate imbalance (measured via absolute mean difference) does not have a simple relationship with bias reduction.

\begin{figure}
\centering
\begin{tabular}{cc}
\subfloat{\includegraphics[height=3.2in]{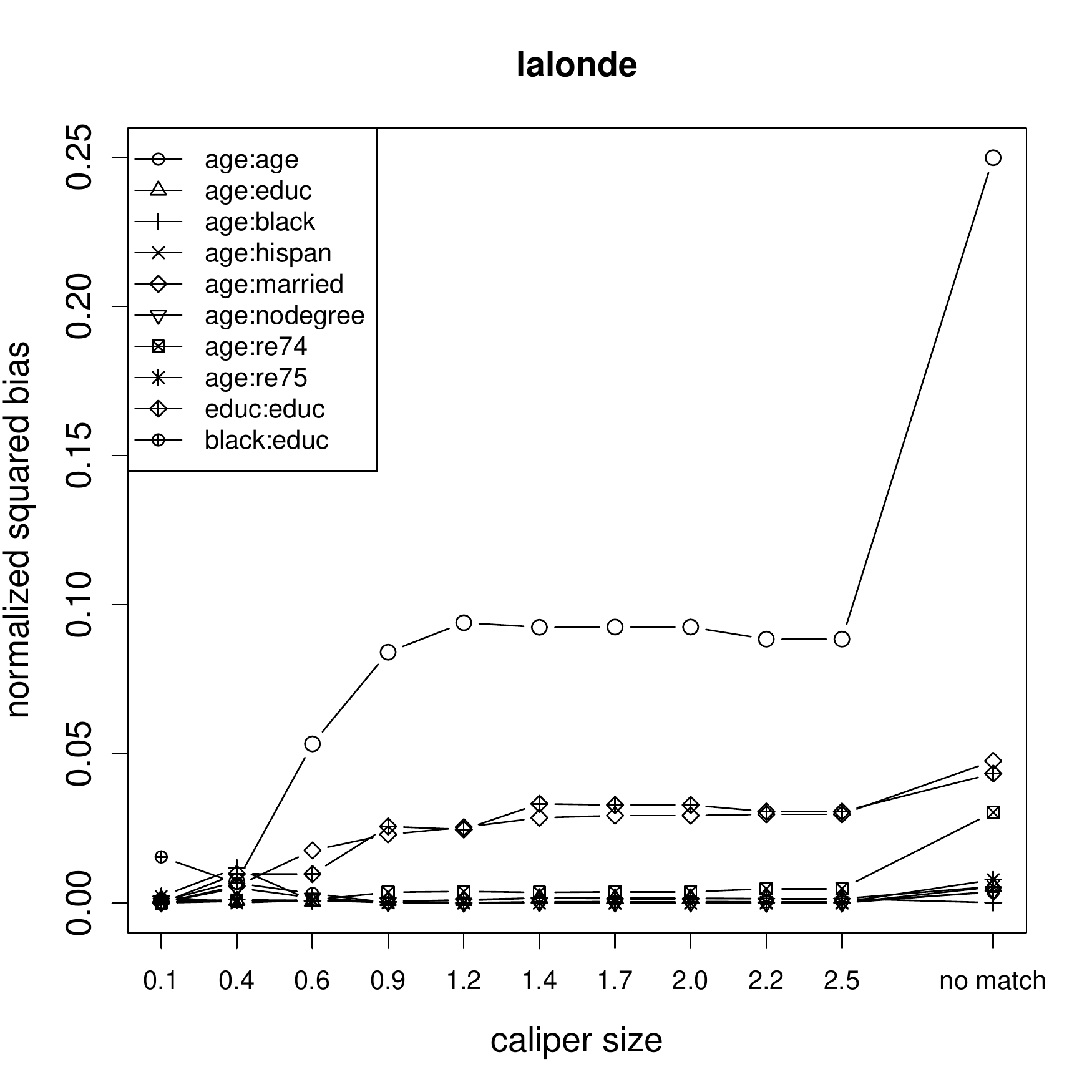}} & 
\subfloat{\includegraphics[height=3.2in]{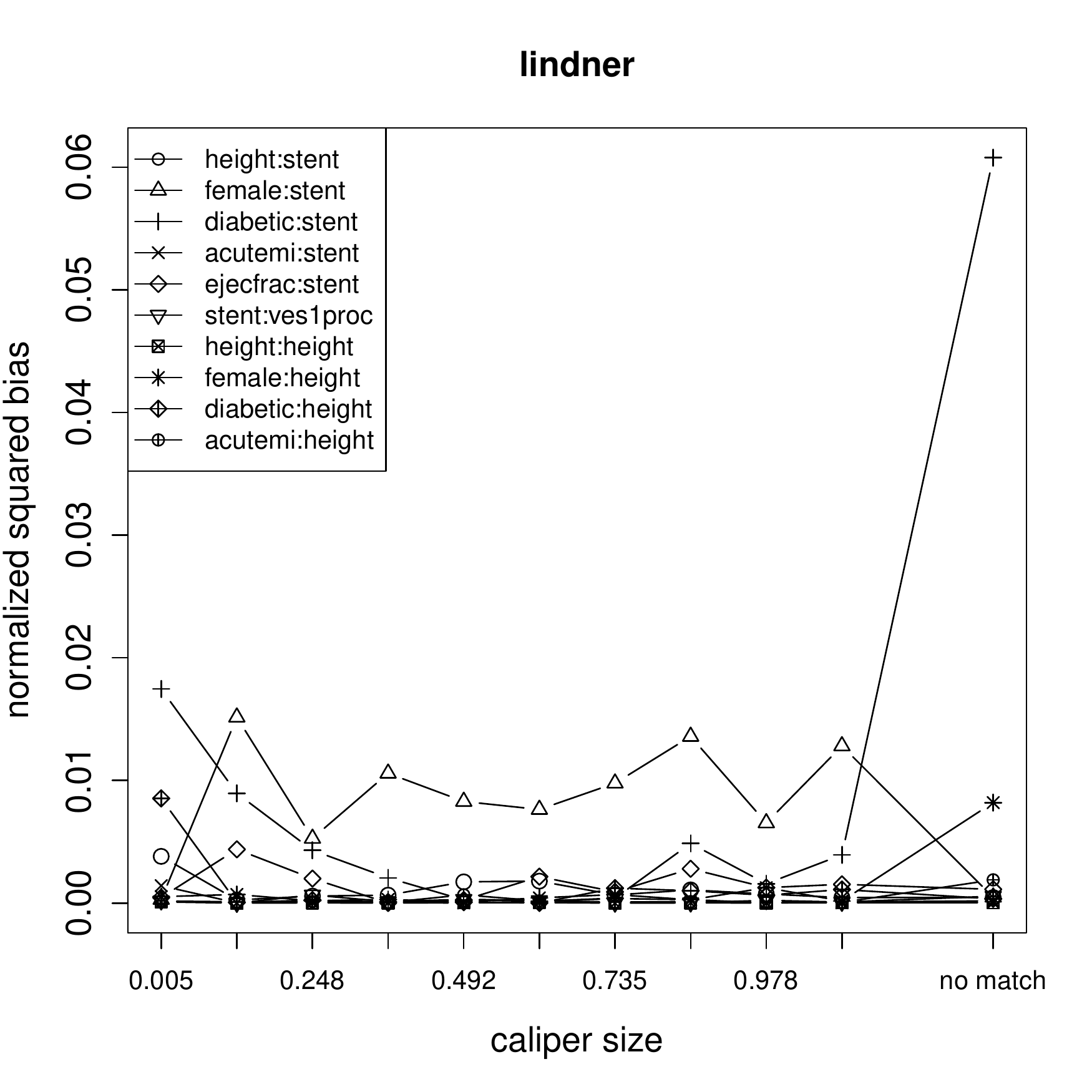}} \\ 
\subfloat{\includegraphics[height=3.2in]{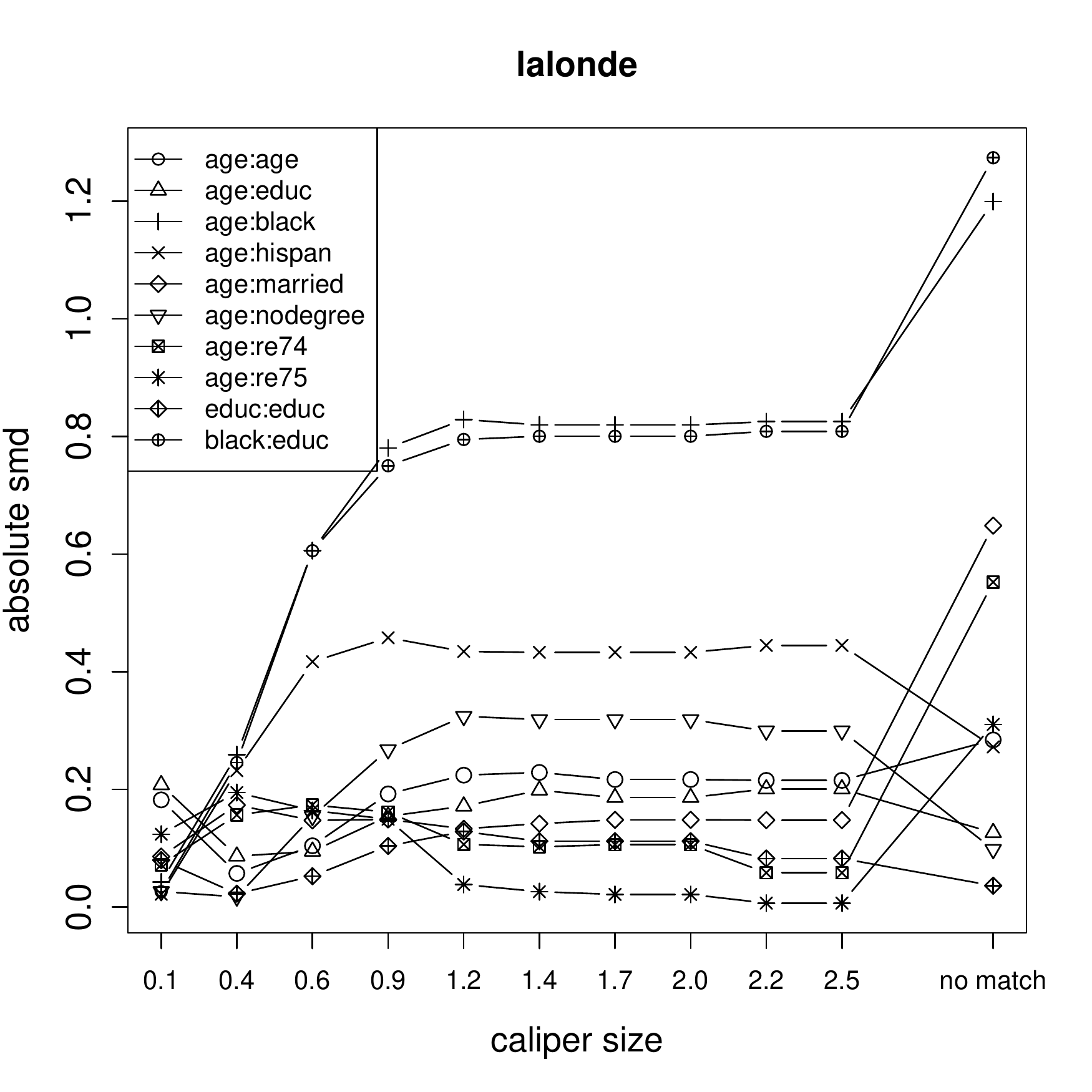}} & 
\subfloat{\includegraphics[height=3.2in]{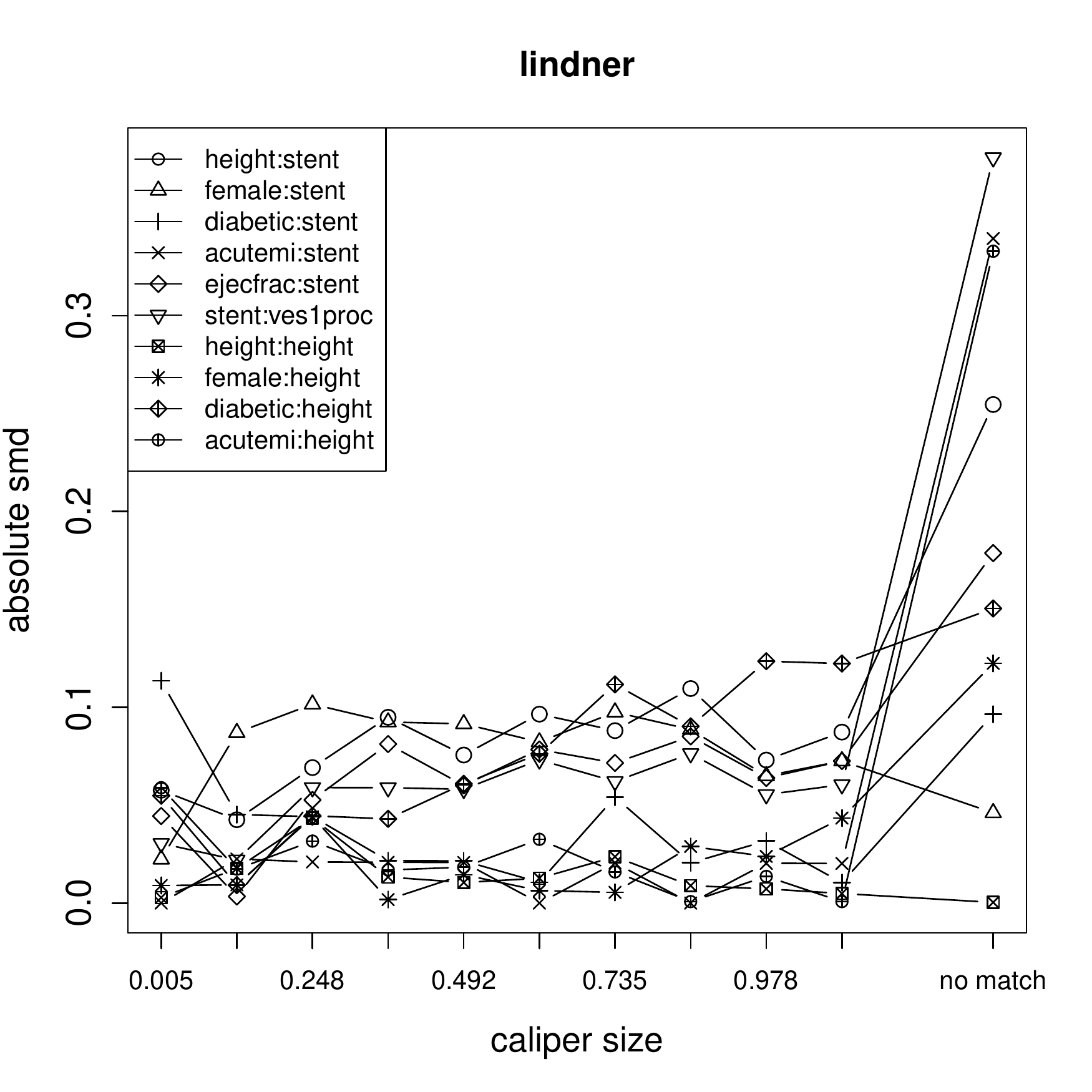}} \\ 
\end{tabular}
\caption{Top row: Comparison of normalized squared biases due to omission of 10 second-order terms from regression adjustment for lalonde (left) and lindner (right) data sets. Horizontal axis is caliper size used in propensity score matching, with the right-most point on each plot corresponding to no matching. Bottom row: Impact of matching on absolute mean differences for same 10 interaction terms.}
\label{fig-compare-normalized-biases}
\end{figure}

\subsubsection{Estimating aggregate bias}\label{subsubsection-constrained-bias-estimation}
While producing and examining normalized squared biases for interaction terms is insightful, it stops short of producing a single, aggregate measure of bias, which we would need to combine with variance to arrive at an estimated MSE. We consider three aggregation methods below. They all produce a normalized bias estimate (Eq.~\ref{eq-normalized-bias}), and are all based on maximizing normalized bias within a given, eligible subspace. They differ in how this eligible subspace is constructed.

\textit{Single-covariate maximization:} Eligible subspace is one of $K^o$ candidate omitted covariates. The aggregate bias estimate is simply the maximum value of normalized bias terms due to each candidate, omitted covariate, $\{ \delta_{\tau,k}^n \}$.

\textit{Covariate-subspace maximization:} Here the eligible subspace is the entire covariate subspace that is orthogonal to $\{ \one, \Z^i \}$, and the aggregate bias estimate is the linear combination of omitted covariates - after orthogonalization - that maximizes normalized bias. This bias estimate is bound to be greater than or equal to the output from single-covariate maximization approach. The direction of $\Z^o \ggamma^o$ must be parallel to the projection of $\g$ in the omitted covariate subspace, after orthogonalization ($\g_{\parallel}$):
\begin{align}
\bias = \sqrt{N} \, \g^t \g_{\parallel} / (\g_{\parallel}^t \g_{\parallel})^{1/2}
\end{align}

\textit{Absolute maximization:} The eligible subspace is the entire subspace orthogonal to $\{ \one, \X^i \}$, and the aggregate bias estimate in this case is the absolute maximum normalized bias achievable. Here the direction of $\Z^o \ggamma^o$ is simply parallel to $\g$:
\begin{align}
\bias &= \g^t (\frac{\g}{||\g||} \sqrt{N}) = \sqrt{N} || \g ||, \\
&= \sqrt{N} \, \frac{\sigma}{\sigma_0}.
\end{align}

Figure~\ref{fig-compare-bias-methods} compares the bias estimated for these three methods, along with impact of matching on bias reduction for each method. Of these three measures, we believe the covariate-subspace approach is the most suitable approach. While absolute maximization is too extreme as far as bearing no connection to adjustment covariates, single-covariate maximization approach is too rigid in focusing on a very limited set of possible combinations of candidate omitted covariates. Interestingly, bias estimate using absolute maximization does not change with matching. This property is proven in Appendix~\ref{appendix-absolute-matching}.

\begin{figure}
\centering
\begin{tabular}{cc}
\subfloat{\includegraphics[height=3.2in]{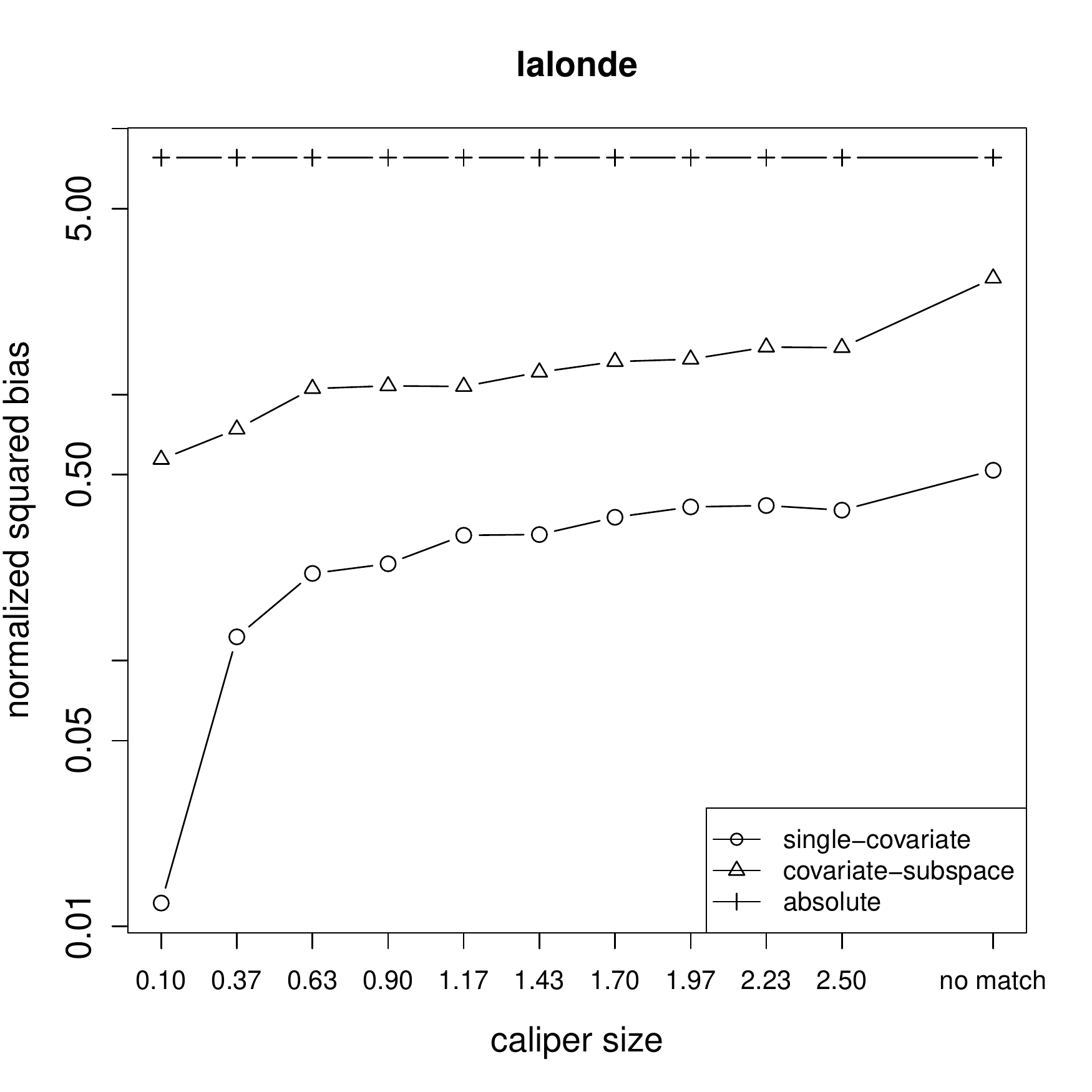}} & 
\subfloat{\includegraphics[height=3.2in]{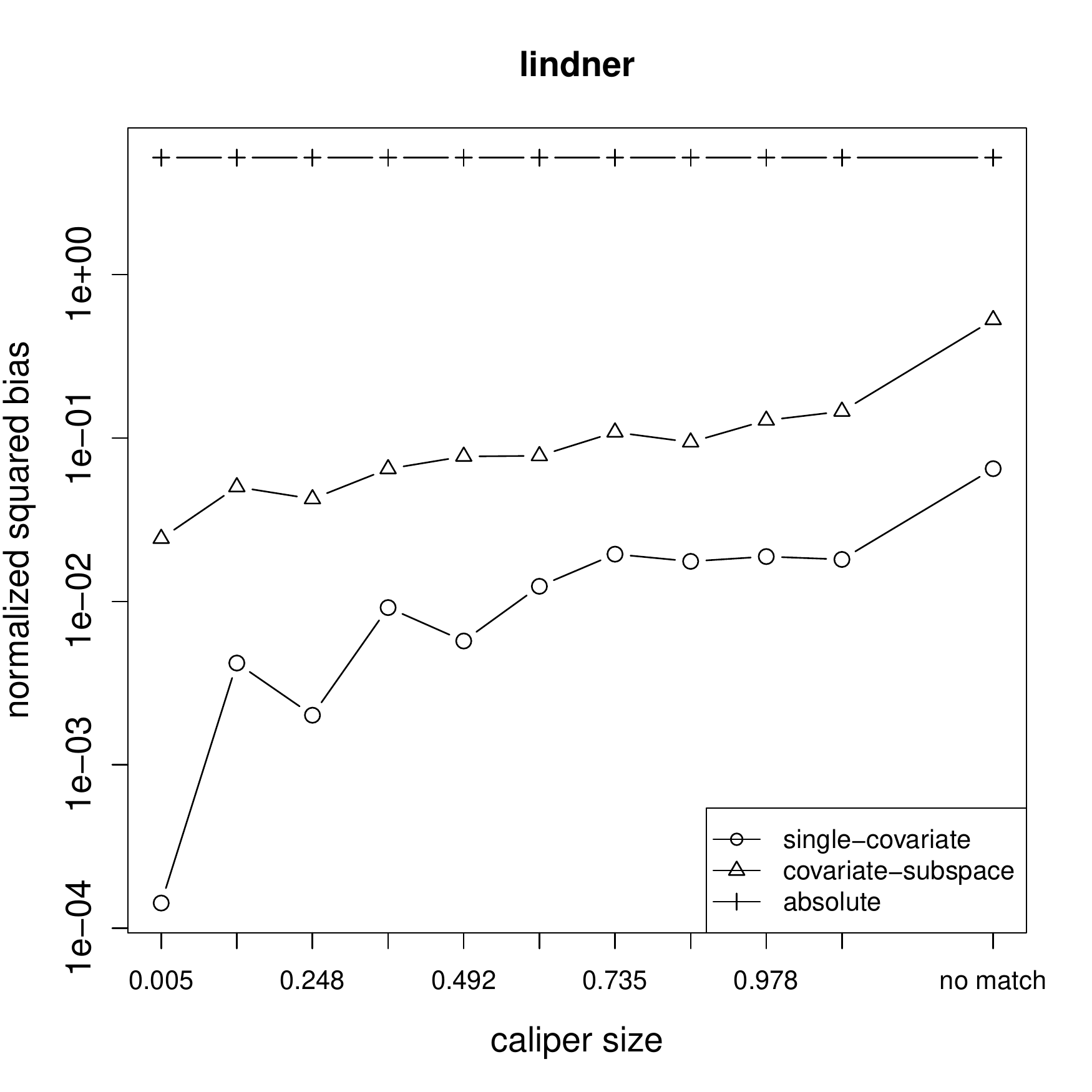}}
\end{tabular}
\caption{Impact of matching and caliper size on TE bias using single-covariate maximization, covariate-space maximization, and absolute maximization for lalonde (left) and lindner (right) data sets.}
\label{fig-compare-bias-methods}
\end{figure}

It is possible to replace maximization with another operator - such as averaging - in each of the above methods. The fact that using the maximum operator tends to exaggerate the impact of bias motivates us to present a counter-view in which we assess the damage caused by matching towards variance increase and the resulting loss of study power, assuming a total absence of bias. This is described in Section~\ref{subsection-power-analysis}.

\subsection{Combining bias and variance}\label{subsection-combine-bias-var}

In Section~\ref{subsection-quantify-bias} we developed the constrained bias estimation method for arriving at `normalized squared bias', i.e. the ratio of squared bias to omitted signal. On the other hand, Eqs.~\ref{eq-variance-standard} and \ref{eq-variance-normal} allow us to calculate `normalized variance', i.e. the ratio of TE variance to $\sigma_0^2$, in a straightforward manner. In order to combine normalized bias and variance to arrive at MSE, we must have an `exchange rate'. We propose an intuitive parameter, `omitted R-squared', which we simply define to be the ratio of omitted signal to generative noise. If we define `normalized MSE' as the ratio of MSE to $\sigma_0^2$, we have:
\begin{equation}
\mathrm{normalized\,\,MSE} = \mathrm{normalized\,\,variance} + R_o^2 \times \mathrm{normalized\,\,squared\,\,bias}
\end{equation}
Figure~\ref{fig-combine-bias-variance} shows normalized MSE as a function of matching caliper size for various values of $R_o^2$, for lalonde and lindner data sets. For small values of $R_o^2$, variance dominates bias and thus optimal caliper size (that minimized MSE) tends to be large (or we may choose no caliper, or no matching) (top row). As $R_o^2$ gets larger, bias and variance become comparable and thus the bias-removal impact of matching outweighs its variance increase; thus optimal caliper size shifts towards smaller values (second row). For even larger $R_o^2$ (third row), bias dominates variance and even smaller caliper sizes become optimal. This trend is seen more clearly in the bottom row, where optimal caliper size has been plotted as a function of $R_o^2$. These plots can be considered the most prescriptive of all our analyses so far: they suggest a particular calibration (caliper size here) assuming different values of omitted R-squared. Same type of analysis can be applied to other aspects of matching, e.g. what terms to include in matching, propensity score matching vs. Mahalanobis matching, etc.

\begin{figure}
\centering
\begin{tabular}{cc}
\subfloat{\includegraphics[height=2.in]{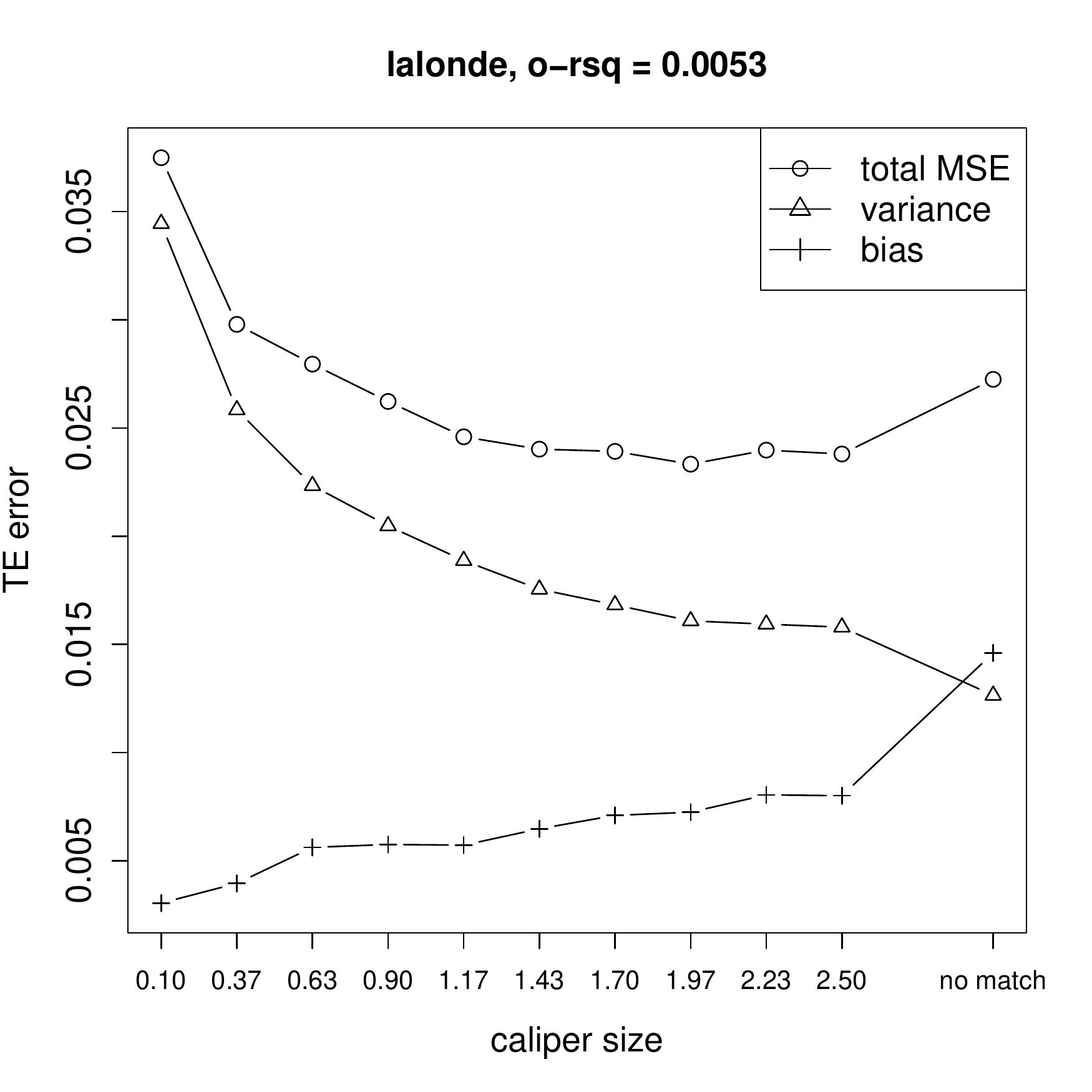}} & 
\subfloat{\includegraphics[height=2.in]{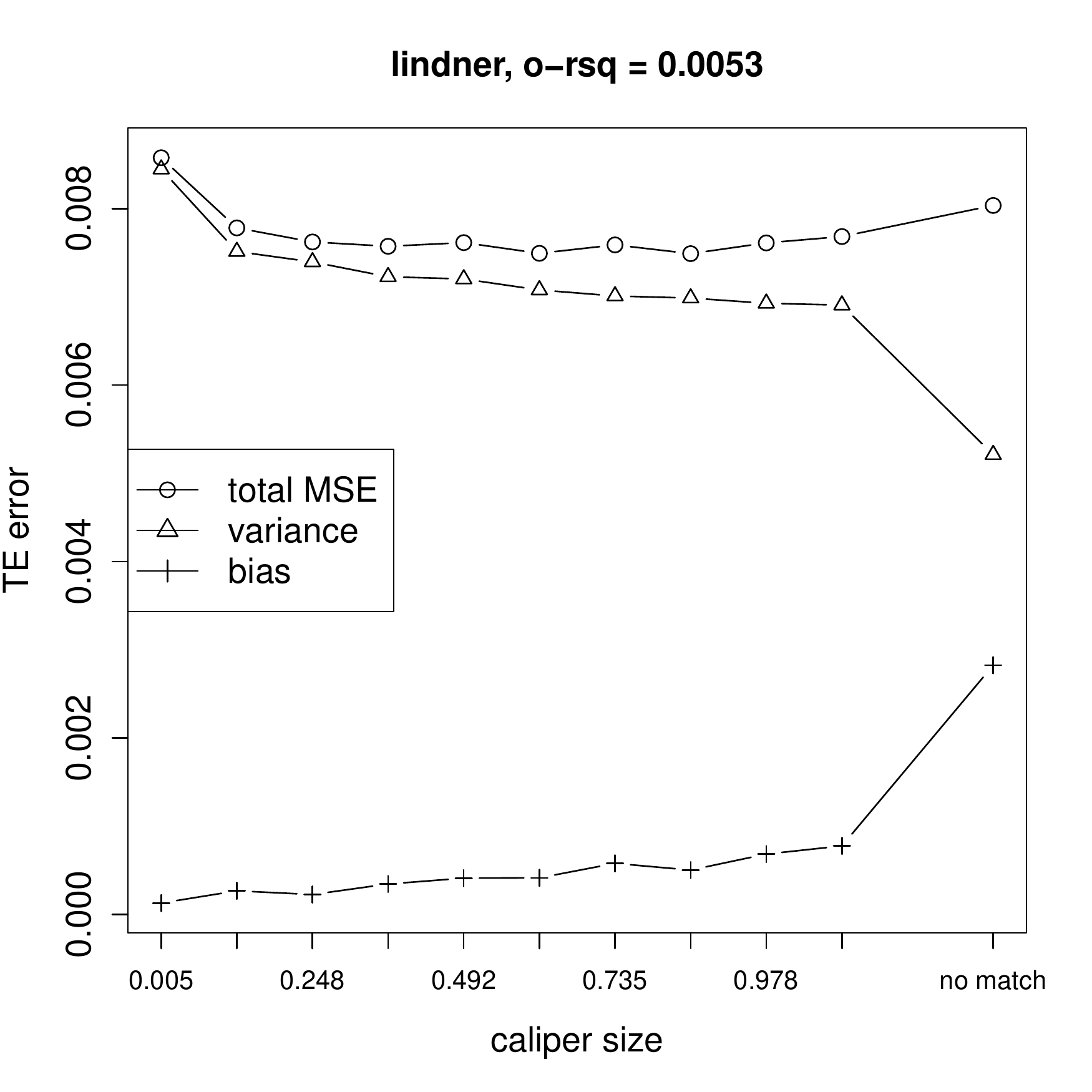}} \\
\subfloat{\includegraphics[height=2.in]{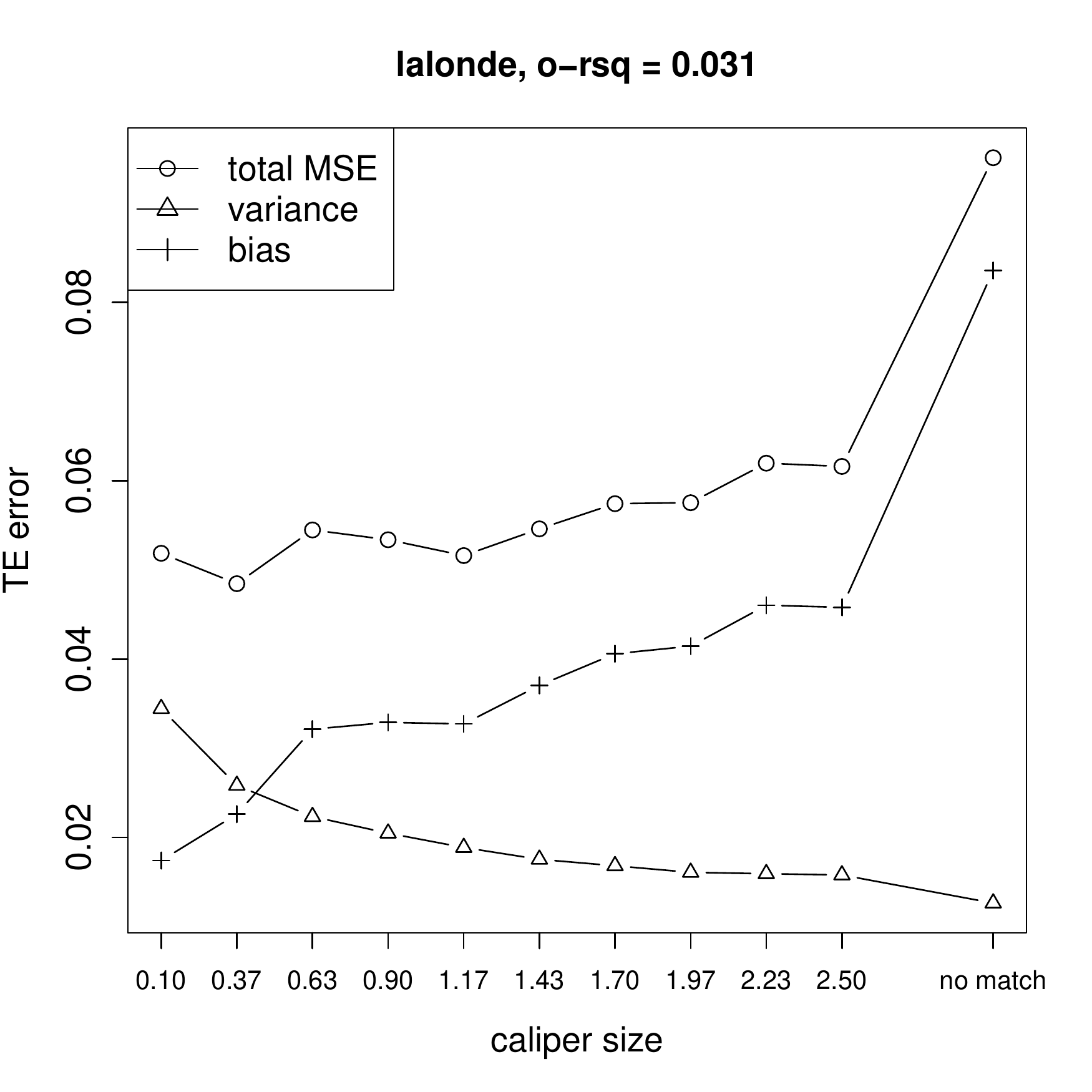}} & 
\subfloat{\includegraphics[height=2.in]{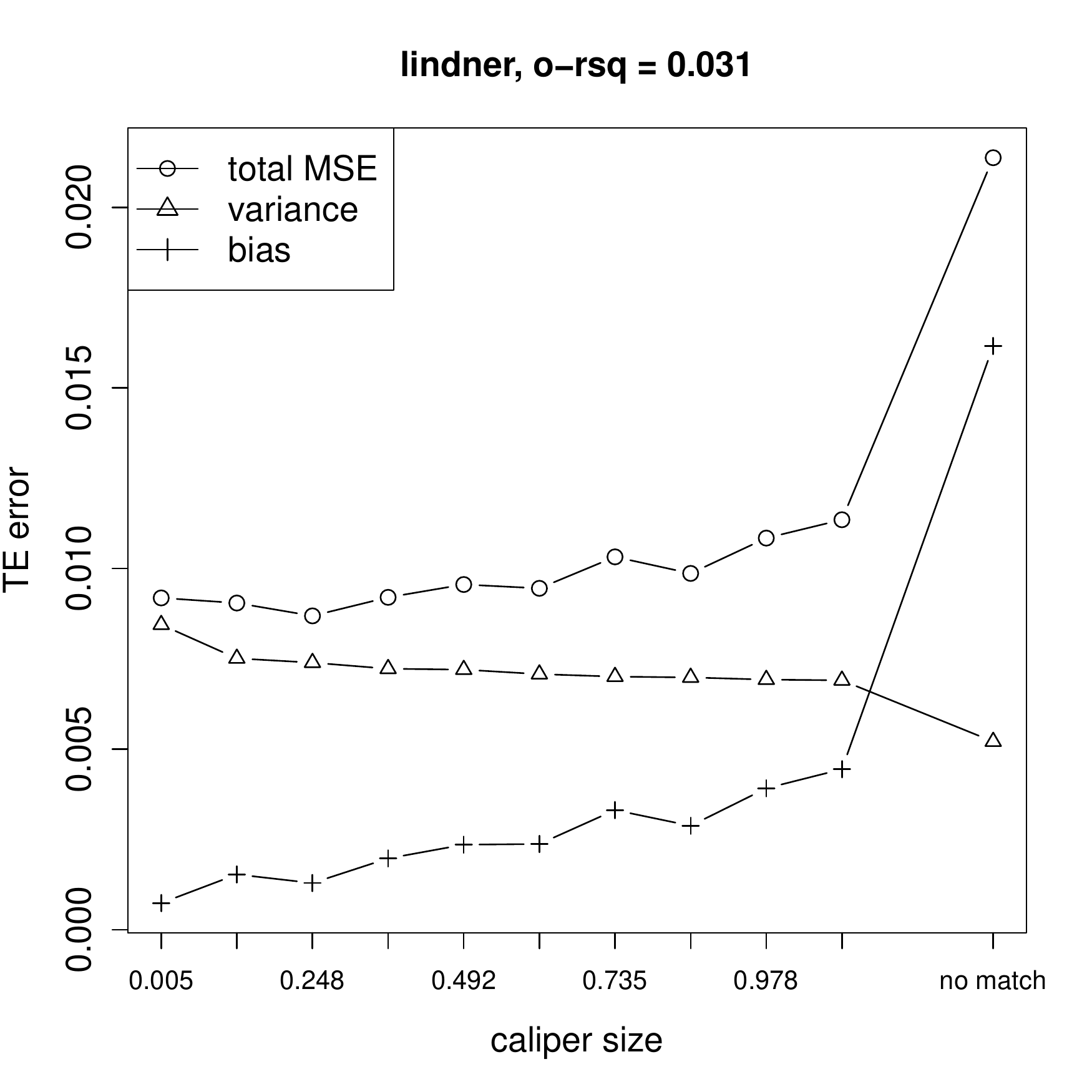}} \\
\subfloat{\includegraphics[height=2.in]{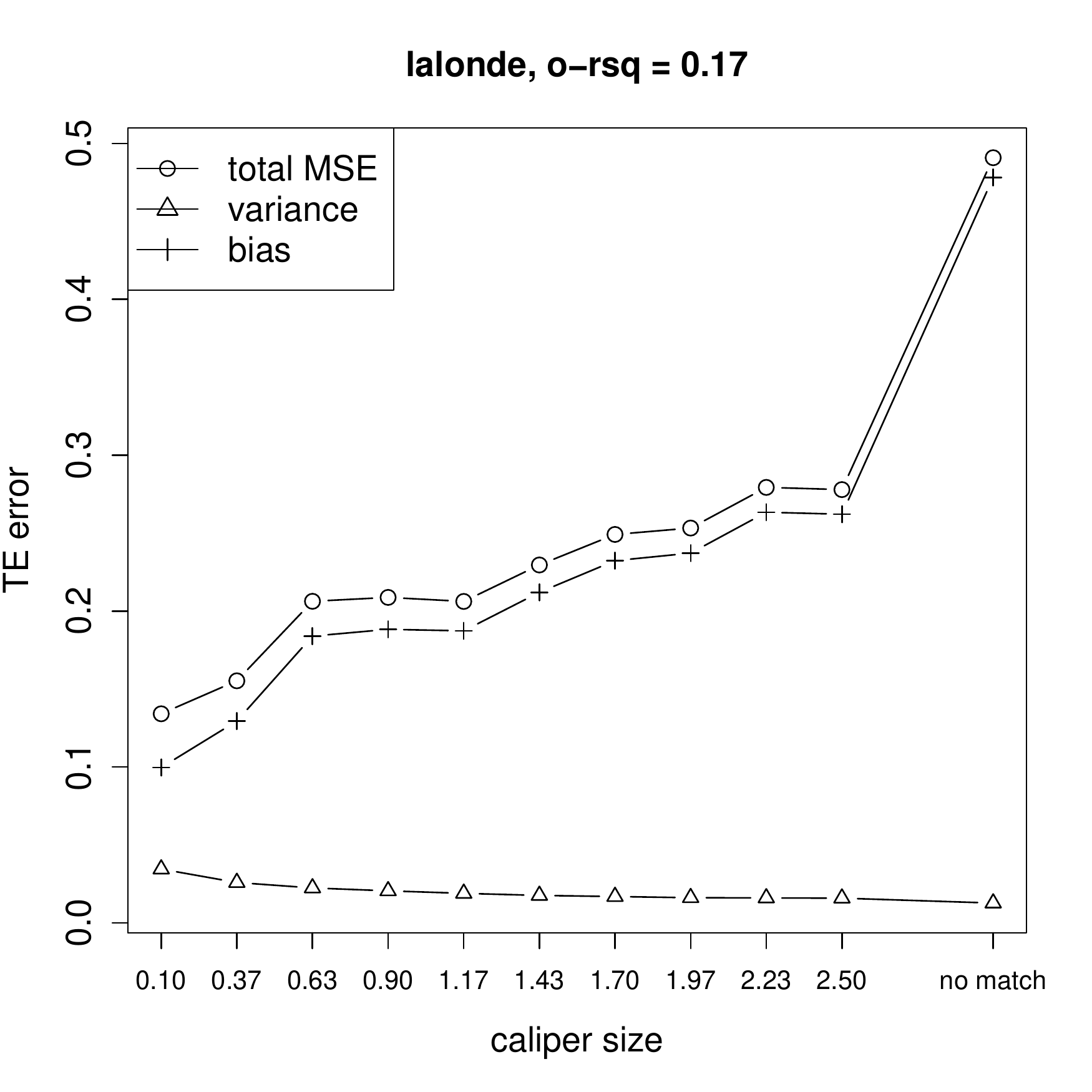}} & 
\subfloat{\includegraphics[height=2.in]{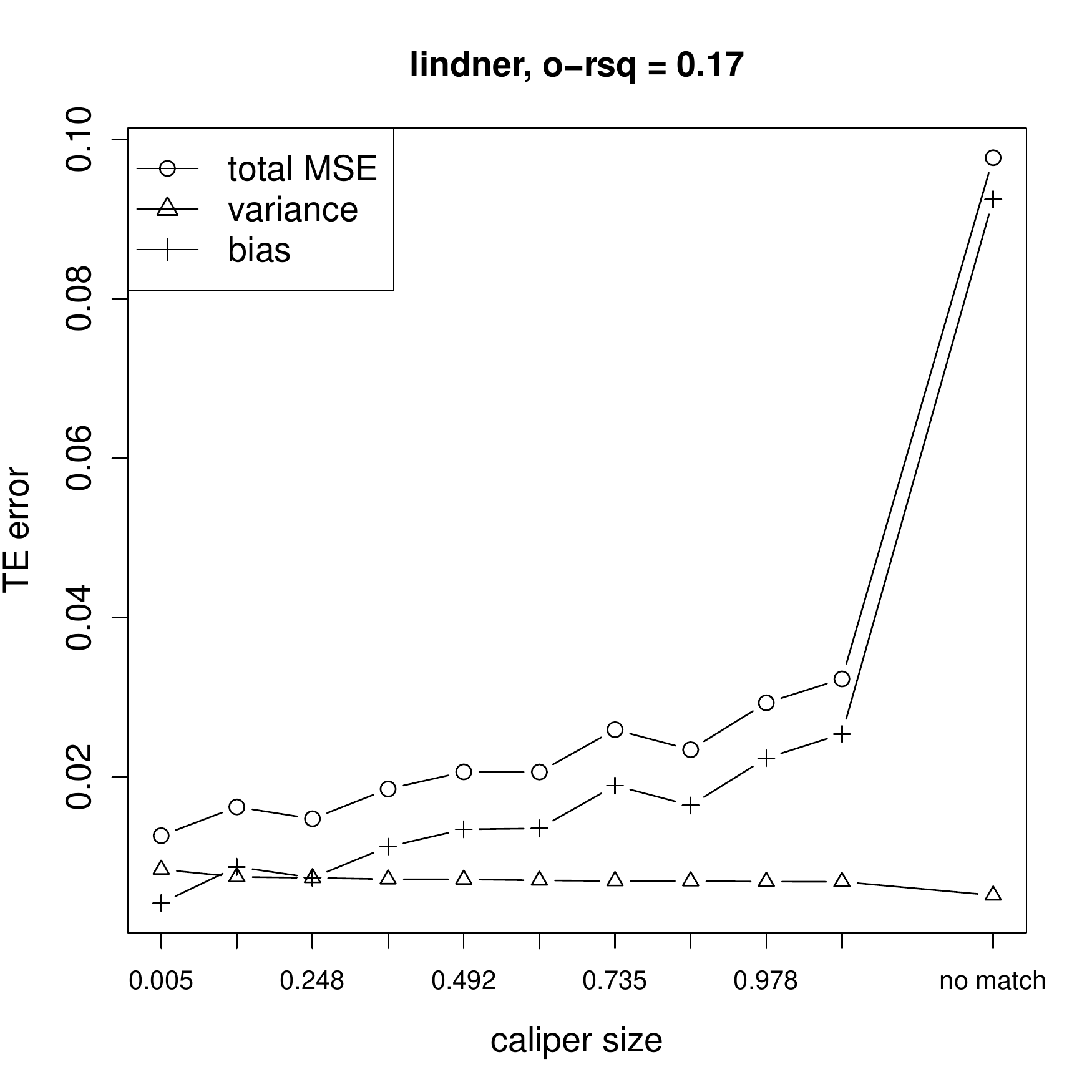}} \\
\subfloat{\includegraphics[height=2.in]{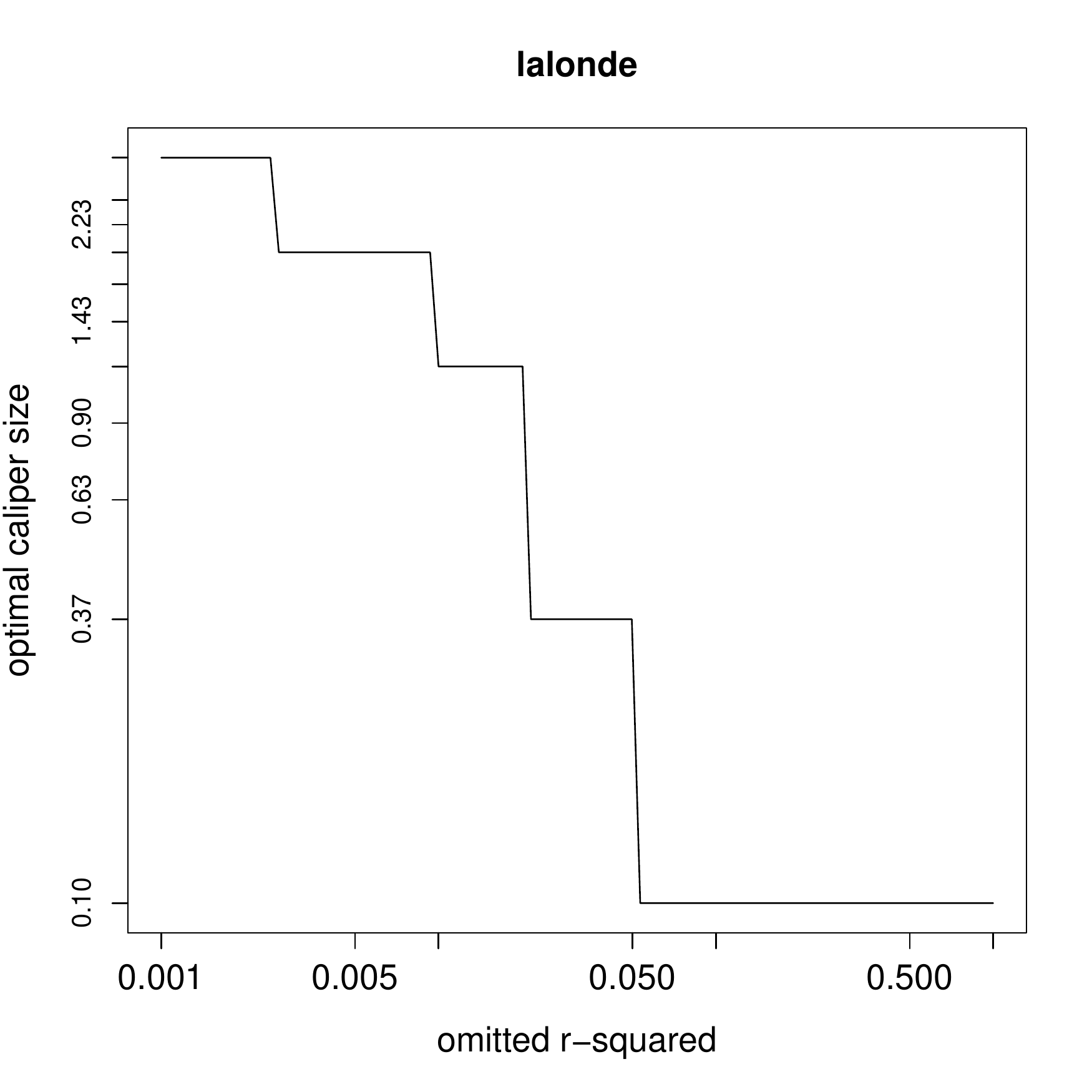}} & 
\subfloat{\includegraphics[height=2.in]{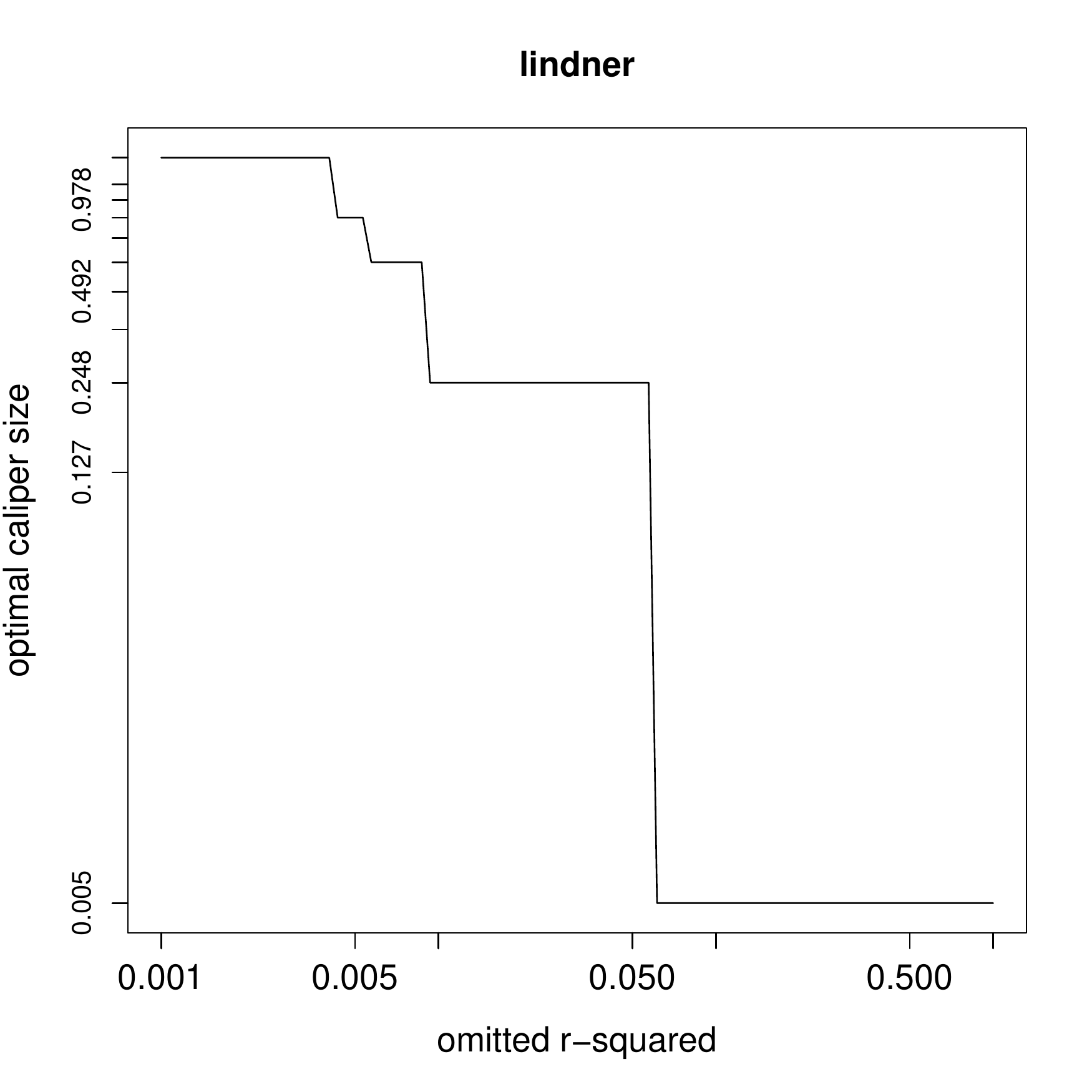}}
\end{tabular}
\caption{Matching calibration by combining bias and variance and minimizing MSE, for lalonde (left) and lindner (right) data sets. Top three rows: TE error and its components as a function of matching caliper size, for three levels of omitted R-squared. Bottom row: Optimal caliper size as a function of omitted R-squared.}
\label{fig-combine-bias-variance}
\end{figure}

How should practitioners go about selecting a value (or a range of values) for omitted R-squared? The answer is, partly domain expertise, and partly experience. In particular, if outcome data is available, practitioners can calculate R-squared for a main-effect-only regression model, and then reason about how much they expect nonlinear effects to improve upon this R-squared. For example, assume that the main-effect R-squared is 14\%, and that we believe nonlinear effects contribute about 30\% towards explaining data variance. Then we can reasonably assume that $R_o^2 = 30\% \times 14\% \sim 4\%$. Some researchers and practitioners may approve of such a limited use of outcome data.

\subsection{Power analysis}\label{subsection-power-analysis}
In covariate-subspace maximization approach to constrained bias estimation (Section~\ref{subsubsection-constrained-bias-estimation}), we erred on using high bias estimation in our bias-variance trade-off analysis by using the maximum operator on the eligible subspace. In this section, we take the opposite view by assuming no bias, and assess the negative impact of matching in terms of variance increase, or equivalently study power decrease.

Figure~\ref{fig-power-analysis} shows study power as a function of caliper size for lalonde and lindner data sets. Effect size is defined as ratio of TE to noise standard deviation, i.e. $\tau / \sigma_0$ and set to $0.3$ for both plot. In each case, we used 1000 Monte Carlo simulations to generate mean and standard deviation values for study power. For comparison, we have also plotted study power when data is subsampled randomly with the same sizes as the matched case from treatment and control groups. The significantly higher study power for matched subsamples is an illustration of the minimum-variance property of matching (Theorem~\ref{theorem-variance-no-replacement}).

\begin{figure}
\centering
\begin{tabular}{cc}
\subfloat{\includegraphics[height=3.2in]{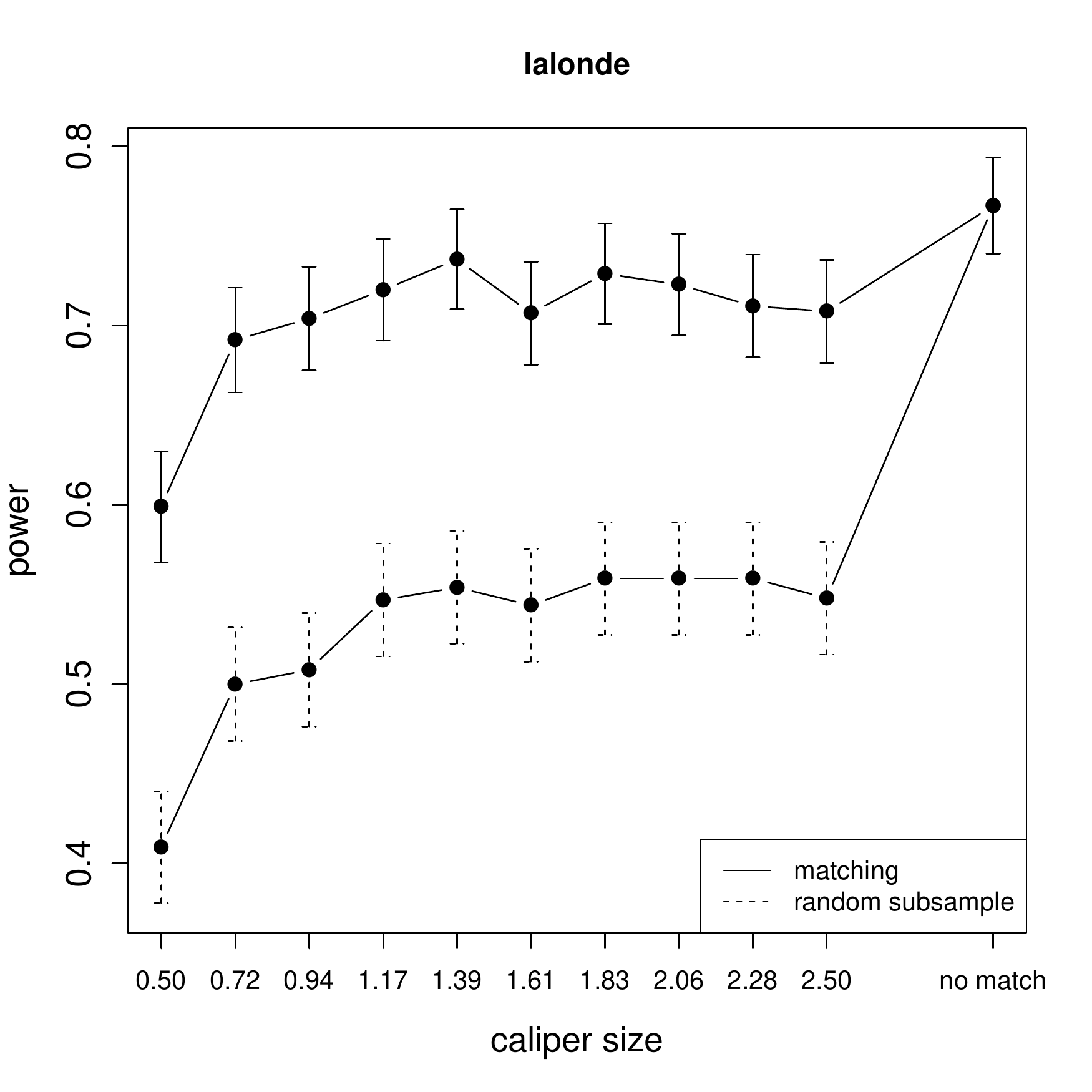}} & 
\subfloat{\includegraphics[height=3.2in]{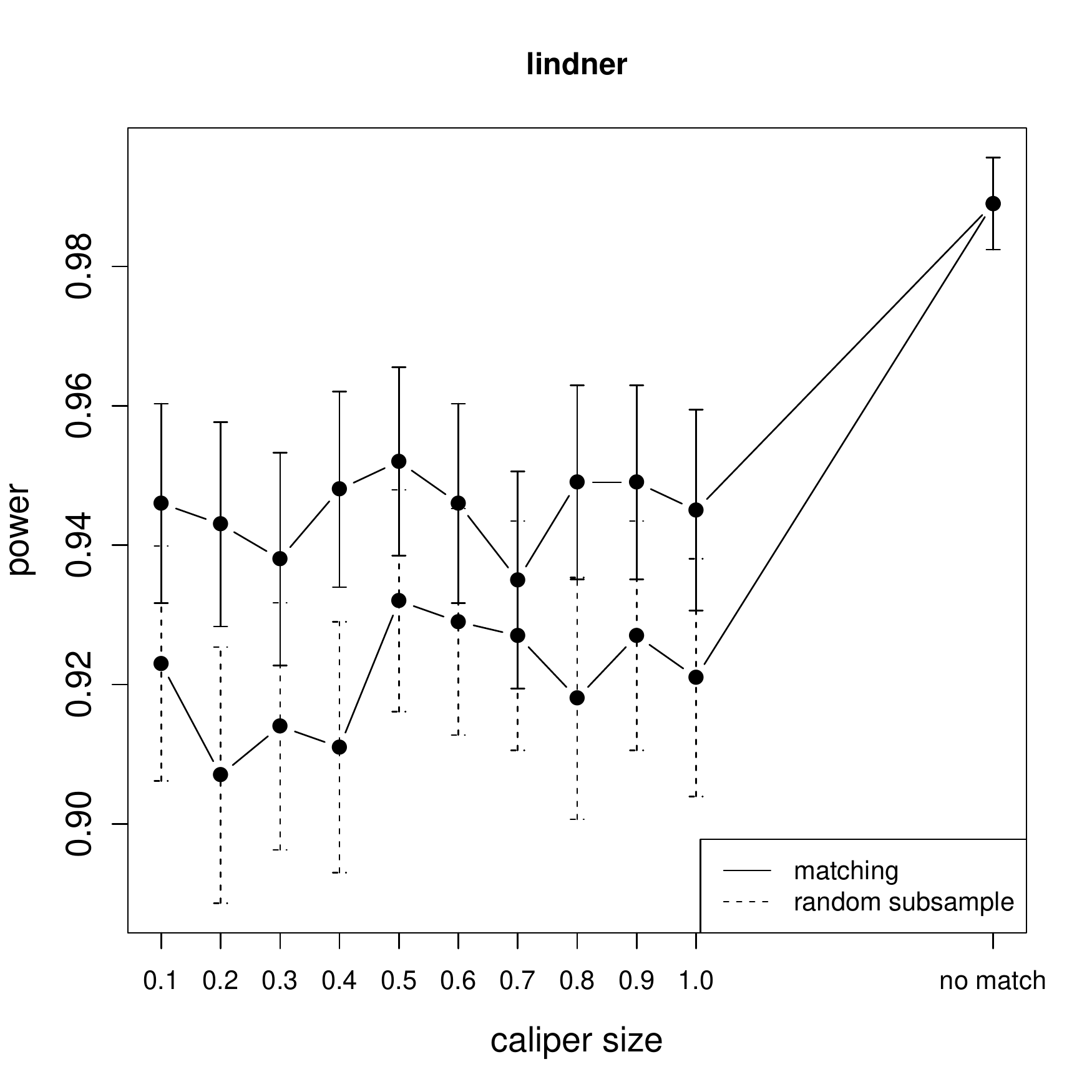}}
\end{tabular}
\caption{Impact of matching caliper size on study power for lalonde (left) and lindner (right) data sets. Effect size is defined as ratio of TE to standard deviation of noise ($\sigma$). Its value was set using regression on data sets, using main effects only. [add numbers for effect size]}
\label{fig-power-analysis}
\end{figure}

[we probably need another sentence or two summarizing what we have done so far. also, elaborate on this MC based approach being the first open-source function specialized in TE estimation and not just general regression.]

\subsection{Software: R package MatchLinReg}\label{subsection-software}
All the diagnostic and calibration tools presented in this section have been packaged into an open-source R library, \textit{MatchLinReg}~\citep{mahani2015matchlinreg}. To allow researchers to develop custom diagnostic tools based on their domain-specific needs, the core set of functions in the package have also been made public, in addition to the higher-level wrappers that implement the practical tools. Below we briefly describe the structure of the software, and refer the interested reader to the package documentation for details. Below we provide a brief description of core and wrapper functions.
\begin{enumerate}
\item Core module: Functions for calculating bias and variance, finding orthogonal and parallel projections of a vector in a subspace, and Monte-Carlo based power calculation for causal inference. Figure~\ref{fig-process-diagram} provides a summary of how MatchLinReg implements our framework for calculation of bias and variance and combining them to produce MSE for TE.
\item Diagnostic/calibration module: Functions for propensity score and Mahalanobis matching (wrapper around \textit{Matching} R package), combining bias and variance over a vector of value for omitted R-squared, and performing diagnostic and calibration analyses described in this section.
\end{enumerate}

\begin{figure}
\centering
\begin{tabular}{cc}
\subfloat{\includegraphics[height=4.5in]{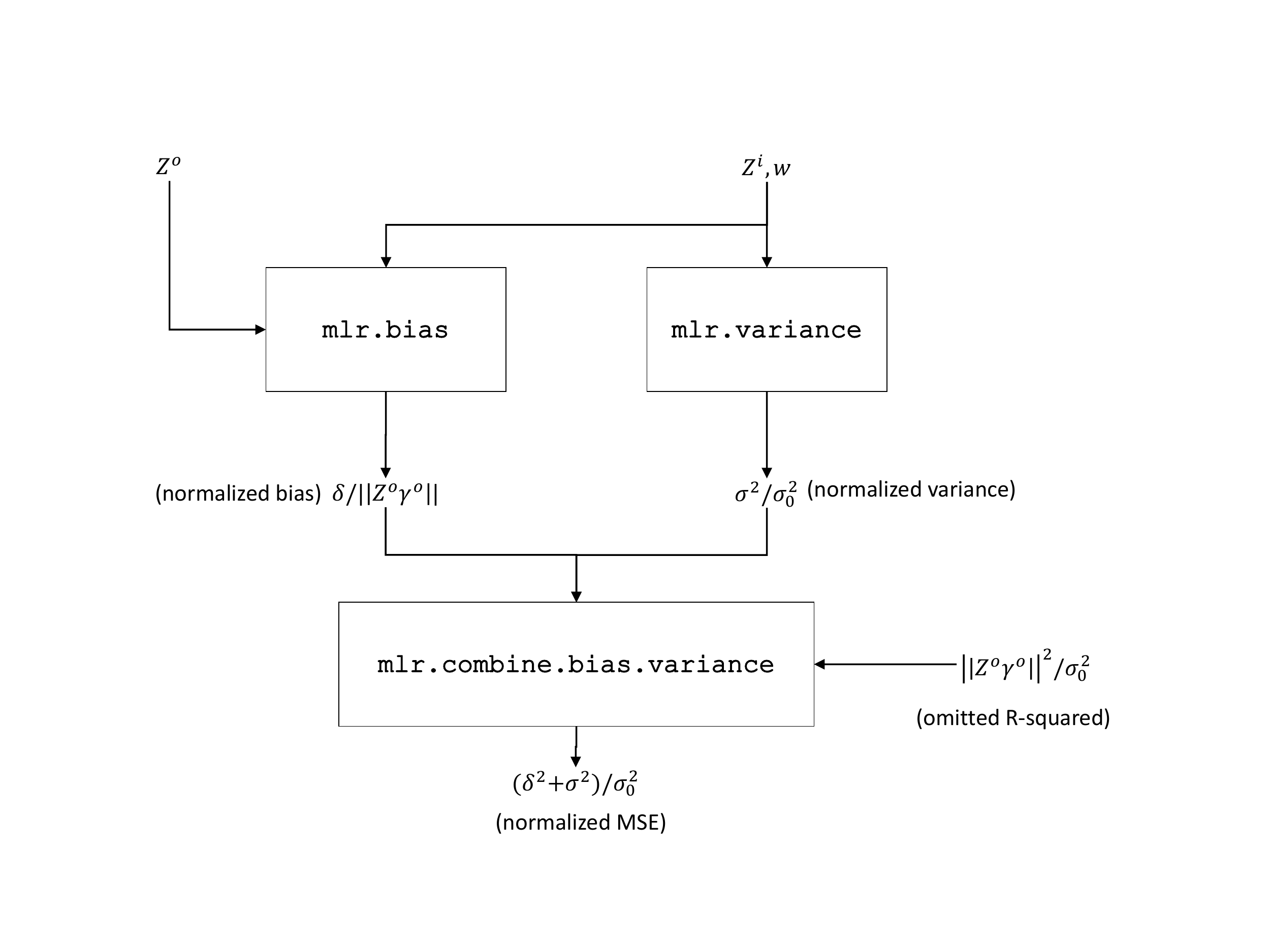}}
\end{tabular}
\caption{Process diagram for calculating total MSE for TE, using MatchLinReg R package. Boxes represent library functions. [describe symbols]}
\label{fig-process-diagram}
\end{figure}

\section{Discussion}\label{section-discussion}

\subsection{Summary}\label{subsection-summary}

In this paper, we developed a framework for quantifying the combined effect of matching - as a non-parametric pre-processing technique - and linear regression for estimating TE for causal inference. In addition to providing a theoretical basis for the impact of matching on TE bias and variance in a misspecified linear regression, the applicability of our framework is strengthened by four attributes: 1) a finite-sample focus (as opposed to large-sample or asymptotic analysis) makes our results directly applicable to real-world problems with small control and/or treatment groups, 2) quantifying not just bias but also variance of TE estimation allows us to minimize total MSE, which is the ultimate measure of estimation accuracy in a single experiment, 3) exclusion of the outcome variable from the equations ensures that the use of our diagnostic and calibration tools by the broad community of practitioners does not lead to overfit results that have questionable prognostic value [move overfitting to beginning of sentence], and 4) capturing the entire framework along with calibration and diagnostic tools in an open-source software, R package MatchLinReg, allows researchers and practitioners to utilize the currently provided set of tools for their observational research, extend and experiment with the framework, and implement new diagnostic and calibration functions that better suit their application domains.

\subsection{Future research}\label{subsection-future-research}

Some of the limitations and assumptions of the current work provide for interesting research opportunities. Below we highlight two key areas:
\begin{enumerate}
\item Generalized Linear Models (GLMs): The current framework is focused on, and takes full advantage of, linear models with a continuous (unbounded) outcome variable. This facilitated the derivation of closed-form expressions for bias and variance. Maximum-Likelihood (ML) estimation of GLM coefficients does not generally contain a closed-form solution. The nonlinear link function creates complex interactions between the coefficients of adjustment covariates (not just omitted but also included) and TE bias and variance. [add literature on matching, bias, conditional vs marginal]
\item Large regressions as alternative to matching: An alternative to matching for removing covariate omission bias is to simply include all candidate omitted terms in the regression model. Any nuisance covariates, i.e. those not present in generative model but included in the regression model, do not contribute to bias - as we can simply assume their corresponding coefficients are zero - but only increase variance. It may appear that increased variance due to inclusion of all candidate terms may overwhelm the regression, thus becoming an unrealistic candidate. However, our early simulations and analysis suggest that this may not be the case, and this topic deserves further research.
\end{enumerate}

\bibliographystyle{rss}
\bibliography{mlr}

\appendix

\section{List of mathematical symbols}\label{appendix-math-symbols}

\begin{enumerate}
\item $\X$: Design matrix, matrix of covariates, or matrix of explanatory variables (including intercept and TE). No distinction is made between included and omitted covariates.
\item $\one$: Vector of length $N$, with all elements equal to 1. This corresponds to the `intercept' covariate.
\item $\tr$: Binary vector of length $N$, indicating whether an observation belongs to treatment group (1) or control group (0); also known as `treatment indicator'.
\item $\Z$: Matrix of adjustment covariates, which is the full design matrix with $\one$ and $\tr$ columns removed: $\X = \left[ \begin{array}{ccc} \tr & \one & \Z \end{array} \right]$.
\item $\Z^i / \Z^o$: Subsets of $\Z$, consisting of adjustment covariates included in / omitted from the regression model: $\Z = \left[ \begin{array}{cc} \Z^i & \Z^o \end{array} \right]$.
\item $\X^i$: Matrix of included covariates (which always includes $\one$ and $\tr$): $\X = \left[ \begin{array}{cc} \X^i & \Z^o \end{array} \right]$.
\item $\x_{n,}$: A column vector, extracted from the $n$'th row of $\X$: $\X = \left[ \begin{array}{ccc} \x_{1,} & \hdots & \x_{N,} \end{array} \right]^t$; similarly defined for $\Z$.
\item $\x_{,k}$: A column vector, extracted from the $k$'th column of $\X$: $\X = \left[ \begin{array}{ccc} \x_{,1} & \hdots & \x_{,K} \end{array} \right]$; similarly defined for $\Z$.
\item $x_{nk}$: A scalar, $k$'th element of $\x_n$; similarly defined for $\Z$.
\item $\y$: Response (column) vector, assumed to be continuous (linear regression).
\item $N/N_t/N_c$: Number of observations in entire data set / treatment group / control group. $N$ must be equal to length ofr $\y$ and number of rows in $\X$. We also have $N = N_t + N_c$.
\item $\beta_0$: Intercept term, or coefficient of covariate $\one$.
\item $\trn$: $n$'th element of $\tr$.
\item $\alphaa$: TE, or coefficient of covariate $\tr$.
\item $\bbeta$: Vector of all coefficients, including intercept and TE.
\item $\ggamma$: Vector of coefficients for adjustment covariates ($\Z$).
\item $\halphaa / \hbetaz / \hbeta / \hgamma$: Counterparts to $\alphaa / \beta_0 / \bbeta / \ggamma$, but represent estimated values from Ordinary Least Squares (OLS) estimator.
\item $T/C$: Set of observation indexes belonging to treatment / control groups: $T \cap C = \emptyset$ and $T \cup C = \{ 1, 2, \hdots, N\}$.
\item $\rrho / \rrho^i / \rrho^o$: Vector of correlation coefficients between all/included/omitted adjustment covariates, and treatment indicator vector: $ \rrho \equiv [ \, \corr(\tr, \z_{,k}) \, ]_{k=1,\hdots,K}$, and similarly for $\rrho^i$ and $\rrho^o$.
\item $\PPhi^i$: Correlation matrix for included adjustment covariates: $\PPhi^i \equiv [\, \corr(\z^i_{,k_1}, \z^i_{,k_2}) \,]_{k_1,k_2=1,\hdots,K}$.
\item $\PPhi^{io}$: Correlation matrix between included and omitted adjustment covariates: \\ $\PPhi^{io} \equiv [\, \corr(\z^i_{,k_1}, \z^o_{,k_2}) \, ]_{k_1=1,\hdots,K', k_2=1,\hdots,K}$.
\item $\wcm$: `weighted correlation matrix' for the included covariates.
\item $\mmu / \mmu^i / \mmu^o$: Vector of means for all/included/omitted adjustment covariates.
\item $\SSigma / \SSigma^i / \SSigma^o$: Matrix of diagonal elements of covariance matrix for all/included/omitted adjustment covariates: $\SSigma \equiv [\, \cov(\z_{,k_1}, \z_{,k_2}) \, \delta_{k_1,k_2} \,]_{k_1,k_2=1,\hdots,K'-K}$, and similarly for $\SSigma^i$ and $\SSigma^o$.
\item $\uu / \uu^i / \uu^o$: Data balance vector for all/included/omitted adjustment covariates, defined as
\begin{equation}
\uu \equiv \frac{1}{N_c} \, \Z^{t} \one - (\frac{1}{N_t} + \frac{1}{N_c}) \, \Z^{t} \, \tr,
\end{equation}
and similarly for $\uu^i$ and $\uu^o$.
\item $\eepsilon$: Vector of length $N$, representing the residual noise in generative linear model.
\item $\PPsi$: Noise covariance matrix: $\PPsi \equiv \cov(\eepsilon)$.
\item $\bias$: TE estimation bias, i.e. $\bias \equiv \E[\halpha] - \alphaa$.
\item $\variance$: TE estimation variance, i.e. $\variance \equiv \E[ (\halpha - \E[\halpha])^2 ]$.
\item $\dd / \dd^i / \dd^o$: Vector of standardized mean differences for all/included/omitted adjustment covariates.
\end{enumerate}
Notes:
\begin{enumerate}
\item For notational brevity, we use symbols $\alphaa / \beta_0 / \bbeta / \ggamma$ to represent both variables that are optimized in error function, and the true, generative model used to produce the data. The distinction should be clear from context, and is clarified in the text if not.
\item Covariates are arranged in the full design matrix in a particular order to simplify representation: $\X = \left[ \begin{array}{cccc} \tr & \one & \Z^i & \Z^o \end{array} \right]$. This does not reduce generality of results.
\item Similarly, without loss of generality, we assume that all treatment observations occupy the first $N_t$ rows of $\X$ (and other corresponding structures), and control observations occupy the last $N_c$ rows of $\X$.
\end{enumerate}

\section{List of acronyms}\label{appendix-acronyms}
\begin{enumerate}
\item ANCOVA: Analysis of Covariance
\item ATC: Average treatment effect for controls
\item ATE: Sample treatment effect
\item ATT: Average treatment effect for the treated
\item MSE: Mean Squared Error
\item OLS: Ordinary Least Squares
\item PSM: Propensity Score Matching
\item SMD: Standardized Mean Difference
\item SLM: Standard Linear Model
\item TE: Treatment Effect
\end{enumerate}

\section{Calculating the first row/column of $\xtxi$}\label{appendix-xtxi}
Assume that
\begin{equation}\label{eq-xtxi-raw}
\xtxi = \left[ \begin{array}{ccc} a & b & \vv^t \\ b & \hdots & \hdots \\ \vv & \hdots & \hdots \end{array} \right].
\end{equation}
Using Eqs.~\ref{eq-xtx}, \ref{eq-def-p} and \ref{eq-def-q}, and applying the definition of matrix inversion, we obtain:
\begin{equation}
\left[ \begin{array}{ccc} N_t & N_t & \pp^t \\ N_t & N_t + N_c & \qq^t \\ \pp & \qq & \RR \end{array} \right]
 \left[ \begin{array}{ccc} a & b & \vv^t \\ b & \hdots & \hdots \\ \vv & \hdots & \hdots \end{array} \right] = 
\left[ \begin{array}{ccc} 1 & \hdots & \hdots \\ 0 & \hdots & \hdots \\ \zero & \hdots & \hdots \end{array} \right],
\end{equation}
where we have defined:
\begin{eqnarray}
\RR &\equiv& \xtxc.
\end{eqnarray}
Next, we expand the expressions producing the first column of the right-hand side (identity matrix) to get:
\begin{equation}\label{eq-xtxi-system}
\left\{
\begin{array}{l} N_t \, a + N_t \, b  +\pp^t \, \vv = 1, \\ N_t \, a + (N_t + N_c) \, b + \qq^t \, \vv = 0, \\ \pp \, a + \qq \, b + \RR \, \vv = \zero. \end{array}
\right.
\end{equation}
Solving for $a$ and $b$ (in terms of $\vv$) between the top 2 sub-equations in \ref{eq-xtxi-system} leads to:
\begin{eqnarray}
a &=& \frac{1}{N_t} + \frac{1}{N_c} + \frac{1}{N_t N_c} \left( N_t \, \qq - (N_t + N_c) \, \pp \right)^t \vv, \label{eq-a-incomplete} \\
b &=& -\frac{1}{N_c} + \frac{1}{N_c} \left( \pp - \qq \right)^t \vv.
\end{eqnarray}
Substituting the above back into the last sub-equation of \ref{eq-xtxi-system}, we get:
\begin{equation}
\vv = \frac{1}{N_c \, N_t} \A^{-1} \left( N_t \, \qq - (N_t + N_c) \, \pp \right),
\end{equation}
where
\begin{equation}\label{eq-def-A-original}
\A \equiv \xtxc + \frac{1}{N_c} \pp \, \qq^t - (\frac{1}{N_t} + \frac{1}{N_c}) \, \pp \, \pp^t  + \frac{1}{N_c} \, \qq \, \pp^t  - \frac{1}{N_c} \, \qq \, \qq^t.
\end{equation}
Substituting the above solution for $\vv$ back into Equation~\ref{eq-a-incomplete}, we obtain the following expression:
\begin{align}
a &= \frac{1}{N_t} + \frac{1}{N_c} + \uu^{i,t} \A^{-1} \uu^i, \label{eq-a-complete} \\
b &= -\frac{1}{N_c} + \frac{1}{N_c} \, (\pp - \qq)^t \A^{-1} \, \uu^i, \\
\vv &= \A^{-1} \, \uu^i,
\end{align}
where $\uu^i$ is defined in Eq.~\ref{eq-def-ui}. Putting it all together, we get the following:
\begin{align}
\xtxi = \left( \begin{array}{ccc} \frac{1}{N_t} + \frac{1}{N_c} + \uu^{i,t} \A^{-1} \uu^i & -\frac{1}{N_c} + \frac{1}{N_c} \, (\pp - \qq)^t \A^{-1} \, \uu^i & \uu^{i,t} \A^{-1} \\ -\frac{1}{N_c} + \frac{1}{N_c} \, (\pp - \qq)^t \A^{-1} \, \uu^i & \hdots & \hdots \\  \A^{-1} \, \uu^i & \hdots & \hdots \end{array} \right),
\end{align}
where we have taken advantage of $\A$ being symmetric. Arriving at Eq.~\ref{eq-def-A} involves some routine algebraic manipulation of Eq.~\ref{eq-def-A-original} while taking advantage of the definitions of within-group covariance matrices in Eqs.~\ref{eq-cov-T} and \ref{eq-cov-C}.

Note that, when included covariates are balanced ($\uu^i = \zero$), then the above expression becomes:
\begin{align}
\xtxi = \left( \begin{array}{ccc} \frac{1}{N_t} + \frac{1}{N_c} & -\frac{1}{N_c} & \zero \\ -\frac{1}{N_c} & \hdots & \hdots \\  \zero & \hdots & \hdots \end{array} \right). \label{eq-xtxi-balanced}
\end{align}

\section{Normalized bias and variance equations}\label{appendix-normalized-eqs}
We seek to derive the normalized expressions for bias and variance (Eqs.~\ref{eq-bias-normal} and \ref{eq-variance-normal}) from the standard forms (Eqs.~\ref{eq-bias-standard} and \ref{eq-variance-standard}). To transform the variance equation, we first prove Eq.~\ref{eq-rrho-u}, showing that the the vector of correlations between treatment indicator ($\tr$) and included adjustment covariates ($\Z^i$) is proportional to the vector of mean differences for $\Z^i$.

Focusing on a single covariate vector ($\z$) of length $N$, its correlation ($\rho$) with treatment indicator ($\tr$) is given by:
\begin{align}
\rho &= \frac{1}{(N-1) \, \sigma_z \, \sigma_w} \sum_{n=1}^N (z_n - <\z>)(w_n - <\tr>), \\
&= \frac{1}{(N-1) \, \sigma_z \, \sigma_w} \{ N <\tr^t \z> - N <\z> <\tr> \}. \label{eq-def-rho}
\end{align}
Taking advantage of the definition of $\tr$, it is easy to verify that
\begin{align}
<\tr^t \z> &= \frac{N_t}{N} <\z>_T, \label{eq-mean-wz} \\
<\tr> &= \frac{N_t}{N}, \label{eq-mean-w} \\
\sigma_w^2 &= \frac{N_t N_c}{N (N - 1)}. \label{eq-sig-w}
\end{align}
Combining Eqs.~\ref{eq-def-rho}, \ref{eq-mean-wz}, \ref{eq-mean-w} and \ref{eq-sig-w}, we obtain:
\begin{align}
\rho &= \frac{N^{1/2} (N - 1)^{1/2}}{(N - 1) \sigma_z N_t^{1/2} N_c^{1/2}} \{ N \frac{N_t}{N} <\z>_T - N \frac{N_t}{N} <\z> \} \\
&= (\frac{N N_t}{(N - 1) N_c \sigma_z^2})^{1/2} \{ <\z>_T - <\z> \}. \label{eq-rho-semi}
\end{align}
On the other hand, from the following self-evident equality:
\begin{align}
N_C <\z>_C + N_t <\z>_T = N <\z>,
\end{align}
we conclude that
\begin{align}
<\z>_T - <\z> &= <\z>_T - \{ \frac{N_c}{N} <\z>_C + \frac{N_t}{N} <\z>_T \}, \\
&= \frac{N_c}{N} \{ <\z>_T - <\z>_C \}. \label{eq-zt-zc}
\end{align}
Combining Eqs.~\ref{eq-rho-semi} and \ref{eq-zt-zc}, we get:
\begin{align}
\rho &= (\frac{N N_t}{(N - 1) N_c \sigma_z^2})^{1/2} \frac{N_c}{N} \{ <\z>_T - <\z>_C \} \\
&= (\frac{N_t N_c}{N (N - 1)})^{1/2} \{ <\z>_T - <\z>_C \} / \sigma_z \\
\end{align}
Eq.~\ref{eq-rrho-u} is simply the vector form of the above expression.

To obtain the normalized variance equation, we invert Eq.~\ref{eq-rrho-u} to express $\uu^i$ in terms of $\rrho^i$, and substitute into Eq.~\ref{eq-variance-standard}:
\begin{align}
\sigma^2 &= \sigma_0^2 \left( \frac{1}{N_t} + \frac{1}{N_c} + \frac{N (N - 1)}{N_t N_c} {\SSigma^i}^{1/2} \rrho^{i,t} \A^{-1} {\SSigma^i}^{1/2} \rrho^i \right), \\
&= \sigma_0^2 \left( \frac{1}{N_t} + \frac{1}{N_c} + \frac{N_t + N_c}{N_t N_c} \rrho^{i,t} \{ {\SSigma^i}^{-1/2} \, \A \, {\SSigma^i}^{-1/2} \}^{-1} \rrho^i \right),
\end{align}
where, in the last step, we have used the matrix inversion lemma, $(\A \, \B)^{-1} = \B^{-1} \, \A^{-1}$. Using the definition of $\wcm$ in Eq.~\ref{eq-def-wcm}, the above expression can be easily turned into Eq.~\ref{eq-variance-normal}.

To derive the normalized bias expression of Eq.~\ref{eq-bias-normal}, we must express its constituent entities, i.e. $\pp - \qq$, $\Z^{o,t} \tr$, $\Z^{o,t} \one$ and $\Z^{o,t} \Z^i$, in normalized forms. (We already have an expression for $\uu^i$ in Eq.~\ref{eq-rrho-u}.) The reader can easily verify that:
\begin{align}
\pp - \qq &= - N_c \, \mmu^i + \left( \frac{N_t N_c (N - 1)}{N} \right)^{1/2} {\SSigma^i}^{1/2} \rrho^i \\
\Z^{o,t} \one &= N \, \mmu^o \\
\Z^{o,t} \tr &= N_t \, \left\{ \mmu^o + \left( \frac{(N - 1) N_c}{N N_t} \right)^{1/2} \, {\SSigma^o}^{1/2} \rrho^o \right\} \\
\Z^{o,t} \Z^i &= (N - 1) \, {\SSigma^o}^{1/2} \, \PPhi^{io} \, {\SSigma^i}^{1/2} + N \, \mmu^o \, \mmu^{i,t}
\end{align}
From the above, we obtain the three additive components of bias (before multiplying $\ggamma^o$):
\begin{align}
\left( \frac{1}{N_t} + \frac{1}{N_c} + \uu^{i,t} \, \A^{-1} \, \uu^i \right) \, \Z^{o,t} \tr &= N_t \, \left( \frac{1}{N_t} + \frac{1}{N_c} \right) \left( 1 + \rrho^{i,t} \wcm^{-1} \rrho^i \right) \times \nonumber \\
& \left\{ \mmu^o + \left( \frac{(N - 1) N_c}{N N_t} \right)^{1/2} \, {\SSigma^o}^{1/2} \, \rrho^o \right\} \\
\left( -\frac{1}{N_c} + \frac{1}{N_c} (\pp - \qq) \, \A^{-1} \, \uu^i \right) \, \Z^{o,t} \one &= N \Bigg\{ - \frac{1}{N_c} + \left( \frac{N}{N_t N_c (N - 1)} \right)^{1/2} \, \mmu^{i,t} \, {\SSigma^i}^{-1/2} \, \wcm^{-1} \, \rrho^i \nonumber \\
&- \frac{1}{N_c} \, \rrho^{i,t} \, \wcm^{-1} \, \rrho^i \Bigg\} \, \mmu^o \\
\Z^{o,t} \, \Z^i \, \A^{-1} \, \uu^i &= - \left( \frac{N (N - 1)}{N_t N_c} \right)^{1/2} {\SSigma^o}^{1/2} \, \PPhi^{io} \, \wcm^{-1} \, \rrho^i \nonumber \\
&- \left( \frac{N^3}{N_t N_c (N - 1)} \right)^{1/2} \mmu^o \, \mmu^{i,t} \, {\SSigma^i}^{-1/2} \, \wcm^{-1} \, \rrho^i
\end{align}
Close examination of the above three equations reveals that all the terms involving $\mmu^o$ cancel each other out, leading us to Eq.~\ref{eq-bias-normal}.

\section{Invariance of bias estimation under matching using absolute maximization}\label{appendix-absolute-matching}
Before matching, the direction of $\Z^o \ggamma^o$ with maximum absolute bias is $\g$, and TE bias is given by $\g^t (\sqrt{N} \, \g / ||\g||)$, where $\g$ is the top row of matrix $\UU$ defined as
\begin{align}
\UU \equiv \xtxi \, \X^{i,t}. \label{eq-def-U}
\end{align}
In other words, $\g^t \g$ is the top-left element of the matrix $\UU \UU^t$. After matching, we do not re-normalize $\Z^o \ggamma^o$ [explain why, here or better in main text]. Without loss of generality, assume that matching subselects a contiguous subset ($\X_1$) of $\X^i$:
\begin{align}
\X^i = \left[ \begin{array}{c} \X_1 \\ \X_2 \end{array} \right]. \label{eq-xt-split}
\end{align}
Inserting Eq.~\ref{eq-xt-split} into \ref{eq-def-U}, we obtain:
\begin{align}
\UU &= \xtxi \left[ \begin{array}{cc} \X_1^t & \X_2^t \end{array} \right].
\end{align}
Therefore,
\begin{align}
\UU^t &= \left[ \begin{array}{c} \X_1 \\ \X_2 \end{array} \right] \xtxi, \\
&= \left[ \begin{array}{c} \X_1 \, \xtxi \\ \X_2 \, \xtxi \end{array} \right] \\
&= \left[ \begin{array}{c} \UU_s^t \\ \X_2 \, \xtxi \end{array} \right], \label{eq-Ut}
\end{align}
where $\UU_s^t$ is the matched subset of $\UU^t$:
\begin{align}
\UU_s^t \equiv \X_1 \xtxi.
\end{align}
After matching, bias is given by the top-left element of $\UU_1 \, \UU_s^t$, where $\UU_1$ is given by:
\begin{align}
\UU_1 \equiv (\X_1^t \, \X_1)^{-1} \, \X_1^t \label{eq-def-U1}.
\end{align}
Combining the definitions of $\UU_s^t$ and $\UU_1$ we get:
\begin{align}
\UU_1 \UU_s^t &= (\X_1^t \, \X_1)^{-1} \X_1^t \, \X_1 \xtxi, \\
&= \xtxi.
\end{align}
We see that $\UU_1 \UU_s^t$ is independent of matched subset, thus remaining constant as long as $\X_1^t \, \X_1$ is full rank. We also observe that maximum normalized squared bias can be written as:
\begin{align}
{\bias^n_{max}}^2 = N \, \frac{\sigma^2}{\sigma_0^2}.
\end{align}

\section{Matching with replacement and TE variance}\label{appendix-var-repeat}
Her we prove that, in a balanced data set, TE variance is larger than the lower-bound established in Theorem~\ref{theorem-variance-no-replacement} if some observations are repeated. In sampling (matching) with replacement, the covariance matrix $\cov[\y]$ is not diagonal, and the expression for covariance matrix of coefficients becomes more involved:
\begin{equation}
\cov[\hbeta] = \cov \left[ (\X^{i,t} \X^i)^{-1} \X^{i,t} \y \right] = \DD \, \PPsi \, \DD^t, \label{eq-cov-long}
\end{equation}
with
\begin{align}
\DD &\equiv (\X^{i,t} \X^i)^{-1} \X^{i,t} \label{eq-def-D}, \\
\PPsi &\equiv \cov[\y].
\end{align}
To obtain TE variance ($\variance$), we need the top-left element of $\cov[\hbeta]$:
\begin{align}
\variance = \cov[\hbeta]_{1,1} = (\DD \, \Sigma \, \DD^t)_{1,1} &= \sum_{n=1}^{N} \DD_{1,n} \, (\Sigma \, \DD^t)_{n,1} \nonumber \\
&= \sum_{n=1}^N \sum_{n'=1}^N \DD_{1,n} \Sigma_{n,n'} \DD_{1,n'} \label{eq-cov-double-sum}
\end{align}
The above equation indicates that we only need to focus on the first row of $\DD$ to calculate $\variance$. From the definition of $\DD$ in Equation~\ref{eq-def-D} and $\X^i$ in Equation~\ref{eq-def-Xi}, and using the special form of $\xtxi$ in Equation~\ref{eq-xtxi-balanced}, we conclude that
\begin{align}
\DD = \left( \begin{array}{ccc} 1/N_t + 1/N_c & -1/N_c & \zero \\ -1/N_c & \hdots & \hdots \\ \zero & \hdots & \hdots \end{array} \right) \left( \begin{array}{c} \tr^t \\ \one^t \\ \Z_i^t \end{array} \right) = \left( \begin{array}{c} (1/N_t + 1/N_c) \tr^t - \one^t / N_c \\ \hdots \\ \hdots \end{array} \right)
\end{align}
Examining the above form reveals the following structure for the first row of $\DD$:
\begin{align}
\DD_{1,n} = \left\{ \begin{array}{l} 1/N_t, \mathrm{\,\,if\,\,} n \in T, \\ -1/N_c, \mathrm{\,\,if\,\,} n \in C. \end{array} \right.
\end{align}
Therefore,
\begin{align}
\DD_{1,n} \DD_{1,n'} = \left\{ \begin{array}{l} 1/N_t^2, \mathrm{\,\,if\,\,} n,n' \in T, \\ 1/N_c^2, \mathrm{\,\,if\,\,} n,n' \in C, \\ -1/(N_t \, N_c), \mathrm{\,\,if\,\,} n \in T, n' \in C. \end{array} \right.
\end{align}
Combining the above with Equation~\ref{eq-cov-double-sum}, we obtain:
\begin{align}
\variance = \sum_{n} \DD_{1,n}^2 \Psi_{n,n} + \mathop{\sum\sum}_{n, n' \neq n} \DD_{1,n} \DD_{1,n'} \Psi_{n,n'}
\end{align}
To simplify the above sum, we recall the structure of $\Psi$: its diagonal elements are all equal to $\sigma_0^2$, and its off-diagonal elements are 0, if observations $n$ and $n'$ are not samples from the same original data point, and $\sigma_0^2$ otherwise. We therefore conclude:
\begin{align}
\variance = \sigma_0^2 \left( \sum_{n \in T} \DD_{1,n}^2 + \sum_{n \in C} \DD_{1,n}^2 \right) &+ \mathop{\sum\sum}_{n \in T, n' \in T, n' \neq n} \DD_{1,n} \DD_{1,n'} \Psi_{n,n'} \nonumber \\ 
&+ \mathop{\sum\sum}_{n \in C, n' \in C, n' \neq n} \DD_{1,n} \DD_{1,n'} \Psi_{n,n'} \nonumber
\end{align}
Defining the indicator function $r[.]$ such that:
\begin{align}
r[n] = m \Longleftrightarrow \text{data point \textit{n} is a sample from the original data point \textit{m}},
\end{align}
allows us to characterize the entries of $\Sigma$ as follows:
\begin{align}
\Psi_{n,n'} = \left\{ \begin{array}{l} \sigma^2, \text{if} \,\, r[n] = r[n'],\\ 0, \text{otherwise}. \end{array} \right.
\end{align}
Putting it all together, we get the following expression for TE variance under sampling with replacement:
\begin{align}
\variance = \sigma_0^2 (\frac{1}{N_t} + \frac{1}{N_c}) &+ \frac{\sigma_0^2}{N_t^2}  \mathop{\sum\sum}_{n \in T, n' \in T, n' \neq n} I[r[n] == r[n']] \nonumber \\
&+ \frac{\sigma_0^2}{N_c^2} \mathop{\sum\sum}_{n \in C, n' \in C, n' \neq n} I[r[n] == r[n']] \\
&\geq \sigma_0^2 (\frac{1}{N_t} + \frac{1}{N_c}).
\end{align}
This proves that the TE variance lower bound applies to matching with replacement as well. ($I[.]$ is the indicator function which returns 1 if argument is true, and 0 otherwise.)

\end{document}